\providecommand{\U}[1]{\protect\rule{.1in}{.1in}}
\newtheorem{theorem}{Theorem}[section]
\newtheorem{lemma}[theorem]{Lemma}
\theoremstyle{definition}
\newtheorem{example}[theorem]{Example}
\theoremstyle{remark}
\newtheorem{remark}[theorem]{Remark}
\numberwithin{equation}{section}
\theoremstyle{plain}
\newtheorem{proposition}{Proposition}
\newtheorem{assumption}{Assumption}
\let\pdfoutput=\undefined\fi
\begin{document}
\title[A change of measure in the BNS model for commodities]{A change of measure preserving the affine structure in the BNS model for
commodity markets}
\date{\today}
\author[Benth]{Fred Espen Benth}
\address[Fred Espen Benth]{\\
Centre of Mathematics for Applications \\
University of Oslo\\
P.O. Box 1053, Blindern\\
N--0316 Oslo, Norway}
\email[]{fredb\@@math.uio.no}
\urladdr{http://folk.uio.no/fredb/}
\author[Ortiz-Latorre]{Salvador Ortiz-Latorre}
\address[Salvador Ortiz-Latorre]{\\
Centre of Mathematics for Applications \\
University of Oslo\\
P.O. Box 1053, Blindern\\
N--0316 Oslo, Norway}
\email[]{salvador.ortiz-latorre\@@cma.uio.no}
\thanks{We are grateful for the financial support from the project "Energy Markets:
Modeling, Optimization and Simulation (EMMOS)", funded by the Norwegian
Research Council under grant Evita/205328.}
\maketitle

\begin{abstract}
For a commodity spot price dynamics given by an Ornstein-Uhlenbeck process
with Barndorff-Nielsen and Shephard stochastic volatility, we price forwards
using a class of pricing measures that simultaneously allow for change of
level and speed in the mean reversion of both the price and the volatility.
The risk premium is derived in the case of arithmetic and geometric spot price
processes, and it is demonstrated that we can provide flexible shapes that is
typically observed in energy markets. In particular, our pricing measure
preserves the affine model structure and decomposes into a price and
volatility risk premium, and in the geometric spot price model we need to
resort to a detailed analysis of a system of Riccati equations, for which we
show existence and uniqueness of solution and asymptotic properties that
explains the possible risk premium profiles. Among the typical shapes, the
risk premium allows for a stochastic change of sign, and can attain positive
values in the short end of the forward market and negative in the long end.

\end{abstract}

\section{Introduction}

Benth and Ortiz-Latorre~\cite{BeO-L13} analysed a structure preserving class
of pricing measures for Ornstein-Uhlenbeck (OU) processes with applications to
forward pricing in commodity markets. In particular, they considered
multi-factor OU models driven by L\'evy processes having positive jumps
(so-called subordinators) or Brownian motions for the spot price dynamics and
analysed the risk premium when the level and speed of mean reversion in these
factor processes were changed.

In this paper we continue this study for OU processes driven by Brownian
motion, but with a stochastic volatility perturbing the driving noise. The
stochastic volatility process is modelled again as an OU process, but driven
by a subordinator. This class of stochastic volatility models were first
introduced by Barndorff-Nielsen and Shephard~\cite{BNS} for equity prices, and
later analysed by Benth~\cite{Be11} in commodity markets. Indeed, the present
paper is considering a class of pricing measures preserving the affine
structure of the spot price model analysed in Benth~\cite{Be11}.

Our spot price dynamics is a generalization of the Schwartz model (see
Schwartz~\cite{Schwartz}) to account for stochastic volatility. The Schwartz
model have been applied to many different commodity markets, including oil
(see Schwartz~\cite{Schwartz}), power (see Lucia and Schwartz~\cite{LS}),
weather (see Benth and \v{S}altyt\.{e} Benth~\cite{BSB-book}) and freight (see
Benth, Koekebakker and Taib~\cite{BKT}). Like Lucia and Schwartz~\cite{LS}, we
analyse both geometric and arithmetic models for the spot price evolution.
There exists many extensions of the model, typically allowing for more factors
in the spot price dynamics, as well as modelling the convenience yield and
interest rates (see Eydeland and Wolyniec~\cite{EW} and Geman~\cite{Geman} for
more on such models). In Benth~\cite{Be11}, the Schwartz model with stochastic
volatility has been applied to model empirically UK gas prices. Also other
stochastic volatility models like the Heston have been suggested in the
context of commodity markets (see Eydeland and Wolyniec~\cite{EW} and
Geman~\cite{Geman} for a discussion and further references).

The class of pricing measures we study here allows for a simultaneous change
of speed and level of mean reversion for both the (logarithmic) spot price and
the stochastic volatility process. The mean reversion level can be flexibly
shifted up or down, while the speed of mean reversion can be slowed down. It
significantly extends the Esscher transform, which only allows for changes in
the level of mean reversion. Indeed, it decomposes the risk premium into a
price and volatility premium. It has been studied empirically in some
commodity markets for multi-factor models in Benth, Cartea and
Pedraz~\cite{BCP}. As we show, the class of pricing measures preserves the
affine structure of the model, but leads to a rather complex stochastic driver
for the stochastic volatility. For the arithmetic spot model we can derive
analytic forward prices and risk premium curves. On the other hand, the
geometric model is far more complex, but the affine structure can be exploited
to reduce the forward pricing to solving a system of Riccati equations by
resorting to the theory of Kallsen and Muhle-Karbe~\cite{KaMu-ka10}. The
forward price becomes a function of both the spot and the volatility, and has
a deterministic asymptotic dynamics when we are far from maturity.

By careful analysis of the associated system of Riccati equations, we can
study the implied risk premium of our class of measure change as a function of
its parameters. The risk premium is defined as the difference between the
forward price and the predicted spot price at maturity, and is a notion of
great importance in commodity markets since it measures the price for entering
a forward hedge position in the commodity (see e.g. Geman~\cite{Geman} for
more on this). In particular, under rather mild assumptions on the parameters,
we can show that the risk premium may change sign stochastically, and may be
positive for short times to maturity and negative when maturity is farther out
in time. This is a profile of the risk premium that one may expect in power
markets based on both economical and empirical findings. Geman and
Vasicek~\cite{GV} argue that retailers in the power market may induce a
hedging pressure by entering long positions in forwards to protect themselves
against sudden price increases (spikes). This may lead to positive risk
premia, whereas producers induce a negative premium in the long end of the
forward curve since they hedge by selling their production. This economic
argument for a positive premium in the short end is backed up by empirical
evidence from the German power market found in Benth, Cartea and
Kiesel~\cite{BCK}. In the geometric model, we show that the sign of the risk
premium depends explicitly on the current level of the logarithmic spot price.

We recover the Esscher transform in a special case of our pricing measure. The
Esscher transform is a popular tool for introducing a pricing measure in
commodity markets, or, equivalently, to model the risk premium. For constant
market prices of risk, which are defined as the shift in level of mean
reversion, we preserve the affine structure of the model as well as the
L\'{e}vy property of the driving noises of the two OU processes that we
consider (indeed, the spot price dynamics is driven by a Brownian motion). We
find such pricing measures in for example Lucia and Schwartz~\cite{LS}, Kolos
and Ronn~\cite{KR} and Schwartz and Smith~\cite{SS}. We refer the reader to
Benth, \v{S}altyt\.{e} Benth and Koekebakker~\cite{BeSa-BeKo08} for a thorough
discussion and references to the application of Esscher transform in power and
related markets. We note that the Esscher transform was first introduced and
applied to insurance as a tool to model the premium charged for covering a
given risk exposure and later adopted in pricing in incomplete financial
markets (see Gerber and Shiu~\cite{GS}). In many ways, in markets where the
underlying commodity is not storable (that is, cannot be traded in a
portfolio), the pricing of forwards and futures can be viewed as an exercise
in determining an insurance premium. Our more general change of measure is
still structure preserving, however, risk is priced also in the sense that one
slows down the speed of mean reversion. Such a reduction allow the random
fluctuations of the spot and the stochastic volatility last longer under the
pricing measure than under the objective probability, and thus spreads out the risk.

Although our analysis has a clear focus on the stylized facts of the risk
premium in power markets, the proposed class of pricing measures is clearly
also relevant in other commodity markets. As already mentioned, markets like
weather and freight share some similarities with power in that the underlying
"spot" is not storable. Also in more classical commodity markets like oil and
gas there are evidences of stochastic volatility and spot prices following a
mean-reversion dynamics, at least as a component of the spot. Moreover, in the
arithmetic case our analysis relates to the concept of unspanned volatility in
commodity markets, extensively studied by Trolle and Schwartz~\cite{TS}. The
forward price will not depend on the stochastic volatility factor, and hence
one cannot hedge options by forwards alone. Interestingly, the corresponding
geometric model will in fact span the stochastic volatility.

We present our results as follows. In the next Section we present the spot
model, and follow up in Section 3 with introducing our pricing measure
validating that this is indeed an equivalent probability. In Section 4 we
derive forward prices under the arithmetic spot price model, and analyse the
implied risk premium. Section 5 considers the corresponding forward prices and
the implied risk premium for the geometric spot price model. Here we exploit
the affine structure of the model to analyse the associated Riccati equation,
and provide insight into the potential risk premium profiles that our set-up
can generate. Both Section 4 and 5 have numerous empirical examples.

\section{Mathematical model}

Suppose that $(\Omega,\mathcal{F},\{\mathcal{F}_{t}\}_{t\in\lbrack0,T]},P) $
is a complete filtered probability space, where $T>0$ is a fixed finite time
horizon. On this probability space there are defined $W$, a standard Wiener
process, and $L,$ a pure jump L\'{e}vy subordinator with finite expectation,
that is a L\'{e}vy process with the following L\'{e}vy-It\^{o} representation
$L(t)=\int_{0}^{t}\int_{0}^{\infty}zN^{L}(ds,dz),t\in\lbrack0,T],$ where
$N^{L}(ds,dz)$ is a Poisson random measure with L\'{e}vy measure $\ell$
satisfying $\int_{0}^{\infty}z\ell(dz)<\infty.$ We shall suppose that $W$ and
$L$ are independent of each other.

As we are going to consider an Esscher change of measure and geometric spot
price models, we introduce the following assumption on the existence of
exponential moments of $L$.

\begin{assumption}
\label{Assumption_Exp_Theta_L}Suppose that%
\begin{equation}
\Theta_{L}\triangleq\sup\{\theta\in\mathbb{R}_{+}:\mathbb{E}[e^{\theta
L(1)}]<\infty\}, \label{Def_Theta_L}%
\end{equation}
is a constant strictly greater than one, which may be $\infty$.
\end{assumption}

Some remarks are in order.

\begin{remark}
\label{Remark_Cumulants}In $(-\infty,\Theta_{L})$ the cumulant (or log moment
generating) function $\kappa_{L}(\theta)\triangleq\log\mathbb{E}_{P}[e^{\theta
L(1)}]$ is well defined and analytic. As $0\in(-\infty,\Theta_{L})$, $L$ has
moments of all orders. Also, $\kappa_{L}(\theta)$ is convex, which yields that
$\kappa_{L}^{\prime\prime}(\theta)\geq0$ and, hence, that $\kappa_{L}^{\prime
}(\theta)$ is non decreasing. Finally, as a consequence of $L\geq0, a.s.$, we
have that $\kappa_{L}^{\prime}(\theta)$ is non negative.
\end{remark}

\begin{remark}
\label{Remark_D_Cumulants}Thanks to the L\'{e}vy-Kintchine representation of
$L$ we can express $\kappa_{L}(\theta)$ and its derivatives in terms of the
L\'{e}vy measure $\ell.$ We have that for $\theta\in(-\infty,\Theta_{L})$%
\begin{align*}
\kappa_{L}(\theta)  &  =\int_{0}^{\infty}(e^{\theta z}-1)\ell(dz)<\infty,\\
\kappa_{L}^{(n)}(\theta)  &  =\int_{0}^{\infty}z^{n}e^{\theta z}%
\ell(dz)<\infty,\quad n\in\mathbb{N},
\end{align*}
showing, in fact, that $\kappa_{L}^{(n)}(\theta)>0,n\in\mathbb{N}.$
\end{remark}

Consider the OU processes
\begin{align}
X(t)  &  =X(0)-\alpha\int_{0}^{t}X(s)ds+\int_{0}^{t}\sigma(s)dW(s)\quad
t\in\lbrack0,T],\label{Equ_OU_Brownian}\\
\sigma^{2}(t)  &  =\sigma^{2}(0)-\rho\int_{0}^{t}\sigma^{2}(s)ds+L(t),\quad
t\in\lbrack0,T], \label{Equ_OU_Levy}%
\end{align}
with $\alpha,\rho>0,X(0)\in\mathbb{R},\sigma^{2}(0)>0.$ Note that, in equation
\ref{Equ_OU_Brownian}, $X$ is written as a sum of a finite variation process
and a martingale. We may also rewrite equation \ref{Equ_OU_Levy} as a sum of a
finite variation part and pure jump martingale%
\[
\sigma^{2}(t)=\sigma^{2}(0)+\int_{0}^{t}(\kappa_{L}^{\prime}(0)-\rho\sigma
^{2}(s))ds+\int_{0}^{t}\int_{0}^{\infty}z\tilde{N}^{L}(ds,dz),\quad
t\in\lbrack0,T],
\]
where $\tilde{N}^{L}(ds,dz)\triangleq N^{L}(ds,dz)-ds\ \ell(dz)$ is the
compensated version of $N^{L}(ds,dz)$. In the notation of Shiryaev
\cite{Sh99}, page 669, the predictable characteristic triplets (with respect
to the pseudo truncation function $g(x)=x$) of $X$ and $\sigma^{2}$ are given
by
\[
(B^{X}(t),C^{X}(t),\nu^{X}(dt,dz))=(-\alpha\int_{0}^{t}X(s))ds,\int_{0}%
^{t}\sigma^{2}(s)ds,0),\quad t\in\lbrack0,T],
\]
and%
\[
(B^{\sigma^{2}}(t),C^{\sigma^{2}}(t),\nu^{\sigma^{2}}(dt,dz))=(\int_{0}%
^{t}(\kappa_{L}^{\prime}(0)-\rho\sigma^{2}(s))ds,0,\ell(dz)dt),\quad
t\in\lbrack0,T],
\]
respectively. In addition, applying It\^{o}'s Formula to $e^{\alpha t}X(t)$
and $e^{\rho t}\sigma^{2}(t),$ one can find the following explicit expressions
for $X(t)$ and $\sigma^{2}(t)$%
\begin{align}
X(t)  &  =X(s)e^{-\alpha(t-s)}+\int_{s}^{t}\sigma(u)e^{-\alpha(t-u)}%
dW(u),\label{Equ_X_Explicit_P}\\
\sigma^{2}(t)  &  =\sigma^{2}(s)e^{-\rho(t-s)}+\frac{\kappa_{L}^{\prime}%
(0)}{\rho}(1-e^{-\rho(t-s)})+\int_{s}^{t}\int_{0}^{\infty}e^{-\rho
(t-u)}z\tilde{N}^{L}(du,dz), \label{Equ_Sigma_Explicit_P}%
\end{align}
where $0\leq s\leq t\leq T.$

Using the notation in Kallsen and Muhle-Karbe \cite{KaMu-ka10}, we have that
the process $Z=(Z^{1}(t),Z_{2}(t))\triangleq(\sigma^{2}(t),X(t))$ has affine
differential characteristics given by%
\begin{align*}
\beta_{0}  &  =\left(
\begin{array}
[c]{c}%
\kappa_{L}^{\prime}(0)\\
0
\end{array}
\right)  ,\quad\gamma_{0}=\left(
\begin{array}
[c]{cc}%
0 & 0\\
0 & 0
\end{array}
\right)  ,\quad\varphi_{0}(A)=\int_{0}^{\infty}\boldsymbol{1}_{A}%
(z,0)\ell(dz),\forall A\in\mathcal{B}(\mathbb{R}^{2})\\
\beta_{1}  &  =\left(
\begin{array}
[c]{c}%
-\rho\\
0
\end{array}
\right)  ,\quad\gamma_{1}=\left(
\begin{array}
[c]{cc}%
0 & 0\\
0 & 1
\end{array}
\right)  ,\quad\varphi_{1}(A)\equiv0,\forall A\in\mathcal{B}(\mathbb{R}%
^{2}),\\
\beta_{2}  &  =\left(
\begin{array}
[c]{c}%
0\\
-\alpha
\end{array}
\right)  ,\quad\gamma_{2}=\left(
\begin{array}
[c]{cc}%
0 & 0\\
0 & 0
\end{array}
\right)  ,\quad\varphi_{2}(A)\equiv0,\forall A\in\mathcal{B}(\mathbb{R}^{2}).
\end{align*}
These characteristics are admissible and correspond to an affine process in
$\mathbb{R}_{+}\times\mathbb{R}.$

\section{The change of measure\label{SecChangeOfMeasure}}

We will consider a parametrized family of measure changes which will allow us
to simultaneously modify the speed and the level of mean reversion in
equations \eqref{Equ_OU_Brownian} and \eqref{Equ_OU_Levy}. The density
processes of these measure changes will be determined by the stochastic
exponential of certain martingales. To this end, consider the following family
of kernels%
\begin{align}
G_{\theta_{1},\beta_{1}}(t)  &  \triangleq\sigma^{-1}(t)\left(  \theta
_{1}+\alpha\beta_{1}X(t)\right)  ,\quad t\in\lbrack0,T],\label{Equ_G}\\
H_{\theta_{2},\beta_{2}}(t,z)  &  \triangleq e^{\theta_{2}z}\left(
1+\frac{\rho\beta_{2}}{\kappa_{L}^{\prime\prime}(\theta_{2})}z\sigma
^{2}(t-)\right)  ,\quad t\in\lbrack0,T],z\in\mathbb{R}. \label{Equ_H}%
\end{align}
The parameters $\bar{\beta}\triangleq(\beta_{1},\beta_{2})$ and $\bar{\theta
}\triangleq(\theta_{1},\theta_{2})$ will take values on the following sets
$\bar{\beta}\in\lbrack0,1]^{2},\bar{\theta}\in\bar{D}_{L}\triangleq
\mathbb{R}\times D_{L},$where $D_{L}\triangleq(-\infty,\Theta_{L}/2)$ and
$\Theta_{L}$ is given by equation \eqref{Def_Theta_L}. By
Assumption~\ref{Assumption_Exp_Theta_L} and Remarks \ref{Remark_Cumulants} and
\ref{Remark_D_Cumulants}, these kernels are well defined.

\begin{example}
\label{Example_Subordinators} Typical examples of $\ell,\Theta_{L}$ and
$D_{L}$ are the following:

\begin{enumerate}
\item Bounded support: $L$ has a jump of size $a,$ i.e. $\ell=\delta_{a}.$ In
this case $\Theta_{L}=\infty$ and $D_{L}=\mathbb{R}.$

\item Finite activity: $L$ is a compound Poisson process with exponential
jumps, i.e., $\ell(dz)=ce^{-\lambda z}1_{(0,\infty)}\allowbreak dz,$ for some
$c>0$ and $\lambda>0.$ In this case $\Theta_{L}=\lambda$ and $D_{L}%
=(-\infty,\lambda/2).$

\item Infinite activity: $L$ is a tempered stable subordinator, i.e.,
$\ell(dz)=cz^{-(1+\alpha)}\allowbreak e^{-\lambda z}\allowbreak1_{(0,\infty
)}dz,$ for some $c>0,\lambda>0$ and $\alpha\in\lbrack0,1).$ In this case also
$\Theta_{L}=\lambda$ and $D_{L}=(-\infty,\lambda/2).$
\end{enumerate}
\end{example}

Next, for $\bar{\beta}\in\lbrack0,1]^{2},\bar{\theta}\in\bar{D}_{L},$ define
the following family of Wiener and Poisson integrals%
\begin{align*}
\tilde{G}_{\theta_{1},\beta_{1}}(t)  &  \triangleq\int_{0}^{t}G_{\theta
_{1},\beta_{1}}(s)dW(s),\quad t\in\lbrack0,T],\\
\tilde{H}_{\theta_{2},\beta_{2}}(t)  &  \triangleq\int_{0}^{t}\int_{0}%
^{\infty}\left(  H_{\theta_{2},\beta_{2}}(s,z)-1\right)  \tilde{N}%
^{L}(ds,dz),\quad t\in\lbrack0,T],
\end{align*}
associated to the kernels $G_{\theta_{1},\beta_{1}}$ and $H_{\theta_{2}%
,\beta_{2}},$ respectively.

We propose a family of measure changes given by $Q_{\bar{\theta},\bar{\beta}%
}\sim P,\bar{\beta}\in\lbrack0,1]^{2},\bar{\theta}\in\bar{D}_{L},$ with%
\begin{equation}
\left.  \frac{dQ_{\bar{\theta},\bar{\beta}}}{dP}\right\vert _{\mathcal{F}_{t}%
}\triangleq\mathcal{E}(\tilde{G}_{\theta_{1},\beta_{1}}+\tilde{H}_{\theta
_{2},\beta_{2}})(t),\quad t\in\lbrack0,T]. \label{EquDefQ}%
\end{equation}
Let us assume for a moment that $\mathcal{E}(\tilde{G}_{\theta_{1},\beta_{1}%
}+\tilde{H}_{\theta_{2},\beta_{2}})$ is a strictly positive true martingale
(this will be proven in Theorem~\ref{thm:true_martingale} below): Then, by
Girsanov's theorem for semimartingales (Theorems 1 and 3, pages 702 and 703 in
Shiryaev \cite{Sh99}), the process $X(t)$ and $\sigma^{2}(t)$ become
\begin{align*}
X(t)  &  =X(0)+B_{Q_{\bar{\theta},\bar{\beta}}}^{X}(t)+\sigma(t)W_{Q_{\bar
{\theta},\bar{\beta}}}(t),\quad t\in\lbrack0,T],\\
\sigma^{2}(t)  &  =\sigma^{2}(0)+B_{Q_{\bar{\theta},\bar{\beta}}}^{\sigma^{2}%
}(t)+\int_{0}^{t}\int_{0}^{\infty}z\tilde{N}_{Q_{\bar{\theta},\bar{\beta}}%
}^{L}(ds,dz),\quad t\in\lbrack0,T],
\end{align*}
with%
\begin{align}
B_{Q_{\bar{\theta},\bar{\beta}}}^{X}(t)  &  =\int_{0}^{t}(\theta_{1}%
-\alpha(1-\beta_{1})X(s))ds,\quad t\in\lbrack0,T],\label{Equ_X_Drift_Q}\\
B_{Q_{\bar{\theta},\bar{\beta}}}^{\sigma^{2}}(t)  &  =\int_{0}^{t}(\kappa
_{L}^{\prime}(0)-\rho\sigma^{2}(s))ds+\int_{0}^{t}\int_{0}^{\infty}%
z(H_{\theta_{2},\beta_{2}}(s,z)-1)\ell(dz)ds\label{Equ_Sig2_Drift_Q}\\
&  =\int_{0}^{t}\{(\kappa_{L}^{\prime}(0)-\rho\sigma^{2}(s))+\int_{0}^{\infty
}z(e^{\theta_{2}z}-1)\ell(dz)\nonumber\\
&  \qquad+\frac{\rho\beta_{2}}{\kappa_{L}^{\prime\prime}(\theta_{2})}\int
_{0}^{\infty}z^{2}e^{\theta_{2}z}\ell(dz)\sigma^{2}(s-)\}ds\nonumber\\
&  =\int_{0}^{t}\left(  \kappa_{L}^{\prime}(\theta_{2})-\rho(1-\beta
_{2})\sigma^{2}(s)\right)  ds,\quad t\in\lbrack0,T],\nonumber
\end{align}
where $W_{Q_{\bar{\theta},\bar{\beta}}}$ is a $Q_{\bar{\theta},\bar{\beta}}%
$-standard Wiener process and the $Q_{\bar{\theta},\bar{\beta}}$-compensator
measure of $\sigma^{2}$ (and $L$) is
\[
v_{Q_{\bar{\theta},\bar{\beta}}}^{\sigma^{2}}(dt,dz)=v_{Q_{\bar{\theta}%
,\bar{\beta}}}^{L}(dt,dz)=H_{\theta_{2},\beta_{2}}(t,z)\ell(dz)dt.
\]
In conclusion, the semimartingale triplet for $X$ and $\sigma^{2}$ under
$Q_{\bar{\theta},\bar{\beta}}$ are given by $(B_{Q_{\bar{\theta},\bar{\beta}}%
}^{X},\int_{0}^{\cdot}\sigma^{2}(s)ds,0)$ and $(B_{Q_{\bar{\theta},\bar{\beta
}}}^{\sigma^{2}},0,v_{Q_{\bar{\theta},\bar{\beta}}}^{\sigma^{2}}),$ respectively.

\begin{remark}
\label{RemarkAffineCharactUnderQ}Under $Q_{\bar{\theta},\bar{\beta}},$ the
process $Z=(\sigma^{2}(t),X(t))$ is affine with differential characteristics
given by%
\begin{align*}
\beta_{0}  &  =\left(
\begin{array}
[c]{c}%
\kappa_{L}^{\prime}(\theta_{2})\\
\theta_{1}%
\end{array}
\right)  ,\quad\gamma_{0}=\left(
\begin{array}
[c]{cc}%
0 & 0\\
0 & 0
\end{array}
\right)  ,\quad\varphi_{0}(A)=\int_{0}^{\infty}\boldsymbol{1}_{A}%
(z,0)e^{\theta_{2}z}\ell(dz),\forall A\in\mathcal{B}(\mathbb{R}^{2}),\\
\beta_{1}  &  =\left(
\begin{array}
[c]{c}%
-\rho(1-\beta_{2})\\
0
\end{array}
\right)  ,\quad\gamma_{1}=\left(
\begin{array}
[c]{cc}%
0 & 0\\
0 & 1
\end{array}
\right)  ,\quad\varphi_{1}(A)=\int_{0}^{\infty}\boldsymbol{1}_{A}%
(z,0)\frac{\rho\beta_{2}}{\kappa_{L}^{\prime\prime}(\theta_{2})}ze^{\theta
_{2}z}\ell(dz),\forall A\in\mathcal{B}(\mathbb{R}^{2}),\\
\beta_{2}  &  =\left(
\begin{array}
[c]{c}%
0\\
-\alpha(1-\beta_{1})
\end{array}
\right)  ,\quad\gamma_{2}=\left(
\begin{array}
[c]{cc}%
0 & 0\\
0 & 0
\end{array}
\right)  ,\quad\varphi_{2}(A)\equiv0,\forall A\in\mathcal{B}(\mathbb{R}^{2}).
\end{align*}
These characteristics are admissible and correspond to an affine process in
$\mathbb{R}_{+}\times\mathbb{R}.$
\end{remark}

\begin{remark}
\label{Remark_Dynamics_Q}Under $Q_{\bar{\theta},\bar{\beta}},$ $\sigma^{2}$
still satisfies the Langevin equation with different parameters, that is, the
measure change preserves the structure of the equations for $\sigma^{2}$.
However, the process $L$ is not a L\'{e}vy process under $Q_{\bar{\theta}%
,\bar{\beta}}$, but it remains a semimartingale. The equation for $X$ is the
same under the new measure but with different parameters. Therefore, one can
use It\^{o}'s Formula again to obtain the following explicit expressions for
$X$ and $\sigma^{2}$%
\begin{align}
X(t)  &  =X(s)e^{-\alpha(1-\beta_{1})(t-s)}+\frac{\theta_{1}}{\alpha
(1-\beta_{1})}(1-e^{-\alpha(1-\beta_{1})(t-s)})\label{EquXDynam-Q}\\
&  \qquad+\int_{s}^{t}\sigma(u)e^{-\alpha(1-\beta_{1})(t-u)}dW_{Q_{\bar
{\theta},\bar{\beta}}}(u),\nonumber\\
\sigma^{2}(t)  &  =\sigma^{2}(s)e^{-\rho(1-\beta_{2})(t-s)}+\frac{\kappa
_{L}^{\prime}(\theta_{2})}{\rho(1-\beta_{2})}(1-e^{-\rho(1-\beta_{2}%
)(t-s)})\label{EquSigmaDynam-Q}\\
&  \qquad+\int_{s}^{t}\int_{0}^{\infty}e^{-\rho(1-\beta_{2})(t-u)}z\tilde
{N}_{Q_{\theta,\beta}}^{L}(du,dz),\nonumber
\end{align}
where $0\leq s\leq t\leq T.$
\end{remark}

We prove that $Q_{\bar{\theta},\bar{\beta}}$ is a true probability measure,
that is, $\mathcal{E}(\tilde{G}_{\theta_{1},\beta_{1}}+\tilde{H}_{\theta
_{2},\beta_{2}})(t)$ is a strictly positive true martingale under $P$ for
$t\leq T$. We have the following theorem.

\begin{theorem}
\label{thm:true_martingale} Let $\theta\in\bar{D}_{L},\bar{\beta}\in
\lbrack0,1]^{2}$. Then $\mathcal{E}(\tilde{G}_{\theta_{1},\beta_{1}}+\tilde
{H}_{\theta_{2},\beta_{2}})=\{\mathcal{E}(\tilde{G}_{\theta_{1},\beta_{1}%
}+\tilde{H}_{\theta_{2},\beta_{2}})(t)\}_{t\in\lbrack0,T]}$ is a strictly
positive true martingale under $P$.
\end{theorem}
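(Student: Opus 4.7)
The plan is to split the density into a continuous and a pure-jump factor, verify strict positivity pathwise, and then establish the true-martingale property by conditioning on the $\sigma$-algebra generated by $L$ and applying Novikov-type arguments to each factor.

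Since $W$ and $L$ are independent, the continuous local martingale $\tilde{G}_{\theta_1,\beta_1}$ and the purely discontinuous local martingale $\tilde{H}_{\theta_2,\beta_2}$ have vanishing quadratic covariation. Yor's formula then yields $\mathcal{E}(\tilde{G}_{\theta_1,\beta_1}+\tilde{H}_{\theta_2,\beta_2})=\mathcal{E}(\tilde{G}_{\theta_1,\beta_1})\,\mathcal{E}(\tilde{H}_{\theta_2,\beta_2})$. The continuous factor is automatically strictly positive. For the jump factor it suffices that $H_{\theta_2,\beta_2}(s,z)>0$ for all $(s,z)$; integrating $d(e^{\rho t}\sigma^{2}(t))=e^{\rho t}\,dL(t)$ gives the pathwise bound $\sigma^{2}(t)\ge \sigma^{2}(0)e^{-\rho T}>0$ on $[0,T]$, since $L$ is a subordinator, and combined with $\beta_{2}\in[0,1]$, $z>0$ and $\kappa_{L}''(\theta_{2})>0$ from Remark~\ref{Remark_D_Cumulants}, each factor of $H_{\theta_2,\beta_2}$ is strictly positive, so the product is a strictly positive local martingale.

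For the true-martingale property I condition on $\mathcal{F}^{L}_{T}:=\sigma(L(s):s\le T)$. The coefficient $\sigma$ is $\mathcal{F}^{L}$-measurable, so $\mathcal{E}(\tilde{H}_{\theta_2,\beta_2})$ is $\mathcal{F}^{L}_{T}$-measurable, and given $\mathcal{F}^{L}_{T}$ equation \eqref{Equ_OU_Brownian} is a linear Gaussian SDE in $W$ with deterministic, pathwise bounded coefficient $\sigma$. Using the lower bound on $\sigma^{2}$ I estimate
\[
G_{\theta_1,\beta_1}^{2}(s) \le 2\sigma^{-2}(0)e^{\rho T}\bigl(\theta_{1}^{2}+\alpha^{2}\beta_{1}^{2}\,X^{2}(s)\bigr),
\]
which is a quadratic functional of a conditionally Gaussian process. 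Classical exponential-moment estimates for such functionals on a finite horizon then yield the conditional Novikov condition
\[
E_{P}\!\left[\exp\!\left(\tfrac{1}{2}\int_{0}^{T} G_{\theta_1,\beta_1}^{2}(s)\,ds\right)\,\Big|\,\mathcal{F}^{L}_{T}\right]<\infty\quad\text{a.s.,}
\]
from which $E_{P}[\mathcal{E}(\tilde{G}_{\theta_1,\beta_1})(t)\mid \mathcal{F}^{L}_{T}]=1$ for every $t\in[0,T]$.

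It remains to prove $E_{P}[\mathcal{E}(\tilde{H}_{\theta_2,\beta_2})(t)]=1$; the tower property and the factorization will then close the argument. I expect this to be the main obstacle, because the kernel $H_{\theta_2,\beta_2}(s,z)-1$ is linear in the unbounded process $\sigma^{2}(s-)$, so no direct Novikov check is available. My plan is to localize with $\tau_{n}:=\inf\{t\in[0,T]:\sigma^{2}(t)>n\}\wedge T$: on $[0,\tau_{n}]$ the jump compensator is bounded, so $\mathcal{E}(\tilde{H}_{\theta_2,\beta_2})(\cdot\wedge\tau_{n})$ is a true martingale by standard exponential-martingale criteria. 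Exploiting $\theta_{2}<\Theta_{L}/2$ and the resulting finiteness of $\int_{0}^{\infty}z^{2}e^{\theta_{2}z}\ell(dz)=\kappa_{L}''(\theta_{2})$ from Remark~\ref{Remark_D_Cumulants}, an explicit moment-generating-function computation for the one-dimensional affine subordinator-driven process $\sigma^{2}$ over $[0,T]$ gives uniform integrability of $\{\mathcal{E}(\tilde{H}_{\theta_2,\beta_2})(t\wedge\tau_{n})\}_{n}$, and passing $n\to\infty$ delivers the desired identity $E_{P}[\mathcal{E}(\tilde{H}_{\theta_2,\beta_2})(t)]=1$.
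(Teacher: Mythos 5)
Your architecture coincides with the paper's: factor the density by Yor's formula, get strict positivity from $L$ being a subordinator together with $\sigma^{2}(t)\ge\sigma^{2}(0)e^{-\rho t}$, and condition on $\mathcal{F}_{T}^{L}$ so that $\mathcal{E}(\tilde{H}_{\theta_{2},\beta_{2}})$ factors out and $\sigma$ becomes a deterministic function bounded below. The genuine gap is in how you close the continuous factor. From the bound $G_{\theta_{1},\beta_{1}}^{2}(s)\le 2\sigma^{-2}(0)e^{\rho T}\bigl(\theta_{1}^{2}+\alpha^{2}\beta_{1}^{2}X^{2}(s)\bigr)$ you cannot deduce the conditional Novikov condition: $\mathbb{E}\bigl[\exp\bigl(\lambda\int_{0}^{T}X^{2}(s)\,ds\bigr)\bigr]$ for a (conditionally) Gaussian $X$ is finite only when $\lambda$ stays below a threshold governed by the top eigenvalue of the conditional covariance operator of $X$, which for an OU-type $X$ with instantaneous variance $\sigma^{2}(u)$ is of order $\alpha^{2}/(2\sup_{u\le T}\sigma^{2}(u))$. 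Your constant is $\lambda=\alpha^{2}\beta_{1}^{2}\sigma^{-2}(0)e^{\rho T}$, so even on the no-jump path (where $\sup_{u}\sigma^{2}(u)=\sigma^{2}(0)$) the requirement becomes roughly $\beta_{1}^{2}e^{\rho T}<1/2$, which fails for generic parameters, and on paths where $L$ jumps, $\sup_{u}\sigma^{2}(u)$ is unbounded. The delicate cancellation between the factor $\sigma^{-2}(t)$ in the kernel and the $\sigma^{2}$-driven variance of $X(t)$ is exactly what taking suprema destroys; Novikov is borderline here even for constant $\sigma$ and $\beta_{1}=1$. This is why the paper avoids exponential moments altogether: it stops at a reducing sequence $\tau_{n}$, passes to the measures $Q_{\theta_{1},\beta_{1}}^{n}$ with densities $\mathcal{E}(\tilde{G}_{\theta_{1},\beta_{1}})^{\tau_{n}}$, and verifies only $\sup_{n}\mathbb{E}_{Q_{\theta_{1},\beta_{1}}^{n}}[\int_{0}^{T}\boldsymbol{1}_{[0,\tau_{n}]}G_{\theta_{1},\beta_{1}}^{2}(t)\,dt]<\infty$, a second-moment bound on $X$ under measures where $X$ is still mean-reverting (following Theorem 3.7 of Benth and Ortiz-Latorre).

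A secondary point: for the jump factor the paper simply cites Theorem 3.10 of Benth and Ortiz-Latorre, whereas you sketch a localization and uniform-integrability argument. The route is plausible (the margin $\theta_{2}<\Theta_{L}/2$ leaves room for a $(1+\epsilon)$-moment of the density), but as written it is a plan rather than a proof: uniform integrability of $\{\mathcal{E}(\tilde{H}_{\theta_{2},\beta_{2}})(t\wedge\tau_{n})\}_{n}$ is essentially the statement to be established, and boundedness of $\sigma^{2}$ on $[0,\tau_{n}]$ alone does not make $\mathcal{E}(\tilde{H}_{\theta_{2},\beta_{2}})^{\tau_{n}}$ a true martingale without also using the $z$-integrability supplied by $\theta_{2}<\Theta_{L}/2$.
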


\begin{proof}
That $\mathcal{E}(\tilde{G}_{\theta_{1},\beta_{1}}+\tilde{H}_{\theta_{2}%
,\beta_{2}})$ is strictly positive follows easily from the fact that the
L\'{e}vy process $L$ is a subordinator as this yields strictly positive jumps
of $\tilde{G}_{\theta_{1},\beta_{1}}+\tilde{H}_{\theta_{2},\beta_{2}}$. It
holds that $[\tilde{G}_{\theta_{1},\beta_{1}},\tilde{H}_{\theta_{2},\beta_{2}%
}],$ the quadratic co-variation between $\tilde{G}_{\theta_{1},\beta_{1}}$ and
$\tilde{H}_{\theta_{2},\beta_{2}},$ is identically zero, by Yor's formula in
Protter~\cite[Theorem 38]{Protter04}. Hence, we can write%
\begin{equation}
\mathcal{E}(\tilde{G}_{\theta_{1},\beta_{1}}+\tilde{H}_{\theta_{2},\beta_{2}%
})(t)=\mathcal{E}(\tilde{G}_{\theta_{1},\beta_{1}})(t)\mathcal{E}(\tilde
{H}_{\theta_{2},\beta_{2}})(t),\quad t\in\lbrack0,T].\label{Equ_Yor_Formula}%
\end{equation}
By classical martingale theory, we know that $\mathcal{E}(\tilde{G}%
_{\theta_{1},\beta_{1}}+\tilde{H}_{\theta_{2},\beta_{2}})$ is a true
martingale if and only if
\[
\mathbb{E}_{P}[\mathcal{E}(\tilde{G}_{\theta_{1},\beta_{1}}+\tilde{H}%
_{\theta_{2},\beta_{2}})(T)]=1,
\]
which, using Yor's formula, is equivalent to showing that%
\[
\mathbb{E}_{P}[\mathcal{E}(\tilde{G}_{\theta_{1},\beta_{1}})(T)\mathcal{E}%
(\tilde{H}_{\theta_{2},\beta_{2}})(T)]=1.
\]
Let, $\mathcal{F}_{T}^{L}$ be the $\sigma$-algebra generated by $L$ up to time
$T,$ then we have that%
\[
\mathbb{E}_{P}[\mathcal{E}(\tilde{G}_{\theta_{1},\beta_{1}})(T)\mathcal{E}%
(\tilde{H}_{\theta_{2},\beta_{2}})(T)]=\mathbb{E}_{P}[\mathbb{E}%
_{P}[\mathcal{E}(\tilde{G}_{\theta_{1},\beta_{1}})(T)|\mathcal{F}_{T}%
^{L}]\mathcal{E}(\tilde{H}_{\theta_{2},\beta_{2}})(T)].
\]
If we show that $\mathbb{E}_{P}[\mathcal{E}(\tilde{G}_{\theta_{1},\beta_{1}%
})(T)|\mathcal{F}_{T}^{L}]\equiv1,$ then we will have finished, because by
Theorem 3.10 in Benth and Ortiz-Latorre \cite{BeO-L13}, we have that
$\mathcal{E}(\tilde{H}_{\theta_{2},\beta_{2}})$ is a true martingale and,
hence, $\mathbb{E}_{P}[\mathcal{E}(\tilde{H}_{\theta_{2},\beta_{2}})(T)]=1.$
The idea of the proof is based on the fact that $\mathbb{E}_{P}[\mathcal{E}%
(\tilde{G}_{\theta_{1},\beta_{1}})(T)|\mathcal{F}_{T}^{L}]$ is the expectation
of $\mathcal{E}(\tilde{G}_{\theta_{1},\beta_{1}})(T)$ assuming that
$\sigma(t)$ is a deterministic function that, in addition, is bounded below by
$\sigma(0)e^{-\rho t}.$ Using this information one can show that,
conditionally on knowing $\sigma,\mathcal{E}(\tilde{G}_{\theta_{1},\beta_{1}%
})$ is a true martingale and, hence, $\mathbb{E}_{P}[\mathcal{E}(\tilde
{G}_{\theta_{1},\beta_{1}})(T)]=1.$ Let us sketch the proof that is basically
the same as in Section 3.1 in \cite{BeO-L13} but, now, with $\sigma$ being a
function. First, we show that, conditionally on $\mathcal{F}_{T}^{L},$
$\tilde{G}_{\theta_{1},\beta_{1}}$ is a square integrable $P$-martingale
because
\begin{align*}
\mathbb{E}_{P}[(\tilde{G}_{\theta_{1},\beta_{1}})^{2}|\mathcal{F}_{T}^{L}] &
=\mathbb{E}_{P}[\int_{0}^{T}\sigma^{-2}(t)\left(  \theta_{1}+\alpha\beta
_{1}X(t)\right)  ^{2}dt|\mathcal{F}_{T}^{L}]\\
&  \leq2\sigma(0)^{-2}e^{2\rho T}\left(  \theta_{1}^{2}T+\alpha^{2}%
\mathbb{E}_{P}[\int_{0}^{T}X^{2}(t)dt]\right)  <\infty,
\end{align*}
(see Proposition 3.6. in \cite{BeO-L13}). To show that $\mathcal{E}(\tilde
{G}_{\theta_{1},\beta_{1}})$ is a $P$-martingale on $[0,T],$ we consider a
reducing sequence of stopping times $\{\tau_{n}\}_{n\geq1}$ for $\mathcal{E}%
(\tilde{G}_{\theta_{1},\beta_{1}})$ and, proceeding as in Theorem 3.7 in
\cite{BeO-L13}, we define a sequence of probability measure $\{Q_{\theta
_{1},\beta_{1}}^{n}\}_{n\geq1}$ with Radon-Nykodim densities given by $\left.
\frac{dQ_{\theta_{1},\beta_{1}}^{n}}{dP}\right\vert _{\mathcal{F}_{t}%
}\triangleq\mathcal{E}(\tilde{G}_{\theta_{1},\beta_{1}})^{\tau_{n}}%
(t),t\in\lbrack0,T],n\geq1.$ Doing the same reasonings as in Theorem 3.7 in
\cite{BeO-L13}, we reduce the problem to prove that
\[
\sup_{n\geq1}\mathbb{E}_{Q_{\theta_{1},\beta_{1}}^{n}}[\int_{0}^{T}%
\boldsymbol{1}_{[0,\tau_{n}]}(G_{\theta_{1},\beta_{1}}(t))^{2}dt]<\infty.
\]
Now, one has
\[
\mathbb{E}_{Q_{\theta_{1},\beta_{1}}^{n}}[\int_{0}^{T}\boldsymbol{1}%
_{[0,\tau_{n}]}(G_{\theta_{1},\beta_{1}}(t))^{2}dt]\leq2\sigma(0)^{-2}e^{2\rho
T}\left(  \theta_{1}^{2}T+\alpha^{2}\mathbb{E}_{Q_{\theta_{1},\beta_{1}}^{n}%
}[\int_{0}^{T}\boldsymbol{1}_{[0,\tau_{n}]}X^{2}(t)dt]\right)  .
\]
To bound the last expectation in the previous expression we use that we know
the dynamics of $X(t)$ for $t\in\lbrack0,\tau_{n}]$ under $Q_{\theta_{1}%
,\beta_{1}}^{n}$, which is obtained from equation \eqref{EquXDynam-Q} by
setting $s=0$ and $t<\tau_{n}.$ Therefore, we can write%
\begin{align*}
&  \mathbb{E}_{Q_{\theta_{1},\beta_{1}}^{n}}[\int_{0}^{T}\boldsymbol{1}%
_{[0,\tau_{n}]}(t)X(t)^{2}dt]\\
&  \qquad\leq2\left\{  \mathbb{E}_{Q_{\theta_{1},\beta_{1}}^{n}}[\int_{0}%
^{T}\boldsymbol{1}_{[0,\tau_{n}]}(t)\left(  X(0)e^{-\alpha(1-\beta_{1}%
)t}+\frac{\theta_{1}}{\alpha(1-\beta_{1})}\left(  1-e^{-\alpha(1-\beta_{1}%
)t}\right)  \right)  ^{2}dt]\right.  \\
&  \qquad\qquad\left.  +\mathbb{E}_{Q_{\theta_{1},\beta_{1}}^{n}}[\int_{0}%
^{T}\boldsymbol{1}_{[0,\tau_{n}]}(t)\left(  \int_{0}^{t}\sigma(s)e^{-\alpha
(1-\beta_{1})(t-s)}dW_{Q_{\theta_{1},\beta_{1}}^{n}}(s)\right)  ^{2}%
dt]\right\}  \\
&  \qquad\leq2T\{\left(  \left\vert X(0)\right\vert +\left(  \left\vert
\theta_{1}\right\vert \right)  T\right)  ^{2}+\sigma(0)^{-2}e^{2\rho T}%
T^{2}\}<\infty.
\end{align*}
Here, we have used that the function $\eta(x)\triangleq(1-e^{-xa})/x\leq a$
for $x,a\geq0,$ and that
\begin{align*}
&  \mathbb{E}_{Q_{\theta_{1},\beta_{1}}^{n}}[\left(  \int_{0}^{t}%
\sigma(s)e^{-\alpha(1-\beta_{1})(t-s)}dW_{Q_{\theta_{1},\beta_{1}}^{n}%
}(s)\right)  ^{2}]\\
&  \qquad\qquad=\sigma(0)^{-2}e^{2\rho T}\int_{0}^{t}e^{-2\alpha(1-\beta
_{1})(t-s)}ds\leq\sigma(0)^{-2}e^{2\rho T}T.
\end{align*}
The Theorem follows.
\end{proof}

We also have the following result on the independence of the driving noise
processes after the change of measure:

\begin{lemma}
Under $Q_{\bar{\theta},\bar{\beta}}$, the Brownian motion $W_{Q_{\bar{\theta
},\bar{\beta}}}$ and the random measure $N_{Q_{\bar{\theta},\bar{\beta}}}^{L}$
are independent.
\end{lemma}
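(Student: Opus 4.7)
The plan is to show that the Brownian motion $W_{Q_{\bar{\theta},\bar{\beta}}}$ is $Q_{\bar{\theta},\bar{\beta}}$-independent of the $\sigma$-algebra $\mathcal{F}_T^L$ generated by $L$. Since the random measure $N^L_{Q_{\bar{\theta},\bar{\beta}}}$ is the same set function as $N^L$ (only its compensator has changed under $Q_{\bar{\theta},\bar{\beta}}$), it is $\mathcal{F}_T^L$-measurable, so this implies the lemma. The key device is the Yor factorization $\mathcal{E}(\tilde{G}_{\theta_1,\beta_1}+\tilde{H}_{\theta_2,\beta_2})=\mathcal{E}(\tilde{G}_{\theta_1,\beta_1})\mathcal{E}(\tilde{H}_{\theta_2,\beta_2})$ already used in Theorem~\ref{thm:true_martingale}: the factor $\mathcal{E}(\tilde{H}_{\theta_2,\beta_2})(T)$ is $\mathcal{F}_T^L$-measurable, because the kernel $H_{\theta_2,\beta_2}(s,z)$ depends only on $z$ and $\sigma^{2}(s-)$, both $\mathcal{F}_T^L$-measurable.

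For a bounded Borel functional $\Phi$ of the trajectory of $W_{Q_{\bar{\theta},\bar{\beta}}}$ and a bounded $\mathcal{F}_T^L$-measurable random variable $\Psi$, I would apply Bayes' formula and the tower property to write
\[
\mathbb{E}_{Q_{\bar{\theta},\bar{\beta}}}[\Phi\,\Psi]=\mathbb{E}_{P}\bigl[\Psi\,\mathcal{E}(\tilde{H}_{\theta_2,\beta_2})(T)\,\mathbb{E}_{P}[\Phi\,\mathcal{E}(\tilde{G}_{\theta_1,\beta_1})(T)\mid\mathcal{F}_T^{L}]\bigr].
\]
The central step is to identify the inner conditional expectation. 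Conditional on $\mathcal{F}_T^L$, the trajectory of $\sigma$ is a deterministic function bounded below by a strictly positive constant depending only on $\sigma(0)$, $\rho$ and $T$ (since $L$ is a subordinator), while $W$ remains a standard Brownian motion under $P$ by the $P$-independence of $W$ and $L$. Hence, on this conditional Wiener space, $\mathcal{E}(\tilde{G}_{\theta_1,\beta_1})$ is the density of a classical Girsanov transform with deterministic-coefficient drift kernel $G_{\theta_1,\beta_1}(s)=\sigma^{-1}(s)(\theta_1+\alpha\beta_1 X(s))$. The proof of Theorem~\ref{thm:true_martingale} has already established the crucial identity $\mathbb{E}_P[\mathcal{E}(\tilde{G}_{\theta_1,\beta_1})(T)\mid\mathcal{F}_T^L]=1$, and a pathwise (in the $L$-trajectory) application of Girsanov's theorem then identifies $W_{Q_{\bar{\theta},\bar{\beta}}}=W-\int_0^{\cdot}G_{\theta_1,\beta_1}(s)ds$ as a standard Brownian motion under the conditional measure of density $\mathcal{E}(\tilde{G}_{\theta_1,\beta_1})(T)$. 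Consequently
\[
\mathbb{E}_P[\Phi\,\mathcal{E}(\tilde{G}_{\theta_1,\beta_1})(T)\mid\mathcal{F}_T^L]=\int \Phi\, d\mu_W,
\]
where $\mu_W$ is Wiener measure, and in particular this quantity is a deterministic constant $c_\Phi$.

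Substituting back and using $\mathbb{E}_P[\mathcal{E}(\tilde{H}_{\theta_2,\beta_2})(T)]=1$, invoked in Theorem~\ref{thm:true_martingale} via Theorem 3.10 of \cite{BeO-L13}, gives $\mathbb{E}_{Q_{\bar{\theta},\bar{\beta}}}[\Phi\,\Psi]=c_\Phi\,\mathbb{E}_{Q_{\bar{\theta},\bar{\beta}}}[\Psi]$; taking $\Psi\equiv 1$ identifies $c_\Phi=\mathbb{E}_{Q_{\bar{\theta},\bar{\beta}}}[\Phi]$, which yields the desired $Q_{\bar{\theta},\bar{\beta}}$-independence of $W_{Q_{\bar{\theta},\bar{\beta}}}$ and $\mathcal{F}_T^L$. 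I expect the only genuine obstacle to be the rigorous formulation of the conditional (pathwise-in-$L$) Girsanov step; however, essentially all of the analytic estimates it needs---conditional square-integrability of $\tilde{G}_{\theta_1,\beta_1}$, the reducing-sequence argument giving conditional martingality, and the explicit conditional dynamics of $X$ under the localized measures $Q_{\theta_1,\beta_1}^n$---are already carried out inside the proof of Theorem~\ref{thm:true_martingale}, so what remains is just bookkeeping via Bayes' rule and Yor factorization.
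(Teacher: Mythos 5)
Your proposal is correct and follows essentially the same route as the paper: both rest on the Yor factorization of the density, conditioning on the $\sigma$-algebra generated by $L$, and the observation that, given the $L$-path, the tilted conditional law of $W_{Q_{\bar{\theta},\bar{\beta}}}$ is Wiener measure, so the inner conditional expectation is a deterministic constant. The only difference is that the paper carries out your ``conditional Girsanov'' step concretely, via an iterated backward conditioning that computes the joint characteristic function of the increments of $W_{Q_{\bar{\theta},\bar{\beta}}}$ and of the compensated Poisson integrals, using that $\exp(\int(G_{\theta_{1},\beta_{1}}(s)+\mathrm{i}\mu)dW(s)-\tfrac{1}{2}\int(G_{\theta_{1},\beta_{1}}(s)+\mathrm{i}\mu)^{2}ds)$ is a conditional martingale --- precisely the bookkeeping you defer.
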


\begin{proof}
To prove the independence of $W_{Q_{\bar{\theta},\bar{\beta}}}$ and
$N_{Q_{\bar{\theta},\bar{\beta}}}^{L}$ under $Q_{\bar{\theta},\bar{\beta}},$
it is sufficient to prove that%
\begin{align*}
&  \mathbb{E}_{Q_{\bar{\theta},\bar{\beta}}}[\exp(\mathrm{i}\sum_{j=1}%
^{k}\left(  \mu_{j}W_{Q_{\bar{\theta},\bar{\beta}}}(t_{j})+\xi_{j}\int
_{0}^{t_{j}}\int_{0}^{\infty}z\tilde{N}_{Q_{\bar{\theta},\bar{\beta}}}%
^{L}(ds,dz)\right)  )]\\
&  \qquad=\mathbb{E}_{Q_{\bar{\theta},\bar{\beta}}}[\exp(\mathrm{i}\sum
_{j=1}^{k}\mu_{j}W_{Q_{\bar{\theta},\bar{\beta}}}(t_{j}))]\mathbb{E}%
_{Q_{\bar{\theta},\bar{\beta}}}[\exp(\mathrm{i}\sum_{j=1}^{k}\xi_{j}\int
_{0}^{t_{j}}\int_{0}^{\infty}z\tilde{N}_{Q_{\bar{\theta},\bar{\beta}}}%
^{L}(ds,dz))],
\end{align*}
for any $\mu_{j},\xi_{j}\in\mathbb{R},j=1,...,k$ and $0\leq t_{1}<t_{2}%
<\cdots<t_{k-1}<t_{k}\leq T.$ We will make use of the following notation:
given a process $Z=\{Z(t)\}_{t\in\lbrack0,T]}$ we will denote by $\Delta
_{j}Z=Z(t_{j})-Z(t_{j-1}),j=1,...,k,$ where $t_{0}=0,$ by convention. We have
that%
\begin{align*}
&  \mathbb{E}_{Q_{\bar{\theta},\bar{\beta}}}[\exp(\mathrm{i}\sum_{j=1}%
^{k}\left(  \mu_{j}W_{Q_{\bar{\theta},\bar{\beta}}}(t_{j})+\xi_{j}\int
_{0}^{t_{j}}\int_{0}^{\infty}z\tilde{N}_{Q_{\bar{\theta},\bar{\beta}}}%
^{L}(ds,dz)\right)  )]\\
&  =\mathbb{E}_{P}[\mathcal{E}(\tilde{G}_{\theta_{1},\beta_{1}}+\tilde
{H}_{\theta_{2},\beta_{2}})(t_{k})\exp(\mathrm{i}\sum_{j=1}^{k}\left(  \mu
_{j}W_{Q_{\bar{\theta},\bar{\beta}}}(t_{j})+\xi_{j}\int_{0}^{t_{j}}\int
_{0}^{\infty}z\tilde{N}_{Q_{\bar{\theta},\bar{\beta}}}^{L}(ds,dz)\right)  )]\\
&  =\mathbb{E}_{P}[\mathbb{E}_{P}[\mathcal{E}(\tilde{G}_{\theta_{1},\beta_{1}%
})(t_{k})\exp(\mathrm{i}\sum_{j=1}^{k}\mu_{j}W_{Q_{\bar{\theta},\bar{\beta}}%
}(t_{j}))|\mathcal{F}_{t_{k}}^{L}]\mathcal{E}(\tilde{H}_{\theta_{2},\beta_{2}%
})(t_{k})\exp(\mathrm{i}\sum_{j=1}^{k}\xi_{j}\int_{0}^{t_{j}}\int_{0}^{\infty
}z\tilde{N}_{Q_{\bar{\theta},\bar{\beta}}}^{L}(ds,dz))],
\end{align*}
and%
\begin{align*}
&  \mathbb{E}_{P}[\mathcal{E}(\tilde{G}_{\theta_{1},\beta_{1}})(t_{k}%
)\exp(\mathrm{i}\sum_{j=1}^{k}\mu_{j}W_{Q_{\bar{\theta},\bar{\beta}}}%
(t_{j}))|\mathcal{F}_{t_{k}}^{L}]\\
&  =\mathbb{E}_{P}[\mathcal{E}(\tilde{G}_{\theta_{1},\beta_{1}})(t_{k-1}%
)\exp(\int_{t_{k-1}}^{t_{k}}G_{\theta_{1},\beta_{1}}(s)dW(s)-\frac{1}{2}%
\int_{t_{k-1}}^{t_{k}}G_{\theta_{1},\beta_{1}}^{2}(s)ds)\\
&  \qquad\times\exp(\mathrm{i}\sum_{j=1}^{k-1}\mu_{j}W_{Q_{\bar{\theta}%
,\bar{\beta}}}(t_{j})+\mathrm{i}\mu_{k}W_{Q_{\bar{\theta},\bar{\beta}}%
}(t_{k-1})+\mathrm{i}\mu_{k}\Delta W_{Q_{\bar{\theta},\bar{\beta}}%
})|\mathcal{F}_{t_{k}}^{L}]\\
&  =\mathbb{E}_{P}[\mathcal{E}(\tilde{G}_{\theta_{1},\beta_{1}})(t_{k-1}%
)\exp(\mathrm{i}\sum_{j=1}^{k-1}\mu_{j}W_{Q_{\bar{\theta},\bar{\beta}}}%
(t_{j})+\mathrm{i}\mu_{k}W_{Q_{\bar{\theta},\bar{\beta}}}(t_{k-1}))\\
&  \qquad\times\mathbb{E}_{P}[\exp(\int_{t_{k-1}}^{t_{k}}G_{\theta_{1}%
,\beta_{1}}(s)dW(s)-\frac{1}{2}\int_{t_{k-1}}^{t_{k}}G_{\theta_{1},\beta_{1}%
}^{2}(s)ds\\
&  \qquad+\mathrm{i}\mu_{k}\left(  \Delta_{k}W-\int_{t_{k-1}}^{t_{k}}%
G_{\theta_{1},\beta_{1}}(s)ds\right)  )|\mathcal{F}_{t_{k}}^{L}\vee
\mathcal{F}_{t_{k-1}}^{W}]|\mathcal{F}_{t_{k}}^{L}].
\end{align*}
Moreover, using similar arguments to those used in the proof of Theorem
\ref{thm:true_martingale}, we have that
\[
\exp(\int_{0}^{t}(G_{\theta_{1},\beta_{1}}(s)+\mathrm{i}\mu_{k})dW(s)-\frac
{1}{2}\int_{0}^{t}(G_{\theta_{1},\beta_{1}}(s)+\mathrm{i}\mu_{k})^{2}ds),
\]
is a $\mathcal{F}_{t_{k}}^{L}\vee\mathcal{F}_{t}^{W}$-martingale and, then, we
get
\begin{align*}
&  \mathbb{E}_{P}[\exp(\int_{t_{k-1}}^{t_{k}}G_{\theta_{1},\beta_{1}%
}(s)dW(s)-\frac{1}{2}\int_{t_{k-1}}^{t_{k}}G_{\theta_{1},\beta_{1}}%
^{2}(s)ds+\mathrm{i}\mu_{k}\left(  \Delta_{k}W-\int_{t_{k-1}}^{t_{k}}%
G_{\theta_{1},\beta_{1}}(s)ds\right)  )|\mathcal{F}_{t_{k}}^{L}\vee
\mathcal{F}_{t_{k-1}}^{W}]\\
&  =e^{-\frac{1}{2}\mu_{k}^{2}\Delta_{k}t}\mathbb{E}_{P}[\exp(\int_{t_{k-1}%
}^{t_{k}}(G_{\theta_{1},\beta_{1}}(s)+\mathrm{i}\mu_{k})dW(s)-\frac{1}{2}%
\int_{t_{k-1}}^{t_{k}}(G_{\theta_{1},\beta_{1}}(s)+\mathrm{i}\mu_{k}%
)^{2}ds)|\mathcal{F}_{t_{k}}^{L}\vee\mathcal{F}_{t_{k-1}}^{W}]\\
&  =e^{-\frac{1}{2}\mu_{k}^{2}\Delta_{k}t}.
\end{align*}
Therefore,%
\begin{align*}
&  \mathbb{E}_{P}[\mathcal{E}(\tilde{G}_{\theta_{1},\beta_{1}})(t_{k}%
)\exp(\mathrm{i}\sum_{j=1}^{k}\mu_{j}W_{Q_{\bar{\theta},\bar{\beta}}}%
(t_{j}))|\mathcal{F}_{t_{k}}^{L}]\\
&  =e^{-\frac{1}{2}\mu_{k}^{2}\Delta_{k}t}\mathbb{E}_{P}[\mathcal{E}(\tilde
{G}_{\theta_{1},\beta_{1}})(t_{k-1})\exp(\mathrm{i}\sum_{j=1}^{k-1}\mu
_{j}W_{Q_{\bar{\theta},\bar{\beta}}}(t_{j})+\mathrm{i}\mu_{k}W_{Q_{\bar
{\theta},\bar{\beta}}}(t_{k-1}))|\mathcal{F}_{t_{k}}^{L}].
\end{align*}
Repeating the previous conditioning trick, one gets that%
\[
\mathbb{E}_{P}[\mathcal{E}(\tilde{G}_{\theta_{1},\beta_{1}})(t_{k}%
)\exp(\mathrm{i}\sum_{j=1}^{k}\mu_{j}W_{Q_{\bar{\theta},\bar{\beta}}}%
(t_{j}))|\mathcal{F}_{t_{k}}^{L}]=\exp\left(  -\frac{1}{2}\sum_{j=1}%
^{k}\left(  \sum_{q=j}^{k}\mu_{q}^{2}\right)  \Delta_{j}t\right)  .
\]
and, therefore,
\begin{align*}
&  \mathbb{E}_{Q_{\bar{\theta},\bar{\beta}}}[\exp(\mathrm{i}\sum_{j=1}%
^{k}\left(  \mu_{j}W_{Q_{\bar{\theta},\bar{\beta}}}(t_{j})+\xi_{j}\int
_{0}^{t_{j}}\int_{0}^{\infty}z\tilde{N}_{Q_{\bar{\theta},\bar{\beta}}}%
^{L}(ds,dz)\right)  )]\\
&  =\exp\left(  -\frac{1}{2}\sum_{j=1}^{k}\left(  \sum_{q=j}^{k}\mu
_{q}\right)  ^{2}\Delta_{j}t\right)  \mathbb{E}_{Q_{\bar{\theta},\bar{\beta}}%
}[\exp(\mathrm{i}\sum_{j=1}^{k}\xi_{j}\int_{0}^{t_{j}}\int_{0}^{\infty}%
z\tilde{N}_{Q_{\bar{\theta},\bar{\beta}}}^{L}(ds,dz))]
\end{align*}
On the other hand,%
\begin{align*}
\mathbb{E}_{Q_{\bar{\theta},\bar{\beta}}}[\exp(\mathrm{i}\sum_{j=1}^{k}\mu
_{j}W_{Q_{\bar{\theta},\bar{\beta}}}(t_{j}))] &  =\mathbb{E}_{Q_{\bar{\theta
},\bar{\beta}}}[\exp(\mathrm{i}\sum_{j=1}^{k}\left(  \sum_{q=j}^{k}\mu
_{q}\right)  \Delta_{j}W_{Q_{\bar{\theta},\bar{\beta}}})]\\
&  =\exp\left(  -\frac{1}{2}\sum_{j=1}^{k}\left(  \sum_{q=j}^{k}\mu
_{q}\right)  ^{2}\Delta_{j}t\right)  ,
\end{align*}
and we can conclude the proof.
\end{proof}

One of the particularities of electricity markets is that power is a non
storable commodity and for that reason is not a directly tradeable financial
asset. This entails that one can not derive the forward price of electricity
from the classical buy-and-hold hedging argument. Using a risk-neutral pricing
argument (see Benth, \v{S}altyt\.{e} Benth and Koekebakker \cite{BeSa-BeKo08}%
), under the assumption of deterministic interest rates, the forward price at
time $0\leq t$, with time of delivery $T$ with $t\leq T<T^{\ast},$ is given by
$F_{Q}(t,T)\triangleq\mathbb{E}_{Q}[S(T)|\mathcal{F}_{t}].$ Here, $Q$ is any
probability measure equivalent to the historical measure $P$ and
$\mathcal{F}_{t}$ is the market information up to time $t$. In what follows we
will use the probability measure $Q=Q_{\bar{\theta},\bar{\beta}}$ introduced
above, and let the spot price dynamics be given in terms of the process $X(t)$
and $\sigma^{2}(t)$ in \eqref{Equ_OU_Brownian}-\eqref{Equ_OU_Levy}. This will
provide us with a parametric class of structure-preserving probability
measures, extending the Esscher transform but still being reasonably
analytically tractable from a pricing point of view.

Our choice of pricing measure can also be applied to temperature futures
markets, where the underlying "asset" is a temperature index measured in some
location. Temperature is clearly not financially tradeable. There is empirical
evidence for mean-reversion and stochastic volatility in temperature data, see
Benth and \v{S}altyt\.{e} Benth~\cite{BSB-sv}. Yet another example is the
freight rate market, where the "spot" typically is an index obtained from
opinions of traders. See Benth, Koekebakker and Taib~\cite{BKT} for stochastic
modelling of freight rate spot data, with models of the form \eqref{Equ_OU_Brownian}-\eqref{Equ_OU_Levy}.

Oil and gas can typically be stored, and one can build a pricing model for
forwards by including storage and transportation costs, as well as the
convenience yield (see e.g. Eydeland and Wolyniec~\cite{EW} and
Geman~\cite{Geman}). However, we may also in this case use the probability
measure $Q=Q_{\bar{\theta},\bar{\beta}}$ as a parametric class of pricing
measures. Firstly, the underlying spot assets do not need to be (local)
martingales under the pricing measure in these markets, although the assets
are tradable, since there are frictions yielding market incompleteness.
Secondly, the probability measures provide a flexible way to model the risk
premium (as we shall see later), and therefore may be attractive over models
that directly specifies the dynamics of a convenience yield, say (see e.g.
Eydeland and Wolyniec~\cite{EW} for such models). Note that in
Benth~\cite{Be11}, a model for the spot given by
\eqref{Equ_OU_Brownian}-\eqref{Equ_OU_Levy} has been shown to fit gas prices
reasonably well.

We note that in electricity markets, the delivery of the underlying power
takes place over a period of time $[T_{1},T_{2}],$ where $0<T_{1}%
<T_{2}<T^{\ast}.$ We call such contracts swap contracts and we will denote
their price at time $t\leq T_{1}$ by
\[
F_{Q}(t,T_{1},T_{2})\triangleq\mathbb{E}_{Q}\left[  \frac{1}{T_{2}-T_{1}}%
\int_{T_{1}}^{T_{2}}S(T)dT|\mathcal{F}_{t}\right]  .
\]
We can use the stochastic Fubini theorem to relate the price of forwards and
swaps%
\[
F_{Q}(t,T_{1},T_{2})\triangleq\frac{1}{T_{2}-T_{1}}\int_{T_{1}}^{T_{2}}%
F_{Q}(t,T)dT.
\]
The risk premium for forward prices with a fixed delivery time is defined by
the following expression
\[
R_{Q}^{F}(t,T)\triangleq\mathbb{E}_{Q}[S(T)|\mathcal{F}_{t}]-\mathbb{E}%
_{P}[S(T)|\mathcal{F}_{t}],
\]
and for swap prices by%
\[
R_{Q}^{S}(t,T_{1},T_{2})\triangleq F_{Q}(t,T_{1},T_{2})-\mathbb{E}_{Q}%
[\frac{1}{T_{2}-T_{1}}\int_{T_{1}}^{T_{2}}S(T)dT|\mathcal{F}_{t}]
\]
It is simple to see that
\[
R_{Q}^{S}(t,T_{1},T_{2})=\frac{1}{T_{2}-T_{1}}\int_{T_{1}}^{T_{2}}R_{Q}%
^{F}(t,T)dT.
\]
The risk premium measures the price discount a producer (seller) of power must
accept compared to the predicted spot price at delivery. We shall use the risk
premium to analyse the effect of our measure change on forward prices, and to
discuss these in relation to stylized facts from the power markets.

\section{Arithmetic spot model}

We are interested in applying the previous probability measure change to study
the implied risk premium. The first model for the spot price $S$ that we are
going to consider is the arithmetic one. We define the \textit{arithmetic spot
price model} by%
\begin{equation}
S(t)=\Lambda_{a}(t)+X(t),\quad t\in\lbrack0,T^{\ast}], \label{Equ_Arith_Model}%
\end{equation}
where $T^{\ast}>0$ is a fixed time horizon. The processes $\Lambda_{a}$ is
assumed to be deterministic and it accounts for the seasonalities observed in
the spot prices. We note in passing that such models have been considered by
several authors for various energy markets. We refer to Lucia and
Schwartz~\cite{LS} for power markets and Dornier and Querel~\cite{DQ} for
temperature derivatives with no stochastic volatility. More recently Benth,
\v{S}altyt\.{e} Benth and Koekebakker~\cite{BeSa-BeKo08} has a general
discussion of arithmetic models in energy markets (see also Garcia,
Kl\"uppelberg and M\"uller~\cite{GKM} for power markets), and Benth,
\v{S}altyt\.{e} Benth \cite{BSB-sv} for temperature markets with stochastic volatility.

In order to compute the forward prices and the risk premium associated to them
in this model, we need to know the dynamics of $S$ (that is, of $X$ and
$\sigma^{2}$) under $P$ and under $Q.$ Explicit expressions for $X$ and
$\sigma^{2}$ under $P$ are given by equations $\left(  \ref{Equ_X_Explicit_P}%
\right)  $ and $\left(  \ref{Equ_Sigma_Explicit_P}\right)  ,$ respectively. In
the rest of this section, $Q=Q_{\bar{\theta},\bar{\beta}},\bar{\theta}\in
\bar{D}_{L},\bar{\beta}\in\lbrack0,1]^{2}$ defined by $\left(  \ref{EquDefQ}%
\right)  ,$ and the explicit expressions for $X$ and $\sigma^{2}$ under $Q$
are given in Remark $\ref{Remark_Dynamics_Q},$ equations $\left(
\ref{EquXDynam-Q}\right)  $ and $\left(  \ref{EquSigmaDynam-Q}\right)  ,$ respectively.

\begin{proposition}
The forward price $F_{Q}(t,T)$ in the arithmetic spot model
\ref{Equ_Arith_Model} is given by
\[
F_{Q}(t,T)=\Lambda_{a}(T)+X(t)e^{-\alpha(1-\beta_{1})(T-t)}+\frac{\theta_{1}%
}{\alpha(1-\beta_{1})}(1-e^{-\alpha(1-\beta_{1})(T-t)}).
\]

\end{proposition}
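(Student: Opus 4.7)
The plan is to compute $F_Q(t,T)=\mathbb{E}_Q[S(T)\mid \mathcal{F}_t]$ directly by plugging in the definition $S(T)=\Lambda_a(T)+X(T)$ and using the explicit representation of $X$ under $Q$ obtained in Remark~\ref{Remark_Dynamics_Q}, equation \eqref{EquXDynam-Q}. Since $\Lambda_a$ is deterministic, it factors out of the conditional expectation, so the whole task reduces to showing
\[
\mathbb{E}_{Q}[X(T)\mid \mathcal{F}_t]=X(t)e^{-\alpha(1-\beta_{1})(T-t)}+\frac{\theta_{1}}{\alpha(1-\beta_{1})}(1-e^{-\alpha(1-\beta_{1})(T-t)}).
\]

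First I would apply \eqref{EquXDynam-Q} with the given endpoints $s=t$, $T$ in place of $t$. The first two terms on the right-hand side of \eqref{EquXDynam-Q} are, respectively, $\mathcal{F}_t$-measurable and deterministic, so they pass through the conditional expectation untouched and produce exactly the two terms that appear in the claimed formula. Hence the proposition reduces to the statement
\[
\mathbb{E}_{Q}\!\left[\int_{t}^{T}\sigma(u)e^{-\alpha(1-\beta_{1})(T-u)}dW_{Q_{\bar{\theta},\bar{\beta}}}(u)\,\Big|\,\mathcal{F}_t\right]=0.
\]

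This is the main (mildly non-trivial) step: one must verify that the stochastic integral is a genuine $Q$-martingale, not merely a local martingale, so that its conditional expectation vanishes. To do this I would check the It\^o isometry integrability condition
\[
\mathbb{E}_{Q}\!\left[\int_{0}^{T}\sigma^{2}(u)e^{-2\alpha(1-\beta_{1})(T-u)}du\right]<\infty,
\]
which by Tonelli reduces to establishing that $u\mapsto \mathbb{E}_{Q}[\sigma^{2}(u)]$ is bounded on $[0,T]$. This in turn follows from \eqref{EquSigmaDynam-Q}: taking $Q$-expectation, the Poisson stochastic integral is a true $Q$-martingale (by the independence established in the previous lemma together with Assumption~\ref{Assumption_Exp_Theta_L}, which guarantees $\kappa_{L}^{\prime}(\theta_{2})<\infty$), leaving the explicit bound
\[
\mathbb{E}_{Q}[\sigma^{2}(u)]=\sigma^{2}(0)e^{-\rho(1-\beta_{2})u}+\frac{\kappa_{L}^{\prime}(\theta_{2})}{\rho(1-\beta_{2})}\bigl(1-e^{-\rho(1-\beta_{2})u}\bigr),
\]
which is clearly bounded on $[0,T]$.

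Once this integrability is in hand, the conditional expectation of the $dW_{Q_{\bar{\theta},\bar{\beta}}}$-integral vanishes and collecting terms gives the claimed formula. The only genuine obstacle is this true-martingale verification; the rest is a direct substitution. I would also note in passing that, as a sanity check, the formula exhibits no dependence on $\sigma^{2}$, which is the unspanned-volatility feature of the arithmetic model mentioned in the introduction.
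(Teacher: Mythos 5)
Your proposal is correct and follows essentially the same route as the paper: reduce to the vanishing of the conditional expectation of the Wiener integral, verify the $L^2$ integrability via Tonelli and a bound on $\mathbb{E}_{Q}[\sigma^{2}(u)]$ obtained from \eqref{EquSigmaDynam-Q} using that the compensated Poisson integral is a true $Q$-martingale. The only cosmetic difference is that you compute $\mathbb{E}_{Q}[\sigma^{2}(u)]$ exactly while the paper settles for the cruder bound $\sigma^{2}(0)+\kappa_{L}^{\prime}(\theta_{2})u$, which has the minor advantage of remaining valid verbatim at $\beta_{2}=1$.
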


\begin{proof}
By equation $\left(  \ref{EquXDynam-Q}\right)  $ and using the basic
properties of the conditional expectation we have that%
\begin{align*}
F_{Q}(t,T) &  =\mathbb{E}_{Q}[S(T)|\mathcal{F}_{t}]=\Lambda_{a}%
(T)+X(t)e^{-\alpha(1-\beta_{1})(T-t)}\\
&  \qquad+\frac{\theta_{1}}{\alpha(1-\beta_{1})}(1-e^{-\alpha(1-\beta
_{1})(T-t)})\\
&  \qquad+\mathbb{E}_{Q}[\int_{t}^{T}\sigma(s)e^{-\alpha(1-\beta_{1}%
)(T-s)}dW_{Q}(s)|\mathcal{F}_{t}].
\end{align*}
Hence, the proof follows by showing that $\sigma(t)e^{\alpha(1-\beta_{1})t}$
belongs to $L^{2}(\Omega\times\lbrack0,T],Q\otimes dt)$because, then,
$\int_{0}^{t}\sigma(s)e^{\alpha(1-\beta_{1})s}dW_{Q}(s)$ is a $Q$-martingale
and%
\[
\mathbb{E}_{Q}[\int_{t}^{T}\sigma(s)e^{-\alpha(1-\beta_{1})(T-s)}%
dW_{Q}(s)|\mathcal{F}_{t}]=e^{-\alpha(1-\beta_{1})T}\mathbb{E}_{Q}[\int
_{t}^{T}\sigma(s)e^{\alpha(1-\beta_{1})s}dW_{Q}(s)|\mathcal{F}_{t}]=0.
\]

Using the dynamics of $\sigma^{2}$ under $Q,$ see equation
$(\ref{EquSigmaDynam-Q})$, we get%
\begin{align*}
\mathbb{E}_{Q}[\sigma^{2}(t)] &  =\sigma^{2}(0)e^{-\rho(1-\beta_{2})t}%
+\frac{\kappa_{L}^{\prime}(\theta_{2})}{\rho(1-\beta_{2})}(1-e^{-\rho
(1-\beta_{2})t})\\
&  \qquad+\mathbb{E}_{Q}[\int_{0}^{t}\int_{0}^{\infty}e^{-\rho(1-\beta
_{2})(t-s)}z\tilde{N}_{Q}^{L}(ds,dz)]\\
&  \leq\sigma^{2}(0)+\kappa_{L}^{\prime}(\theta_{2})t
\end{align*}
because $\int_{0}^{t}\int_{0}^{\infty}e^{-\rho(1-\beta_{2})s}z\tilde{N}%
_{Q}^{L}(ds,dz)$ is a $Q$-martingale starting at $0,$ see Lemma 4.3 in Benth
and Ortiz-Latorre \cite{BeO-L13}. Hence,%
\begin{align*}
\mathbb{E}_{Q}[\int_{0}^{T}\sigma^{2}(t)e^{2\alpha t}dt] &  =\int_{0}%
^{T}\mathbb{E}_{Q}[\sigma^{2}(t)]e^{2\alpha t}dt\\
&  \leq\int_{0}^{T}\left(  \sigma^{2}(0)+\kappa_{L}^{\prime}(\theta
_{2})t\right)  e^{2\alpha t}dt\\
&  \leq T\left(  \sigma^{2}(0)+\kappa_{L}^{\prime}(\theta_{2})T\right)
e^{2\alpha T}<\infty,
\end{align*}
and we can conclude.
\end{proof}

Using the previous result on forward prices we get the following formula for
the risk premium.

\begin{theorem}
\label{Theo_RiskPremiumArithm}The risk premium $R_{Q}^{F}(t,T)$ for the
forward price in the arithmetic spot model $\left(  \ref{Equ_Arith_Model}%
\right)  $ is given by%
\[
R_{Q}^{F}(t,T)=X(t)e^{-\alpha(T-t)}\left(  e^{\alpha\beta_{1}(T-t)}-1\right)
+\frac{\theta_{1}}{\alpha(1-\beta_{1})}(1-e^{-\alpha(1-\beta_{1})(T-t)}).
\]

\end{theorem}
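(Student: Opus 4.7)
The plan is to apply the definition of the risk premium $R_Q^F(t,T) = \mathbb{E}_Q[S(T)|\mathcal{F}_t] - \mathbb{E}_P[S(T)|\mathcal{F}_t]$, invoke the preceding proposition for the $Q$-conditional expectation, and compute the $P$-conditional expectation using the explicit representation \eqref{Equ_X_Explicit_P}. The difference will then simplify to the claimed formula after factoring $e^{-\alpha(T-t)}$ out of the $X(t)$ terms.

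More concretely, first I would note that $F_Q(t,T) = \Lambda_a(T) + X(t)e^{-\alpha(1-\beta_1)(T-t)} + \frac{\theta_1}{\alpha(1-\beta_1)}(1-e^{-\alpha(1-\beta_1)(T-t)})$ directly from the previous proposition. Second, using \eqref{Equ_X_Explicit_P} and the arithmetic model \eqref{Equ_Arith_Model}, I would write
\[
S(T) = \Lambda_a(T) + X(t)e^{-\alpha(T-t)} + \int_t^T \sigma(u)e^{-\alpha(T-u)}dW(u).
\]
Taking $P$-conditional expectation, the deterministic seasonal function and the $\mathcal{F}_t$-measurable term are kept, and the stochastic integral vanishes once one verifies that $u\mapsto \sigma(u)e^{-\alpha(T-u)}$ lies in $L^2(\Omega\times[0,T],P\otimes du)$. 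This is immediate: by \eqref{Equ_Sigma_Explicit_P} one has $\mathbb{E}_P[\sigma^2(u)] \leq \sigma^2(0) + \kappa_L'(0)u$ since the compensated Poisson integral is a $P$-martingale starting at zero, so the integrand has finite second moment uniformly on $[0,T]$. Hence $\mathbb{E}_P[S(T)|\mathcal{F}_t] = \Lambda_a(T) + X(t)e^{-\alpha(T-t)}$.

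Third, subtracting,
\[
R_Q^F(t,T) = X(t)\bigl(e^{-\alpha(1-\beta_1)(T-t)} - e^{-\alpha(T-t)}\bigr) + \frac{\theta_1}{\alpha(1-\beta_1)}\bigl(1-e^{-\alpha(1-\beta_1)(T-t)}\bigr),
\]
and pulling $e^{-\alpha(T-t)}$ out of the first bracket produces the stated expression. There is no serious obstacle here: the proof is essentially bookkeeping once one has the $Q$-forward price from the previous proposition, the only minor technicality being the $P$-martingale verification for the Wiener integral, which is markedly simpler than the corresponding verification under $Q$ because under $P$ the process $\sigma^2$ satisfies a Langevin equation driven by a genuine Lévy subordinator with explicit first-moment bounds.
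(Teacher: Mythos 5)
Your proposal is correct and follows essentially the same route the paper intends: the theorem is obtained by subtracting $\mathbb{E}_P[S(T)|\mathcal{F}_t]=\Lambda_a(T)+X(t)e^{-\alpha(T-t)}$ (the $\theta_1=\beta_1=0$ case, with the same square-integrability check on the Wiener integrand via the bound $\mathbb{E}_P[\sigma^2(u)]\le\sigma^2(0)+\kappa_L'(0)u$) from the forward price of the preceding proposition, and then factoring $e^{-\alpha(T-t)}$. The paper states this only as "using the previous result on forward prices," and your write-up supplies exactly the bookkeeping it has in mind.
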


We now analyse the risk premium in more detail under various conditions.

\subsection{Discussion on the risk premium}

The first remarkable property of this measure change is that it only depends
on the parameters that change the speed and level of mean reversion, i.e.,
$\theta_{1}$ and $\beta_{1}.$ Moreover, if $\theta_{1}=\beta_{1}=0$ we have
$R_{Q}^{F}(t,T)\equiv0,$ whatever the values of $\theta_{2}$ and $\beta_{2}$
This means that, in the arithmetic model, we can have very different pricing
measures regarding the volatility properties and have zero risk premium. In
other words, there is an unspanned volatility component that can not be
explained by just observing the forward curve. Secondly, as long as the
parameter $\beta_{1}\neq0,$ the risk premium is stochastic. Note that when
$\bar{\beta}=(0,0),$ our measure change coincides with the Esscher transform.
In the Esscher case, the risk premium has a deterministic evolution given by%
\begin{equation}
R_{Q}^{F}(t,T)=\frac{\theta_{1}}{\alpha}(1-e^{-\alpha(T-t)}),
\label{Equ_RPA_Esscher}%
\end{equation}
an already known result, see Benth \cite{Be11}.

From now on we shall rewrite the expressions for the risk premium in terms of
the time to maturity $\tau=T-t$ and, slightly abusing the notation, we will
write $R_{Q}^{F}(t,\tau)$ instead of $R_{Q}^{F}(t,t+\tau).$ We fix the
parameters of the model under the historical measure $P,$ i.e., $\alpha$ and
$\rho,$ and study the possible sign of $R_{Q}^{F}(t,\tau)$ in terms of the
change of measure parameters, i.e., $\bar{\beta}=(\beta_{1},\beta_{2})$ and
$\bar{\theta}=(\theta_{1},\theta_{2})$ and the time to maturity $\tau.$ In
fact, we just change $\theta_{1}$ and $\beta_{1}$ because the risk premium
does not depend on the values of $\theta_{2}$ and $\beta_{2}.$ Note that
present time $t$ just enters into the picture through the stochastic component
$X$ and not through the volatility process $\sigma^{2}(t).$ We are going to
study the cases $\theta_{1}=0,\beta_{1}=0$ and the general case separately.
Moreover, in order to graphically illustrate the discussion we plot the risk
premium profiles obtained assuming that the subordinator $L$ is a compound
Poisson process with jump intensity $c/\lambda>0$ and exponential jump sizes
with mean $\lambda.$ That is, $L$ will have the L\'{e}vy measure given in
Example \ref{Example_Subordinators}. We shall measure the time to maturity
$\tau$ in days and plot $R_{Q}^{F}(t,\tau)$ for $\tau\in\lbrack0,360],$
roughly one year. We fix the values of the following parameters%
\[
\alpha=0.127,\rho=1.11,c=0.4,\lambda=2.
\]
The speed of mean reversion for the factor $\alpha$ yields a half-life of
$\log(2)/0.127=5.47$ days, while the one for the volatility $\rho$ yields a
half-life of $\log(2)/1.11=0.65$ days (see e.g., Benth, \v{S}altyt\.{e} Benth
and Koekebakker \cite{BeSa-BeKo08} for the concept of half-life). The values
for $c$ and $\lambda$ give jumps with mean $0.5$ and frequency of $5$ spikes
in the volatility per month. The values for the speed of mean reversion are
obtained from an empirical analysis of the UK gas spot prices conducted in
Benth \cite{Be11}.

The following lemma will help in the discussion.

\begin{lemma}
\label{LemmaArithmetic}We have that
\begin{align*}
R_{Q}^{F}(t,\tau)  &  =X(t)e^{-\alpha\tau}\left(  e^{\alpha\beta_{1}\tau
}-1\right)  +\frac{\theta_{1}}{\alpha(1-\beta_{1})}(1-e^{-\alpha(1-\beta
_{1})\tau})\,.
\end{align*}
Moreover,
\[
\lim_{\tau\rightarrow\infty}R_{Q}^{F}(t,\tau) =\frac{\theta_{1}}%
{\alpha(1-\beta_{1})}\,,\qquad\text{and}\qquad\lim_{\tau\rightarrow0}%
\frac{\partial}{\partial\tau}R_{Q}^{F}(t,\tau) =X(t)\alpha\beta_{1}+\theta
_{1}\,.
\]

\end{lemma}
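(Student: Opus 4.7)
The plan is to observe that the first displayed equation is merely a rewriting of Theorem~\ref{Theo_RiskPremiumArithm} in the time-to-maturity variable $\tau = T-t$, so it requires no new argument; it is stated in the lemma only to fix the parameterization used in the discussion that follows.

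For the asymptotic limit I would first recast the stochastic part as $X(t)e^{-\alpha\tau}(e^{\alpha\beta_1 \tau}-1) = X(t)(e^{-\alpha(1-\beta_1)\tau} - e^{-\alpha\tau})$. Under the implicit restriction $\beta_1 \in [0,1)$ (which is also needed for the right-hand side of the limit to be well-defined), both exponents $-\alpha(1-\beta_1)\tau$ and $-\alpha\tau$ are strictly negative multiples of $\tau$, so the bracket tends to zero as $\tau\to\infty$ and the stochastic contribution vanishes. The deterministic piece $\frac{\theta_1}{\alpha(1-\beta_1)}(1-e^{-\alpha(1-\beta_1)\tau})$ converges to $\theta_1/(\alpha(1-\beta_1))$ by the same decay, yielding the stated long-maturity limit.

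For the short-maturity slope I would differentiate $R_Q^F(t,\tau)$ term-by-term in $\tau$ using the rewritten form above. The derivative of the stochastic part equals $X(t)(\alpha e^{-\alpha\tau} - \alpha(1-\beta_1)e^{-\alpha(1-\beta_1)\tau})$, which collapses at $\tau=0$ to $X(t)(\alpha - \alpha(1-\beta_1)) = X(t)\alpha\beta_1$. The derivative of the deterministic part equals $\theta_1 e^{-\alpha(1-\beta_1)\tau}$, which at $\tau=0$ is just $\theta_1$. Summing gives $X(t)\alpha\beta_1 + \theta_1$.

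There is no genuine obstacle here: once Theorem~\ref{Theo_RiskPremiumArithm} is in hand, the lemma reduces to elementary calculus on a closed-form expression, and the two limit statements are the only content to verify. The only point requiring care is the implicit assumption $\beta_1 < 1$, so that $1-\beta_1$ is strictly positive, both exponential decay rates are genuinely negative, and the denominator in the asymptotic formula does not collapse.
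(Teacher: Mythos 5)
Your proposal is correct and follows essentially the same route as the paper, which simply notes that the lemma follows easily from the expression in Theorem~\ref{Theo_RiskPremiumArithm}; your explicit rewriting $X(t)e^{-\alpha\tau}(e^{\alpha\beta_1\tau}-1)=X(t)(e^{-\alpha(1-\beta_1)\tau}-e^{-\alpha\tau})$ and the term-by-term differentiation are exactly the elementary calculus the authors leave implicit. Your remark that $\beta_1<1$ is needed for the decay rates and the denominator is a fair observation of an implicit assumption, and is consistent with the paper's later analysis.
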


\begin{proof}
Follows easily form the expression of $R_{Q}^{F}(t,\tau)$ in Theorem
\ref{Theo_RiskPremiumArithm}.
\end{proof}

\begin{itemize}
\item \textbf{Changing the level of mean reversion (Esscher transform)}:
Setting $\beta_{1}=0,$ the probability measure $Q$ only changes the level of
mean reversion for the factor $X$ (which is assumed to be zero under the
historical measure $P$). On the other hand, the risk premium is deterministic
and cannot change with changing market conditions. From equation $\left(
\ref{LemmaArithmetic}\right)  ,$ we get that the sign of $R_{Q}^{F}(t,\tau)$
is the same for any time to maturity $\tau$ and it is equal to the sign of
$\theta_{1}.$ See Figures \ref{Essch1} and \ref{Essch2}.

\item \textbf{Changing the speed of mean reversion:} Setting $\theta_{1}=0,$
the probability measure $Q$ only changes the speed of mean reversion for the
factor $X$. Note that in this case the risk premium is stochastic and it
changes with market conditions. By Lemma $\ref{LemmaArithmetic}$ we have that
the risk premium is given by
\[
R_{Q}^{F}(t,\tau)=X(t)e^{-\alpha\tau}\left(  e^{\alpha\beta_{1}\tau}-1\right)
,
\]
with $R^{Q}(t,\tau)\rightarrow0$ as time to maturity $\tau$ tends to infinity.
On the other hand, we have that
\begin{align*}
\lim_{\tau\rightarrow0}\frac{\partial}{\partial\tau}R_{Q}^{F}(t,\tau)  &
=X(t)\alpha\beta_{1}.
\end{align*}
Hence the risk premium will vanish in the long end of the market. In the short
end, it can be both positive or negative and stochastically varying with
$X(t)$. See Figure \ref{Speed1}, where the impact of $X(t)$ in the short end
is evident as a strongly increasing (from zero) risk premium. A negative value
of $X(t)$ would lead to a downward pointing risk premium, before converging to zero.

\item \textbf{Changing the level and speed of mean reversion simultaneously}:
In the general case we can get risk premium profiles with positive values in
the short end of the forward curve and negative values in the long end, by
choosing $\theta_{1}<0$ but close to zero and $\beta_{1}$ close to $1,$
assuming that $X(t)$ is positive. See Figure \ref{GeneralArit}. We recall from
Geman \cite{Geman} that there is empirical and economical evidence for a
positive risk premium in the short end of the power forward market, while in
the long end one expects the sign of the risk premium to be negative as is the
typical situation in commodity forward markets.

\end{itemize}

%

\begin{figure}
\subfloat[][$\theta_1=0.3,\beta_1=0.0$ ]{\includegraphics
[width=7.2cm]{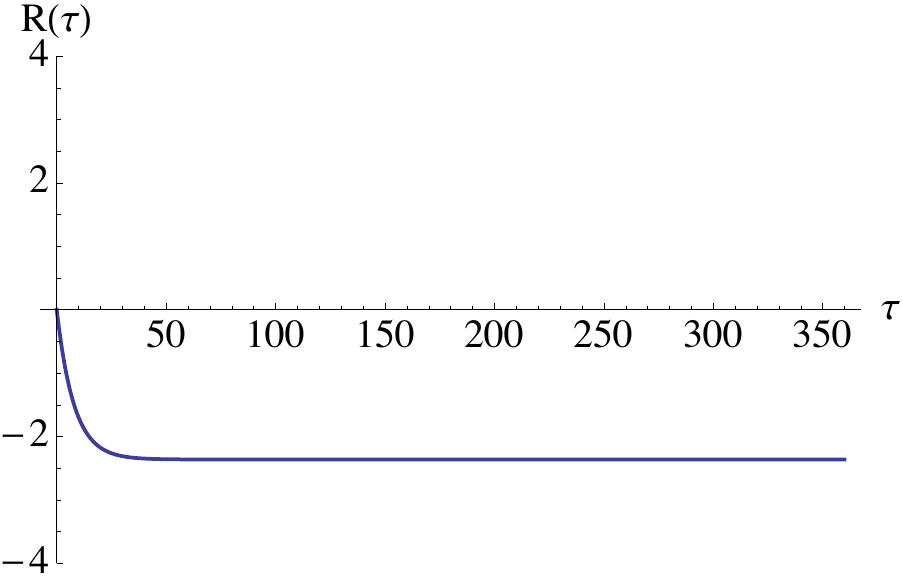}\label{Essch1}}
\qquad\subfloat[][$\theta_1=-0.3,\beta_1=0.0$]{\includegraphics
[width=7.2cm]{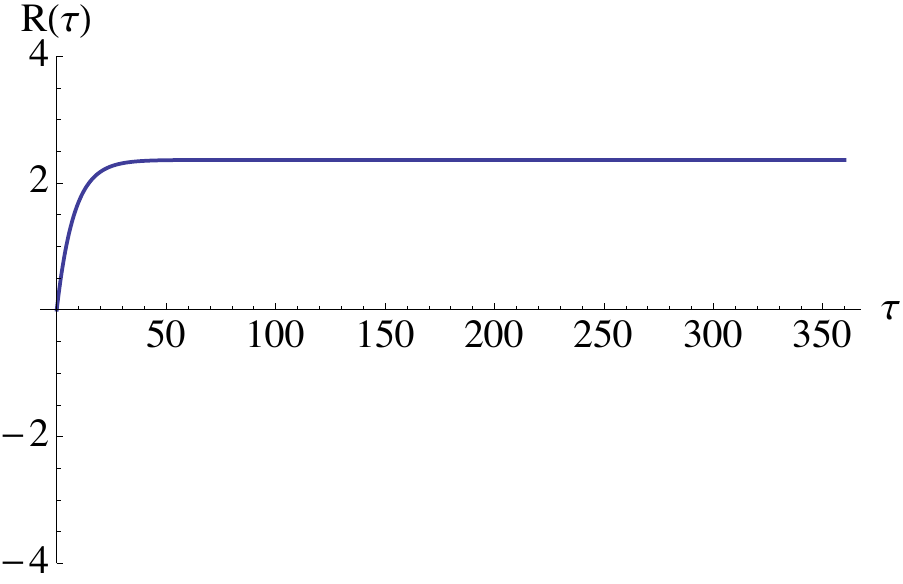}\label{Essch2}}
\\
\subfloat[][$\theta_1=0.0,\beta_1=0.9$ ]{\includegraphics
[width=7.2cm]{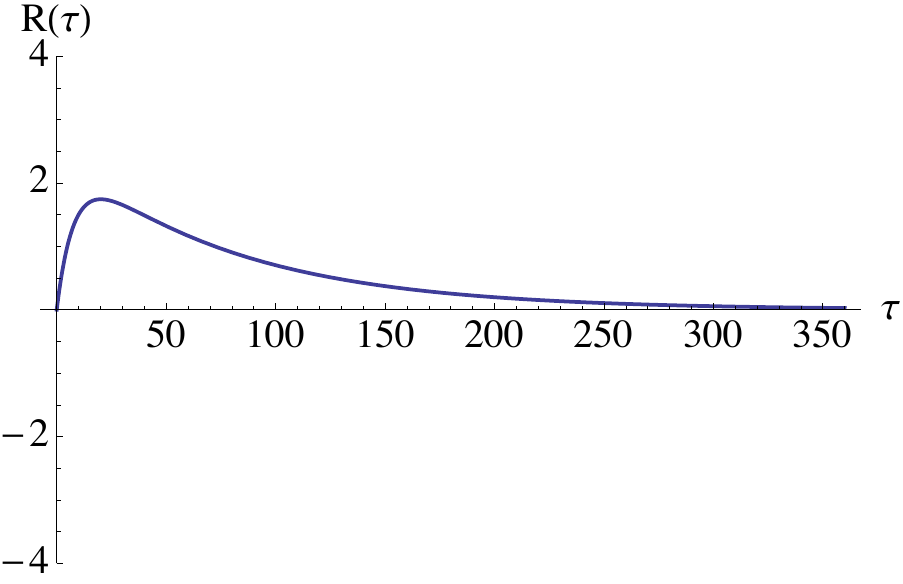}\label{Speed1}}
\qquad\subfloat[][$\theta_1=-0.04,\beta_1=0.9$]{\includegraphics
[width=7.2cm]{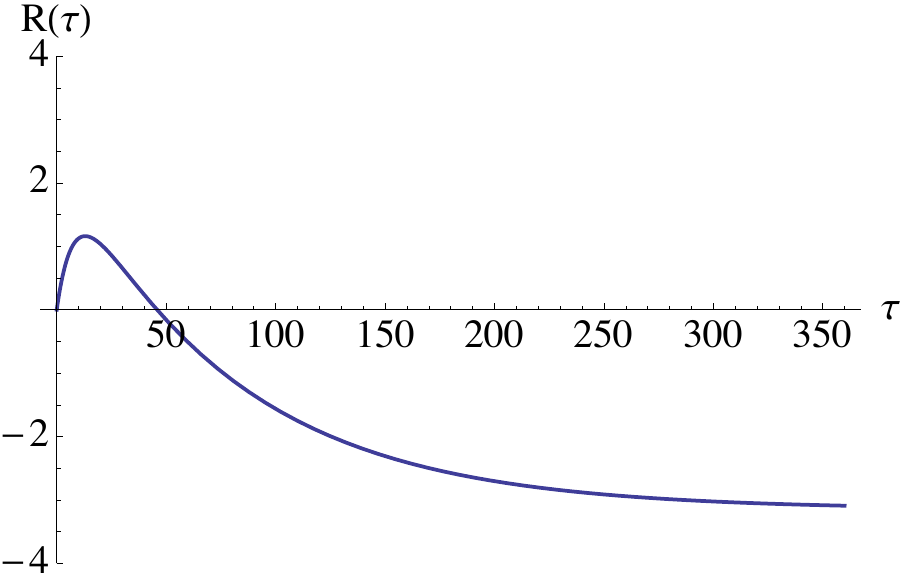}\label{GeneralArit}}
\caption
{Risk premium profiles when $L$ is a Compound Poisson process with exponentially distributed jumps.
We take $\rho=1.11,\alpha=0.127,\lambda=2,c=0.4,X(t)=2.5,\sigma(t)=0.25.$ }
\label{FigARP}
\end{figure}%

\begin{remark}
Note that in order to get a change of sign in the risk premium one must change
the level and speed of mean reversion simultaneously, see Figure
\ref{GeneralArit}. It is not possible to get the sign change by using solely
the Esscher transform, or only modifying the speed of mean reversion of the factors.
\end{remark}

\section{Geometric spot model}

The second model for the spot price $S$ is the geometric one. We define the
\textit{geometric spot price model} by
\begin{equation}
S(t)=\Lambda_{g}(t)\exp(X(t)),\quad t\in\lbrack0,T^{\ast}],
\label{Equ_Geom_Model}%
\end{equation}
where $T^{\ast}>0$ is a fixed time horizon. The process $\Lambda_{g}$ is
assumed to be deterministic and it accounts for the seasonalities observed in
the spot prices. The forward and the swap contracts are defined analogously to
the arithmetic model.

\begin{proposition}
\label{Prop_ForwardsGeometric}The forward price $F_{Q}(t,T)$ in the geometric
spot model $\left(  \ref{Equ_Geom_Model}\right)  $ is given by
\begin{align*}
F_{Q}(t,T) &  =\Lambda_{g}(T)\exp\left(  X(t)e^{-\alpha(1-\beta_{1}%
)(T-t)}+\sigma^{2}(t)e^{-\rho(1-\beta_{2})(T-t)}\frac{1-e^{-(2\alpha
-\rho(1-\beta_{2}))(T-t)}}{2(2\alpha-\rho(1-\beta_{2}))}\right)  \\
&  \qquad\times\exp\left(  \frac{\kappa_{L}^{\prime}(\theta_{2})}%
{2\rho(1-\beta_{2})}\left(  \frac{1-e^{-2\alpha(T-t)}}{2\alpha}-e^{-\rho
(1-\beta_{2})(T-t)}\frac{1-e^{-(2\alpha-\rho(1-\beta_{2}))(T-t)}}%
{(2\alpha-\rho(1-\beta_{2}))}\right)  \right)  \\
&  \qquad\times\exp\left(  \frac{\theta_{1}}{\alpha(1-\beta_{1})}%
(1-e^{-\alpha(1-\beta_{1})(T-t)})\right)  \\
&  \qquad\times\mathbb{E}_{Q}\left[  \exp\left(  \frac{e^{-2\alpha T}}{2}%
\int_{t}^{T}e^{(2\alpha-\rho(1-\beta_{2}))s}\left(  \int_{t}^{s}\int
_{0}^{\infty}e^{\rho(1-\beta_{2})u}z\tilde{N}_{Q}^{L}(du,dz)\right)
ds\right)  |\mathcal{F}_{t}\right]
\end{align*}
In the particular case $Q=P,$ it holds that%
\begin{align*}
F_{P}(t,T) &  =\Lambda_{g}(T)\exp\left(  X(t)e^{-\alpha(T-t)}+\sigma
^{2}(t)e^{-\rho(T-t)}\frac{1-e^{-(2\alpha-\rho)(T-t)}}{2(2\alpha-\rho
)}\right)  \\
&  \qquad\times\exp\left(  \int_{0}^{T-t}\kappa_{L}\left(  e^{-\rho s}%
\frac{1-e^{-(2\alpha-\rho)s}}{2(2\alpha-\rho)}\right)  ds\right)  .
\end{align*}

\end{proposition}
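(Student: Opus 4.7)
The plan is to evaluate $F_Q(t,T)=\Lambda_g(T)\,\mathbb{E}_Q[e^{X(T)}\mid\mathcal F_t]$ by iterated conditioning: first freeze the jump information and integrate out the Wiener randomness, then deal with the remaining jump randomness. First I would plug the explicit $Q$-representation \eqref{EquXDynam-Q} of $X(T)$ into $e^{X(T)}$ and factor out the two $\mathcal F_t$-measurable exponential pieces $\exp\{X(t)e^{-\alpha(1-\beta_1)(T-t)}\}$ and $\exp\{(\theta_1/(\alpha(1-\beta_1)))(1-e^{-\alpha(1-\beta_1)(T-t)})\}$; these produce the first and third exponential factors displayed in the statement.

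Next I would apply the tower property through the intermediate $\sigma$-algebra $\mathcal G_t\triangleq\mathcal F_t\vee\mathcal F_T^L$. Conditional on $\mathcal G_t$, the independence lemma just established guarantees that $W_{Q_{\bar\theta,\bar\beta}}$ is still a Brownian motion independent of the jump information, and the whole volatility path $\{\sigma(u)\}_{u\in[t,T]}$ is frozen. The stochastic integral $\int_t^T\sigma(u)e^{-\alpha(1-\beta_1)(T-u)}dW_{Q_{\bar\theta,\bar\beta}}(u)$ is therefore a centered Gaussian variable under $Q(\cdot\mid\mathcal G_t)$, and the Gaussian moment-generating function gives the inner conditional expectation as $\exp(\tfrac12\int_t^T\sigma^2(u)e^{-2\alpha(1-\beta_1)(T-u)}\,du)$.

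Then I would insert \eqref{EquSigmaDynam-Q} to decompose $\sigma^2(u)$ for $u\in[t,T]$ into two deterministic contributions driven respectively by $\sigma^2(t)$ and by $\kappa_L'(\theta_2)$, plus a stochastic integral against $\tilde N^L_{Q_{\bar\theta,\bar\beta}}$. The deterministic pieces integrate by elementary calculus (against $e^{-2\alpha(1-\beta_1)(T-u)}$) and produce precisely the $\sigma^2(t)$- and $\kappa_L'(\theta_2)$-exponential factors on the second line of the statement. The stochastic piece, after a stochastic-Fubini interchange of the $du$-integration with the $\tilde N^L_Q(dv,dz)$-integration, becomes the iterated integral whose exponential sits inside the final conditional expectation of the statement. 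For the special case $Q=P$ in the second formula, the step above is replaced, after the same stochastic-Fubini manoeuvre, by a direct application of the classical exponential formula $\mathbb E_P[\exp(\int_0^{T-t}f(s)\,dL(s))]=\exp(\int_0^{T-t}\kappa_L(f(s))\,ds)$, which collapses the residual expectation into the $\kappa_L$-integral stated.

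The main obstacle is precisely that residual expectation in the general $Q$-formula. Under $Q_{\bar\theta,\bar\beta}$ the process $L$ has $Q$-compensator $H_{\theta_2,\beta_2}(t,z)\ell(dz)dt$ (see Remark~\ref{RemarkAffineCharactUnderQ}), which depends on $\sigma^2(t-)$, so $L$ is no longer Lévy and the exponential of a stochastic integral against $\tilde N^L_Q$ cannot be closed by a Lévy–Khintchine argument. This is exactly why the authors leave the expectation unevaluated at this stage and, in the next section, invoke the Kallsen–Muhle-Karbe Riccati-equation theory for the affine process $Z=(\sigma^2,X)$ to obtain the final closed-form description.
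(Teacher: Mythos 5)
Your proposal is correct and follows essentially the same route as the paper: factor out the $\mathcal{F}_t$-measurable terms from the explicit $Q$-dynamics of $X$, condition on $\mathcal{F}_T^L\vee\mathcal{F}_t$ to compute the Gaussian conditional expectation of the Wiener integral, substitute the $Q$-dynamics of $\sigma^2$ to split deterministic and jump contributions, and for $Q=P$ close the remaining expectation with the exponential formula for Poisson random measures after a stochastic Fubini step. Your retention of the factor $e^{-2\alpha(1-\beta_1)(T-u)}$ in the conditional variance is in fact the consistent bookkeeping (the paper's displayed computation silently drops the $(1-\beta_1)$ midway), and your closing observation about why the residual $Q$-expectation cannot be evaluated by a L\'evy--Khintchine argument matches the paper's motivation for the subsequent Riccati analysis.
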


\begin{proof}
Denote by $\mathcal{F}_{t}^{L}$ the sigma algebra generated by the process $L$
up to time $t.$ Then, we have that
\begin{align*}
\mathbb{E}_{Q}[S(T)|\mathcal{F}_{t}]  &  =\Lambda_{g}(T)\mathbb{E}_{Q}%
[\exp(X(T))|\mathcal{F}_{t}]\\
&  =\Lambda_{g}(T)\exp\left(  X(t)e^{-\alpha(1-\beta_{1})(T-t)}+\frac
{\theta_{1}}{\alpha(1-\beta_{1})}(1-e^{-\alpha(1-\beta_{1})(T-t)}\right) \\
&  \qquad\times\mathbb{E}_{Q}\left[  \exp\left(  \int_{t}^{T}\sigma
(s)e^{-\alpha(1-\beta_{1})(T-s)}dW_{Q}(s)\right)  |\mathcal{F}_{t}\right] \\
&  =\Lambda_{g}(T)\exp\left(  X(t)e^{-\alpha(1-\beta_{1})(T-t)}+\frac
{\theta_{1}}{\alpha(1-\beta_{1})}(1-e^{-\alpha(1-\beta_{1})(T-t)}\right) \\
&  \qquad\times\mathbb{E}_{Q}\left[  \mathbb{E}_{Q}\left[  \exp\left(
\int_{t}^{T}\sigma(s)e^{-\alpha(T-s)}dW_{Q}(s)\right)  |\mathcal{F}_{T}%
^{L}\vee\mathcal{F}_{t}\right]  |\mathcal{F}_{t}\right] \\
&  =\Lambda_{g}(T)\exp\left(  X(t)e^{-\alpha(1-\beta_{1})(T-t)}+\frac
{\theta_{1}}{\alpha(1-\beta_{1})}(1-e^{-\alpha(1-\beta_{1})(T-t)}\right) \\
&  \qquad\times\mathbb{E}_{Q}\left[  \exp\left(  \frac{1}{2}\int_{t}^{T}%
\sigma^{2}(s)e^{-2\alpha(T-s)}ds\right)  |\mathcal{F}_{t}\right]  .
\end{align*}
On the other hand, the dynamics of $\sigma^{2}(s)$ can be written, for $s>t,$
as%
\begin{align*}
\sigma^{2}(s)  &  =\sigma^{2}(t)e^{-\rho(1-\beta_{2})(s-t)}+\frac{\kappa
_{L}^{\prime}(\theta_{2})}{\rho(1-\beta_{2})}(1-e^{-\rho(1-\beta_{2})(s-t)})\\
&  \qquad+\int_{t}^{s}\int_{0}^{\infty}e^{-\rho(1-\beta_{2})(s-u)}z\tilde
{N}_{Q}^{L}(du,dz).
\end{align*}
Then, we get%
\begin{align*}
\mathbb{E}_{Q}[S(T)|\mathcal{F}_{t}]  &  =\Lambda_{g}(T)\exp\left(
X(t)e^{-\alpha(1-\beta_{1})(T-t)}+\frac{\theta_{1}}{\alpha(1-\beta_{1}%
)}(1-e^{-\alpha(1-\beta_{1})(T-t)}\right) \\
&  \quad\times\mathbb{E}_{Q}\left[  \exp\left(  \frac{1}{2}\int_{t}%
^{T}\{\sigma^{2}(t)e^{-\rho(1-\beta_{2})(s-t)}+\frac{\kappa_{L}^{\prime
}(\theta_{2})}{\rho(1-\beta_{2})}(1-e^{-\rho(1-\beta_{2})(s-t)})\right.
\right. \\
&  \quad\quad+\left.  \left.  \int_{t}^{s}\int_{0}^{\infty}e^{-\rho
(1-\beta_{2})(s-u)}zN_{Q}^{L}(du,dz)\}e^{-2\alpha(T-s)}ds\right)
|\mathcal{F}_{t}\right] \\
&  =\Lambda_{g}(T)\exp\left(  X(t)e^{-\alpha(1-\beta_{1})(T-t)}+\sigma
^{2}(t)e^{-\rho(1-\beta_{2})(T-t)}\frac{1-e^{-(2\alpha-\rho(1-\beta
_{2}))(T-t)}}{2(2\alpha-\rho(1-\beta_{2}))}\right) \\
&  \quad\times\exp\left(  \frac{\kappa_{L}^{\prime}(\theta_{2})}{2\rho
(1-\beta_{2})}\left(  \frac{1-e^{-2\alpha(T-t)}}{2\alpha}-e^{-\rho(1-\beta
_{2})(T-t)}\frac{1-e^{-(2\alpha-\rho(1-\beta_{2}))(T-t)}}{(2\alpha
-\rho(1-\beta_{2}))}\right)  \right) \\
&  \quad\times\exp\left(  \frac{\theta_{1}}{\alpha(1-\beta_{1})}%
(1-e^{-\alpha(1-\beta_{1})(T-t)})\right) \\
&  \quad\times\mathbb{E}_{Q}\left[  \exp\left(  \frac{e^{-2\alpha T}}{2}%
\int_{t}^{T}e^{(2\alpha-\rho(1-\beta_{2}))s}\left(  \int_{t}^{s}\int
_{0}^{\infty}e^{\rho(1-\beta_{2})u}z\tilde{N}_{Q}^{L}(du,dz)\right)
ds\right)  |\mathcal{F}_{t}\right]
\end{align*}
Now, taking into account that $N_{Q}^{L}(du,dz)$ has independent increments
for $Q=P,$ using a stochastic version of Fubini's Theorem and the exponential
moments formula for Poisson random measures we obtain
\begin{align*}
\mathbb{E}_{P}  &  \left[  \exp\left(  \frac{e^{-2\alpha T}}{2}\int_{t}%
^{T}e^{(2\alpha-\rho)s}\left(  \int_{t}^{s}\int_{0}^{\infty}e^{\rho u}%
z\tilde{N}^{L}(du,dz)\right)  ds\right)  |\mathcal{F}_{t}\right] \\
&  \qquad=\mathbb{E}_{P}\left[  \exp\left(  \frac{e^{-2\alpha T}}{2}\int
_{t}^{T}e^{(2\alpha-\rho)s}\left(  \int_{t}^{s}\int_{0}^{\infty}e^{\rho
u}zN^{L}(du,dz)\right)  ds\right)  \right] \\
&  \qquad\qquad\times\exp\left(  -\frac{e^{-2\alpha T}}{2}\int_{t}%
^{T}e^{(2\alpha-\rho)s}\left(  \int_{t}^{s}\int_{0}^{\infty}e^{\rho u}%
z\ell(dz)du\right)  ds\right) \\
&  \qquad=\mathbb{E}_{P}\left[  \exp\left(  \int_{t}^{T}\int_{0}^{\infty
}e^{-\rho(T-u)}\frac{1-e^{-(2\alpha-\rho(T-u)}}{2(2\alpha-\rho)}%
zN^{L}(du,dz)\right)  \right] \\
&  \qquad\qquad\times\exp\left(  -\frac{e^{-2\alpha T}\kappa_{L}^{\prime}%
(0)}{2\rho}\int_{t}^{T}e^{(2\alpha-\rho)s}\left(  e^{\rho s}-e^{\rho
t}\right)  ds\right) \\
&  \qquad=\exp\left(  \int_{t}^{T}\int_{0}^{\infty}\left(  \exp\left(
e^{-\rho(T-u)}\frac{1-e^{-(2\alpha-\rho)(T-u)}}{2(2\alpha-\rho)}z\right)
-1\right)  \ell(dz)du\right) \\
&  \qquad\qquad\times\exp\left(  -\frac{e^{-2\alpha T}\kappa_{L}^{\prime}%
(0)}{2\rho}\left(  \frac{e^{2\alpha T}-e^{2\alpha t}}{2\alpha}-\frac
{e^{2\alpha T}e^{-\rho(T-t)}-e^{2\alpha t}}{2\alpha-\rho}\right)  \right) \\
&  \qquad=\exp\left(  \int_{0}^{T-t}\kappa_{L}\left(  e^{-\rho s}%
\frac{1-e^{-(2\alpha-\rho)s}}{2(2\alpha-\rho)}\right)  ds\right) \\
&  \qquad\qquad\times\exp\left(  -\frac{\kappa_{L}^{\prime}(0)}{2\rho}\left(
\frac{1-e^{-2\alpha(T-t)}}{2\alpha}+\frac{e^{-2\alpha(T-t)}-e^{-\rho(T-t)}%
}{2\alpha-\rho}\right)  \right)  .
\end{align*}
In the last equality we have used the definition of $\kappa_{L}(\theta)$ and
the change of variable $s=T-u.$ Finally, combining the previous expression
with the expression for $\mathbb{E}_{Q}[S(T)|\mathcal{F}_{t}]$ with $Q=P,$
i.e., with $\beta_{1}=\beta_{2}=\theta_{1}=\theta_{2}=0$ we get the result
\end{proof}

\begin{remark}
Note that $\mathbb{E}_{Q}[S(T)]$ can be infinite. In the case $Q=P,$ if
$\Theta_{L}=\infty,$ then $\mathbb{E}_{P}[S(T)]<\infty.$ However, if
$\Theta_{L}<\infty,$ then $\mathbb{E}_{P}[S(T)]<\infty$ if and only if%
\begin{equation}
\int_{0}^{T}\kappa_{L}\left(  e^{-\rho s}\frac{1-e^{-(2\alpha-\rho)s}%
}{2(2\alpha-\rho)}\right)  ds<\infty. \label{CondFiniteUnderP}%
\end{equation}

\end{remark}

Condition $\left(  \ref{CondFiniteUnderP}\right)  $ imposes some restrictions
on the parameters of the model $\alpha,\rho$ and $\Theta_{L}.$ For
$\alpha,\rho>0,$ consider the function $\Upsilon_{\alpha,\rho}(t)=e^{-\rho
t}\frac{1-e^{-(2\alpha-\rho)t}}{2(2\alpha-\rho)},t>0.$ It is easy to see that
this function is strictly positive and achieves its maximum at $t^{\ast}%
=-\log(\rho/2\alpha)/(2\alpha-\rho)$ with value $\Upsilon_{\alpha,\rho
}(t^{\ast})=\frac{1}{2\rho}\left(  \frac{\rho}{2\alpha}\right)  ^{\frac
{1}{1-\frac{\rho}{2\alpha}}}.$ Then, it is natural to impose the following
assumption on the model parameter that guarantees that condition $\left(
\ref{CondFiniteUnderP}\right)  $ is satisfied for all $T>0$:

\begin{assumption}
[$\mathcal{P}$]We assume that $\alpha,\rho>0$ and $\Theta_{L}$ satisfy
\[
\frac{1}{2\rho}\left(  \frac{\rho}{2\alpha}\right)  ^{\frac{1}{1-\frac{\rho
}{2\alpha}}}\leq\Theta_{L}-\delta,
\]
for some $\delta>0.$
\end{assumption}

Obviously, if $\Theta_{L}=\infty$ then assumption $\mathcal{P}$ is satisfied.
Suppose that $\Theta_{L}<\infty,$ then if we choose $\rho$ close to zero the
value of $\alpha$ must be bounded away from zero, and viceversa, for
assumption $\mathcal{P}$ to be satisfied.

The risk premium in the geometric case becomes:

\begin{theorem}
The risk premium $R_{Q}^{F}(t,T)$ for the forward price in the geometric spot
model $\left(  \ref{Equ_Geom_Model}\right)  $ is given by%
\begin{align*}
R_{Q}^{F}(t,T) &  =\mathbb{E}_{P}[S(T)|\mathcal{F}_{t}]\left\{  \exp\left(
X(t)e^{-\alpha(T-t)}(e^{\alpha\beta_{1}(T-t)}-1)\right)  \right.  \\
&  \qquad\times\left.  \exp\left(  \sigma^{2}(t)\left(  e^{-\rho(1-\beta
_{2})(T-t)}\frac{1-e^{-(2\alpha-\rho(1-\beta_{2}))(T-t)}}{2(2\alpha
-\rho(1-\beta_{2}))}-e^{-\rho(T-t)}\frac{1-e^{-(2\alpha-\rho)(T-t)}}%
{2(2\alpha-\rho)}\right)  \right)  \right.  \\
&  \qquad\times\exp\left(  \frac{\kappa_{L}^{\prime}(\theta_{2})}%
{2\rho(1-\beta_{2})}\left(  \frac{1-e^{-2\alpha(T-t)}}{2\alpha}-e^{-\rho
(1-\beta_{2})(T-t)}\frac{1-e^{-(2\alpha-\rho(1-\beta_{2}))(T-t)}}%
{(2\alpha-\rho(1-\beta_{2}))}\right)  \right)  \\
&  \qquad\times\exp\left(  \frac{\theta_{1}}{\alpha(1-\beta_{1})}%
(1-e^{-\alpha(1-\beta_{1})(T-t)})\right)  \\
&  \qquad\times\exp\left(  -\int_{0}^{T-t}\kappa_{L}\left(  e^{-\rho s}%
\frac{1-e^{-(2\alpha-\rho)s}}{2(2\alpha-\rho)}\right)  ds\right)  \\
&  \qquad\times\left.  \mathbb{E}_{Q}\left[  \exp\left(  \frac{e^{-2\alpha T}%
}{2}\int_{t}^{T}e^{(2\alpha-\rho)s}\left(  \int_{t}^{s}\int_{0}^{\infty
}e^{\rho u}zN_{Q}^{L}(du,dz)\right)  ds\right)  |\mathcal{F}_{t}\right]
-1\right\}
\end{align*}

\end{theorem}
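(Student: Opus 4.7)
My plan is to view the risk premium as the difference $F_Q(t,T)-F_P(t,T)$, factor out $\mathbb{E}_P[S(T)|\mathcal{F}_t]=F_P(t,T)$, and reduce the claim to a routine rearrangement of the two explicit forward-price formulas given in Proposition~\ref{Prop_ForwardsGeometric}. No new probabilistic input should be needed: both sides of the desired identity are already captured by that proposition, with $F_P$ corresponding to $\bar{\theta}=\bar{\beta}=0$.

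First I would write
\[
R_Q^F(t,T)=\mathbb{E}_P[S(T)|\mathcal{F}_t]\left\{\frac{F_Q(t,T)}{F_P(t,T)}-1\right\},
\]
which already matches the outer structure of the theorem statement. Then I would substitute the two expressions from Proposition~\ref{Prop_ForwardsGeometric}. The seasonality prefactor $\Lambda_g(T)$ cancels in the ratio. The two $X(t)$-exponentials combine via the elementary identity $e^{-\alpha(1-\beta_1)(T-t)}-e^{-\alpha(T-t)}=e^{-\alpha(T-t)}(e^{\alpha\beta_1(T-t)}-1)$, yielding the first factor inside the braces of the theorem. Similarly, the two $\sigma^{2}(t)$-exponentials combine term by term, giving the second factor as the difference between the $(\alpha,\rho(1-\beta_2))$-kernel and the $(\alpha,\rho)$-kernel evaluated at $\sigma^2(t)$.

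The remaining pieces then match directly: the $\kappa_L'(\theta_2)$-exponential, the $\theta_1$-exponential, and the conditional $Q$-expectation of the exponentiated double stochastic integral against $\tilde{N}_Q^L$ all come from $F_Q$ and have no counterpart in $F_P$, so they appear unchanged in the ratio; conversely, the sole surviving piece of $F_P$, namely $\exp(\int_0^{T-t}\kappa_L(e^{-\rho s}(1-e^{-(2\alpha-\rho)s})/2(2\alpha-\rho))\,ds)$, sits in the denominator and so contributes the exponential with the leading minus sign in the theorem. Collecting these six factors and subtracting one inside the braces yields exactly the stated expression.

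The main obstacle is not conceptual but purely bookkeeping: the formulas are long and one must be careful to track which exponents carry a factor $(1-\beta_1)$ or $(1-\beta_2)$ and which do not, so that no residual cross-terms are left behind when the two kernels are subtracted. A secondary point worth verifying is that the conditional $Q$-expectation of the Lévy-integral exponential is finite, but this is implicit in the finiteness of $F_Q(t,T)$ asserted in Proposition~\ref{Prop_ForwardsGeometric}, and under Assumption~$\mathcal{P}$ the corresponding $P$-quantity is finite as well, so the factorisation of $R_Q^F(t,T)$ is justified throughout $[0,T^\ast]$.
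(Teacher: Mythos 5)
Your proposal is correct and is exactly the paper's argument: the paper's proof consists of the single line ``This follows immediately from Proposition~\ref{Prop_ForwardsGeometric}'', i.e.\ writing $R_Q^F(t,T)=\mathbb{E}_P[S(T)|\mathcal{F}_t]\{F_Q(t,T)/F_P(t,T)-1\}$ and cancelling/combining the factors exactly as you describe. One small point in your favour: carrying out the bookkeeping as you propose reproduces the last factor in the form it appears in the Proposition (with $\tilde{N}_Q^L$ and the exponents $e^{(2\alpha-\rho(1-\beta_2))s}$, $e^{\rho(1-\beta_2)u}$), which indicates that the slightly different form printed in the Theorem (uncompensated $N_Q^L$ and exponents without the $(1-\beta_2)$) is a typographical slip in the paper rather than a defect of your argument.
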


\begin{proof}
This follows immediately from Proposition~\ref{Prop_ForwardsGeometric}.
\end{proof}

The risk premium in the geometric case becomes hard to analyse due to the
presence of the conditional expectation in the last term, involving the jump
process $N_{Q}^{L}$ with respect to $Q$. In the remainder of this Section we
shall rather exploit the affine structure of the model to analyse the risk premium.

\subsection{An analysis of the risk premium based on the affine structure}

An alternative way of computing $\mathbb{E}_{Q}[S(T)|\mathcal{F}_{t}],$ which
can provide semi-explicit expressions, is to use the affine structure of
$Z=(Z_{1}(t),Z_{2}(t))^{\top}=(\sigma^{2}(t),X(t))^{\top}.$ Let $\Lambda
_{i}^{\bar{\theta},\bar{\beta}}(u),i=0,1,2,$ be the L\'{e}vy exponents
associated to the affine characteristics in Remark
\ref{RemarkAffineCharactUnderQ}, i.e.,%
\begin{align*}
\Lambda_{0}^{\bar{\theta},\bar{\beta}}(u_{1},u_{2})  &  =\beta_{0}^{\top
}u+\frac{1}{2}u^{\top}\gamma_{0}u+\int(e^{u_{1}z_{1}+u_{2}z_{2}}-1-u_{1}%
z_{1}-u_{2}z_{2})\varphi_{0}(dz)\\
&  =\kappa_{L}^{\prime}(\theta_{2})u_{1}+\theta_{1}u_{2}+\int_{0}^{\infty
}(e^{u_{1}z_{1}}-1-u_{1}z_{1})e^{\theta_{2}z_{1}}\ell(dz_{1})\\
&  =\theta_{1}u_{2}+\kappa_{L}(u_{1}+\theta_{2})-\kappa_{L}(\theta_{2}),\\
\Lambda_{1}^{\bar{\theta},\bar{\beta}}(u_{1},u_{2})  &  =\beta_{1}^{\top
}u+u^{\top}\gamma_{1}u+\int(e^{u_{1}z_{1}+u_{2}z_{2}}-1-u_{1}z_{1}-u_{2}%
z_{2})\varphi_{1}(dz)\\
&  =-\rho(1-\beta_{2})u_{1}+\frac{u_{2}^{2}}{2}+\frac{\rho\beta_{2}}%
{\kappa_{L}^{\prime\prime}(\theta_{2})}\int_{0}^{\infty}(e^{u_{1}z_{1}%
}-1-u_{1}z_{1})z_{1}e^{\theta_{2}z_{1}}\ell(dz_{1})\\
&  =-\rho u_{1}+\frac{u_{2}^{2}}{2}+\frac{\rho\beta_{2}}{\kappa_{L}%
^{\prime\prime}(\theta_{2})}(\kappa_{L}^{\prime}(u_{1}+\theta_{2})-\kappa
_{L}^{\prime}(\theta_{2})),\\
\Lambda_{2}^{\bar{\theta},\bar{\beta}}(u_{1},u_{2})  &  =\beta_{2}^{\top
}u+u^{\top}\gamma_{2}u+\int(e^{u_{1}z_{1}+u_{2}z_{2}}-1-u_{1}z_{1}-u_{2}%
z_{2})\varphi_{2}(dz)\\
&  =-\alpha(1-\beta_{1})u_{2}.
\end{align*}
We find the following:

\begin{theorem}
\label{TheoRicattiAbstractQGeneral}Let $\bar{\beta}=(\beta_{1},\beta_{2}%
)\in\lbrack0,1]^{2},\bar{\theta}=(\theta_{1},\theta_{2})\in\bar{D}_{L}.$
Assume that there exist functions $\Psi_{i}^{\bar{\theta},\bar{\beta}%
},i=0,1,2$ belonging to $C^{1}([0,T];\mathbb{R}^{2})$ satisfying the
generalised Riccati equation%
\begin{equation}%
\begin{array}
[c]{lcc}%
\frac{d}{dt}\Psi_{1}^{\bar{\theta},\bar{\beta}}(t)=-\rho\Psi_{1}^{\bar{\theta
},\bar{\beta}}(t)+\frac{(\Psi_{2}^{\bar{\theta},\bar{\beta}}(t))^{2}}{2}%
+\frac{\rho\beta_{2}}{\kappa_{L}^{\prime\prime}(\theta_{2})}(\kappa
_{L}^{\prime}(\Psi_{1}^{\bar{\theta},\bar{\beta}}(t)+\theta_{2})-\kappa
_{L}^{\prime}(\theta_{2})), &  & \Psi_{1}^{\bar{\theta},\bar{\beta}}(0)=0,\\
\frac{d}{dt}\Psi_{2}^{\bar{\theta},\bar{\beta}}(t)=-\alpha(1-\beta_{1}%
)\Psi_{2}^{\bar{\theta},\bar{\beta}}(t), &  & \Psi_{2}^{\bar{\theta}%
,\bar{\beta}}(0)=1,\\
\frac{d}{dt}\Psi_{0}^{\bar{\theta},\bar{\beta}}(t)=\theta_{1}\Psi_{2}%
^{\bar{\theta},\bar{\beta}}(t)+\kappa_{L}(\Psi_{1}^{\bar{\theta},\bar{\beta}%
}(t)+\theta_{2})-\kappa_{L}(\theta_{2}), &  & \Psi_{0}^{\bar{\theta}%
,\bar{\beta}}(0)=0,
\end{array}
\label{EquRiccatiODEGeneral}%
\end{equation}
and the integrability condition%
\begin{equation}
\sup_{t\in\lbrack0,T]}\kappa_{L}^{\prime\prime}(\theta_{2}+\Psi_{1}%
^{\bar{\theta},\bar{\beta}}(t))<\infty.\label{Equ_Integrability_ODE}%
\end{equation}
Then,
\[
\mathbb{E}_{Q}[\exp(X(T))|\mathcal{F}_{t}]=\exp\left(  \Psi_{0}^{\bar{\theta
},\bar{\beta}}(T-t)+\Psi_{1}^{\bar{\theta},\bar{\beta}}(T-t)\sigma^{2}%
(t)+\Psi_{2}^{\bar{\theta},\bar{\beta}}(T-t)X(t)\right)  ,
\]
and%
\begin{align}
R_{Q}^{F}(t,T) &  =\mathbb{E}_{P}[S(T)|\mathcal{F}_{t}]\label{EquRiskPremium}%
\\
&  \qquad\times\left\{  \exp\left(  \Psi_{0}^{\bar{\theta},\bar{\beta}%
}(T-t)-\int_{0}^{T-t}\kappa_{L}\left(  e^{-\rho s}\frac{1-e^{-(2\alpha-\rho
)s}}{2(2\alpha-\rho)}\right)  ds\right.  \right.  \nonumber\\
&  \qquad+\left(  \Psi_{1}^{\bar{\theta},\bar{\beta}}(T-t)-e^{-\rho(T-t)}%
\frac{1-e^{-(2\alpha-\rho)(T-t)}}{2(2\alpha-\rho)}\right)  \sigma
^{2}(t)\nonumber\\
&  \qquad+\left.  \left.  \left(  \Psi_{2}^{\bar{\theta},\bar{\beta}%
}(T-t)-e^{-\alpha(T-t)}\right)  X(t)\right)  -1\right\}  .\nonumber
\end{align}

\end{theorem}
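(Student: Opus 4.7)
The plan is to apply the machinery of affine processes (following Kallsen and Muhle-Karbe \cite{KaMu-ka10}) to the two-dimensional affine process $Z=(\sigma^2,X)$ whose $Q_{\bar\theta,\bar\beta}$-differential characteristics are listed in Remark~\ref{RemarkAffineCharactUnderQ} and whose L\'evy exponents $\Lambda_0^{\bar\theta,\bar\beta},\Lambda_1^{\bar\theta,\bar\beta},\Lambda_2^{\bar\theta,\bar\beta}$ are computed just before the theorem. I want a Laplace transform of the form
\[
\mathbb{E}_Q\!\left[\exp(u_1\sigma^2(T)+u_2X(T))\mid\mathcal{F}_t\right]=\exp\!\left(\Psi_0(T-t)+\Psi_1(T-t)\sigma^2(t)+\Psi_2(T-t)X(t)\right),
\]
for the initial datum $u=(0,1)^\top$; then the first Riccati equation arises from $\Lambda_1$, the second from $\Lambda_2$, and the third (for $\Psi_0$) from $\Lambda_0$, with precisely the initial conditions in \eqref{EquRiccatiODEGeneral}.

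Concretely, I would set $M(t)=\exp(\Psi_0(T-t)+\Psi_1(T-t)\sigma^2(t)+\Psi_2(T-t)X(t))$ and apply It\^o's formula using the $Q$-dynamics \eqref{EquXDynam-Q}--\eqref{EquSigmaDynam-Q} together with the $Q$-compensator $H_{\theta_2,\beta_2}(s,z)\ell(dz)ds$. The drift terms collect into $-\frac{d}{ds}\Psi_i(T-t)+\Lambda_i^{\bar\theta,\bar\beta}(\Psi_1(T-t),\Psi_2(T-t))$ multiplied by $1,\sigma^2(t),X(t)$ respectively, which vanish by the Riccati ODEs \eqref{EquRiccatiODEGeneral}. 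Hence $M$ is a local $Q$-martingale. To upgrade this to a true martingale I would use the integrability condition \eqref{Equ_Integrability_ODE}: since the only nontrivial jump contribution comes from the $\sigma^2$-component and its $Q$-compensator has intensity $e^{\theta_2 z}\ell(dz)$ multiplied by a bounded factor, the bound $\sup_t \kappa_L''(\theta_2+\Psi_1^{\bar\theta,\bar\beta}(t))<\infty$ yields uniform $L^1$-bounds on the jump part and hence uniform integrability of $M$ on $[0,T]$ (this is the standard admissibility condition invoked in \cite{KaMu-ka10}). Martingality plus $M(T)=\exp(X(T))$ then gives the exponential-affine representation for $\mathbb{E}_Q[\exp(X(T))\mid\mathcal{F}_t]$.

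For the risk premium, I would write $R_Q^F(t,T)=\Lambda_g(T)\mathbb{E}_Q[e^{X(T)}\mid\mathcal{F}_t]-\mathbb{E}_P[S(T)\mid\mathcal{F}_t]$ and factor out $\mathbb{E}_P[S(T)\mid\mathcal{F}_t]$. Observe that specializing the above to $Q=P$ (i.e.\ $\bar\theta=\bar\beta=0$) one checks, by direct integration of \eqref{EquRiccatiODEGeneral}, that the corresponding solutions are
\[
\Psi_2^{0,0}(s)=e^{-\alpha s},\qquad \Psi_1^{0,0}(s)=e^{-\rho s}\frac{1-e^{-(2\alpha-\rho)s}}{2(2\alpha-\rho)},\qquad \Psi_0^{0,0}(s)=\int_0^s\kappa_L\!\left(\Psi_1^{0,0}(u)\right)du,
\]
matching exactly the explicit formula for $F_P(t,T)$ in Proposition~\ref{Prop_ForwardsGeometric}. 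Subtracting exponents inside the braces then reproduces \eqref{EquRiskPremium}.

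The main obstacle I anticipate is the martingale upgrade of $M$: checking that the local martingale is genuinely a true martingale under $Q_{\bar\theta,\bar\beta}$, rather than only a supermartingale, via the integrability bound \eqref{Equ_Integrability_ODE}. The rest is bookkeeping: the It\^o computation for $M$, the identification of the Riccati coefficients with $\Lambda_i^{\bar\theta,\bar\beta}$, and the algebraic manipulation to match \eqref{EquRiskPremium}.
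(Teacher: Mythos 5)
Your proposal is correct and follows essentially the same route as the paper: the paper simply invokes Theorem 5.1 of Kallsen and Muhle-Karbe (after the change of variable $t\mapsto T-t$, with $p=(0,1)$, using \eqref{Equ_Integrability_ODE} to verify its integrability hypotheses), whereas you sketch the underlying verification argument --- It\^o's formula applied to the candidate exponential-affine local martingale, followed by the local-to-true martingale upgrade --- that is precisely how that theorem is established in this special case. Your treatment of the risk premium, via the explicit $\bar\theta=\bar\beta=(0,0)$ solutions of \eqref{EquRiccatiODEGeneral} matching $F_P(t,T)$ from Proposition~\ref{Prop_ForwardsGeometric} and subtracting exponents, is exactly the paper's ``follows easily'' step.
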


\begin{proof}
The result is a consequence of Theorem 5.1 in Kallsen and
Muhle-Karbe~\cite{KaMu-ka10}: Making the change of variable $t\rightarrow
T-t$, the ODE \eqref{EquRiccatiODEGeneral} is reduced to the one appearing in
items 2. and 3. of Theorem 5.1 in Kallsen and Muhle-Karbe~\cite{KaMu-ka10}.
The integrability assumption \eqref{Equ_Integrability_ODE} implies conditions
1. and 5., in Theorem 5.1, and condition 4. in that same Theorem is trivially
satisfied because $\sigma^{2}(0)$ and $X(0)$ are deterministic. Hence, the
conclusion of Theorem 5.1 in Kallsen and Muhle-Karbe~\cite{KaMu-ka10}, with
$p=(0,1),$ holds and we get%
\begin{equation}
\mathbb{E}_{Q}[\exp(X(T))|\mathcal{F}_{t}]=\exp\left(  \Psi_{0}^{\bar{\theta
},\bar{\beta}}(T-t)+\Psi_{1}^{\bar{\theta},\bar{\beta}}(T-t)\sigma^{2}%
(t)+\Psi_{2}^{\bar{\theta},\bar{\beta}}(T-t)X(t)\right)  ,\quad t\in
\lbrack0,T]. \label{Equ_Exp_Affine_Formula}%
\end{equation}
The result on the risk premium now follows easily.
\end{proof}

A couple of remarks are in place.

\begin{remark}
The applicability of Theorem \ref{TheoRicattiAbstractQGeneral} is quite
limited as it is stated. This is due to the fact that it is very difficult to
see a priori if there exist functions $\Psi_{i}^{\bar{\theta},\bar{\beta}%
},i=0,1,2$ belonging to $C^{1}([0,T];\mathbb{R}^{2})$ satisfying equation
$\left(  \ref{EquRiccatiODEGeneral}\right)  .$ One has to study existence and
uniqueness of solutions of equation $\left(  \ref{EquRiccatiODEGeneral}%
\right)  $ and the possibility of extending the solution to arbitrary large
$T>0.$ We study this problem in Theorem \ref{TheoRicattiParticularQGeneral}.
\end{remark}

\begin{remark}
\label{RemarkPropertiesSystemODEsQGeneral}Note that the previous system of
autonomous ODEs can be effectively reduced to a one dimensional non autonomous
ODE. We have that for any $\bar{\theta}\in\bar{D}_{L},\bar{\beta}\in
\lbrack0,1]^{2}$, the solution of the second equation is given by $\Psi
_{2}^{\bar{\theta},\bar{\beta}}(t)=\exp(-\alpha(1-\beta_{1})t)$.
Plugging this solution to the first equation we get the following equation to
solve for $\Psi_{1}^{\bar{\theta},\bar{\beta}}(t)$%
\begin{equation}
\frac{d}{dt}\Psi_{1}^{\bar{\theta},\bar{\beta}}(t)=-\rho\Psi_{1}^{\bar{\theta
},\bar{\beta}}(t)+\frac{e^{-2\alpha(1-\beta_{1})t}}{2}+\frac{\rho\beta_{2}%
}{\kappa_{L}^{\prime\prime}(\theta_{2})}(\kappa_{L}^{\prime}(\Psi_{1}%
^{\bar{\theta},\bar{\beta}}(t)+\theta_{2})-\kappa_{L}^{\prime}(\theta_{2})),
\label{Equ_1D_ODE}%
\end{equation}
with initial condition $\Psi_{1}^{\bar{\theta},\bar{\beta}}(0)=0.$ The
equation for $\Psi_{0}^{\bar{\theta},\bar{\beta}}(t)$ is solved by integrating
$\Lambda_{0}^{\bar{\theta},\bar{\beta}}(\Psi_{1}^{\bar{\theta},\bar{\beta}%
}(t),\Psi_{2}^{\bar{\theta},\bar{\beta}}(t)),$ i.e.,%
\begin{align*}
\Psi_{0}^{\bar{\theta},\bar{\beta}}(t)  &  =\int_{0}^{t}\{\theta_{1}\Psi
_{2}^{\bar{\theta},\bar{\beta}}(s)+\kappa_{L}(\Psi_{1}^{\bar{\theta}%
,\bar{\beta}}(s)+\theta_{2})-\kappa_{L}(\theta_{2})\}ds\\
&  =\theta_{1}\frac{1-e^{-\alpha(1-\beta_{1})t}}{\alpha(1-\beta_{1})}+\int
_{0}^{t}\{\kappa_{L}(\Psi_{1}^{\bar{\theta},\bar{\beta}}(s)+\theta_{2}%
)-\kappa_{L}(\theta_{2})\}ds.
\end{align*}

\end{remark}

As we have already indicated, we cannot in general find the explicit solution
of the system of ODEs in Theorem~\ref{TheoRicattiAbstractQGeneral}, and has to
rely on numerical techniques. However, the main problem is to ensure the
existence and uniqueness of global solutions. Before stating our main result
on this question, we introduce some notation and a technical lemma.

\begin{lemma}
\label{LemmaLambda(a)}Let $\Lambda^{\theta,\beta,a}:[0,\Theta_{L}%
-\theta)\rightarrow\mathbb{R}$ be the function defined by%
\begin{equation}
\Lambda^{\theta,\beta,a}(u)=-\rho u+a+\frac{\rho\beta}{\kappa_{L}%
^{\prime\prime}(\theta)}(\kappa_{L}^{\prime}(u+\theta)-\kappa_{L}^{\prime
}(\theta)), \label{EquDefLambda_a}%
\end{equation}
where $a\geq0,(\theta,\beta)\in D_{L}\times(0,1)$ and consider the set%
\[
\mathcal{D}_{b}(a)=\{(\theta,\beta)\in D_{L}\times(0,1):\exists u\in
\lbrack0,\Theta_{L}-\theta)\quad s.t.\quad\Lambda^{\theta,\beta,a}(u)\leq0\}.
\]
Then, we have that:

\begin{enumerate}
\item For any $(\theta,\beta)\in D_{L}\times(0,1),$ there exists a unique
global minimum of the function $\Lambda^{\theta,\beta,a}(u)$ which is attained
at%
\begin{equation}
u^{m}(\theta,\beta)=\left(  \kappa_{L}^{\prime\prime}\right)  ^{-1}\left(
\frac{\kappa_{L}^{\prime\prime}(\theta)}{\beta}\right)  -\theta,\label{Equ_um}%
\end{equation}
with value%
\begin{align}
\Lambda^{\theta,\beta,a}(u^{m}(\theta,\beta)) &  =-\rho\left(  \left(
\kappa_{L}^{\prime\prime}\right)  ^{-1}\left(  \frac{\kappa_{L}^{\prime\prime
}(\theta_{2})}{\beta_{2}}\right)  -\theta_{2}\right)  +a\label{Equ_Lambda(um)}%
\\
&  \qquad+\left(  \frac{\rho\beta_{2}}{\kappa_{L}^{\prime\prime}(\theta_{2}%
)}(\kappa_{L}^{\prime}(\left(  \kappa_{L}^{\prime\prime}\right)  ^{-1}\left(
\frac{\kappa_{L}^{\prime\prime}(\theta_{2})}{\beta_{2}}\right)  )-\kappa
_{L}^{\prime}(\theta_{2}))\right)  .\nonumber
\end{align}

\item The function $\Lambda^{\theta,\beta,a}(u)$ is strictly decreasing in
$(0,u^{m}(\theta,\beta))$ and strictly increasing in $(u^{m}(\theta
,\beta),\Theta_{L}-\theta).$

\item For $\theta\in D_{L}$ fixed, one has that $u^{m}(\theta,\beta
)\uparrow\Theta_{L}-\theta$ when $\beta\downarrow0$ and $u^{m}(\theta
,\beta)\downarrow0$ when $\beta\uparrow1.$

\item The set $\mathcal{D}_{b}(a)$ coincides with the set
\[
\{(\theta,\beta)\in D_{L}\times(0,1):\Lambda^{\theta,\beta,a}(u^{m}%
(\theta,\beta))\leq0\}.
\]
Moreover, for $a>0,$ we have the following:

\begin{enumerate}
\item If $\theta\in D_{L}$ is such that $\theta>\Theta_{L}-a/\rho$ then
$\nexists\beta\in(0,1)$ such that $(\theta,\beta)\in\mathcal{D}_{b}(a).$

\item If $\theta\in D_{L}$ is such that $\theta<\Theta_{L}-a/\rho$ then there
exists a unique $0<\beta_{m}<1$ such that%
\begin{equation}
\Lambda^{\theta,\beta,a}(u^{m}(\theta,\beta_{m}))=0, \label{Equ_Beta_m}%
\end{equation}
and for all $\beta\in\lbrack0,\beta_{m}]$ one has $(\theta,\beta
)\in\mathcal{D}_{b}(a).$
\end{enumerate}

\item For $(\theta,\beta)\in\mathcal{D}_{b}(a),$ $a>0$ there exists a unique
zero of $\Lambda^{\theta,\beta,a}(u),$ denoted by $u_{a}^{0}(\theta,\beta).$
As a function of $\beta,$ $u_{a}^{0}(\theta,\beta)$ is well defined\ on
$[0,\beta_{m}],$ strictly increasing, with $u_{a}^{0}(\theta,0)=a/\rho$ and
$u_{a}^{0}(\theta,\beta_{m})=u^{m}(\theta,\beta_{m}).$
\end{enumerate}
\end{lemma}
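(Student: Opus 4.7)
My approach is to first observe that the problem is almost entirely a one-variable convex-analysis exercise on $[0,\Theta_L-\theta)$, and then lift those properties to statements about the parameter $\beta$. I would start by computing
\[
\tfrac{d}{du}\Lambda^{\theta,\beta,a}(u)=-\rho+\tfrac{\rho\beta}{\kappa_L''(\theta)}\kappa_L''(u+\theta),\qquad
\tfrac{d^2}{du^2}\Lambda^{\theta,\beta,a}(u)=\tfrac{\rho\beta}{\kappa_L''(\theta)}\kappa_L'''(u+\theta).
\]
By Remark \ref{Remark_D_Cumulants}, $\kappa_L'''>0$ on $(-\infty,\Theta_L)$, so $\kappa_L''$ is strictly increasing and $\Lambda^{\theta,\beta,a}$ is strictly convex. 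Setting the first derivative to zero and using that $\kappa_L''(\theta)/\beta>\kappa_L''(\theta)$ for $\beta\in(0,1)$ gives a unique critical point $u^m(\theta,\beta)\in(0,\Theta_L-\theta)$ given by \eqref{Equ_um}, which is the global minimum by convexity; substituting yields the value \eqref{Equ_Lambda(um)}. Item 2 follows immediately from strict convexity combined with the location of the unique minimum. For item 3, continuity of $(\kappa_L'')^{-1}$ gives $u^m\downarrow 0$ as $\beta\uparrow 1$ (since $\kappa_L''(\theta)/\beta\downarrow\kappa_L''(\theta)$), and the divergence $\kappa_L''(\eta)\uparrow+\infty$ as $\eta\uparrow\Theta_L$ (which holds for all the Lévy measures of Example \ref{Example_Subordinators}) gives $u^m\uparrow\Theta_L-\theta$ as $\beta\downarrow 0$.

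For item 4, the set-equality is immediate because $\Lambda^{\theta,\beta,a}(u)\le 0$ for some $u$ if and only if $\Lambda^{\theta,\beta,a}(u^m(\theta,\beta))\le 0$. For part (a), if $\theta>\Theta_L-a/\rho$ then every $u\in[0,\Theta_L-\theta)$ satisfies $-\rho u+a>0$, and the remaining contribution $\tfrac{\rho\beta}{\kappa_L''(\theta)}(\kappa_L'(u+\theta)-\kappa_L'(\theta))$ is non-negative (as $\kappa_L'$ is increasing and $u\ge 0$), so $\Lambda^{\theta,\beta,a}>0$ for every admissible $\beta$. For part (b) with $\theta<\Theta_L-a/\rho$, define $G(\beta)\triangleq\Lambda^{\theta,\beta,a}(u^m(\theta,\beta))$. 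The envelope theorem yields
\[
G'(\beta)=\tfrac{\rho}{\kappa_L''(\theta)}\bigl(\kappa_L'(u^m(\theta,\beta)+\theta)-\kappa_L'(\theta)\bigr)>0,
\]
since $u^m>0$ and $\kappa_L'$ is strictly increasing. As $\beta\uparrow 1$, $u^m\to 0$ and $G(\beta)\to \Lambda^{\theta,1,a}(0)=a>0$. To produce a $\beta$ with $G(\beta)<0$, I fix any $u_0\in(a/\rho,\Theta_L-\theta)$ (nonempty by the assumption on $\theta$); then $\Lambda^{\theta,\beta,a}(u_0)\to -\rho u_0+a<0$ as $\beta\downarrow 0$, so $G(\beta)\le\Lambda^{\theta,\beta,a}(u_0)<0$ for small enough $\beta$. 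By the intermediate value theorem and strict monotonicity of $G$, a unique $\beta_m\in(0,1)$ solves $G(\beta_m)=0$, and $(\theta,\beta)\in\mathcal{D}_b(a)$ exactly for $\beta\in[0,\beta_m]$.

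For item 5, I restrict to the decreasing branch $[0,u^m(\theta,\beta)]$, on which $\Lambda^{\theta,\beta,a}$ runs continuously and strictly monotonically from $\Lambda^{\theta,\beta,a}(0)=a>0$ down to $\Lambda^{\theta,\beta,a}(u^m)\le 0$; the intermediate value theorem therefore yields a unique zero $u_a^0(\theta,\beta)\in[0,u^m]$. The endpoint values $u_a^0(\theta,0)=a/\rho$ and $u_a^0(\theta,\beta_m)=u^m(\theta,\beta_m)$ are by direct inspection (the $\beta=0$ equation is affine, and at $\beta=\beta_m$ the minimum itself vanishes). Strict monotonicity of $u_a^0(\theta,\cdot)$ on $(0,\beta_m)$ follows from the implicit function theorem applied to $\Lambda^{\theta,\beta,a}(u)=0$: at $u=u_a^0$, the $\beta$-partial equals $\tfrac{\rho}{\kappa_L''(\theta)}(\kappa_L'(u_a^0+\theta)-\kappa_L'(\theta))>0$ (as $u_a^0>0$) while the $u$-partial is strictly negative on the open decreasing branch, giving $\partial_\beta u_a^0>0$; continuity then extends the monotonicity to the endpoints.

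I expect the main obstacle to lie in item 4(b), specifically in exhibiting a $\beta$ with $G(\beta)<0$. A direct computation of $\lim_{\beta\downarrow 0}G(\beta)$ via $u^m\to\Theta_L-\theta$ forces one to control the ratio $(\kappa_L'(\eta)-\kappa_L'(\theta))/\kappa_L''(\eta)$ as $\eta\uparrow\Theta_L$, whose behaviour is sensitive to the tail of $\ell$. The detour through a fixed auxiliary point $u_0\in(a/\rho,\Theta_L-\theta)$ sidesteps this issue because only the pointwise limit $\Lambda^{\theta,\beta,a}(u_0)\to -\rho u_0+a$ is needed, which is elementary.
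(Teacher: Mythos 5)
Your proposal is correct and, for items 1--3, 4(a) and 5, follows essentially the same route as the paper: differentiate, use $\kappa_L'''>0$ to get strict convexity and the explicit minimizer, and obtain the zero $u_a^0$ on the decreasing branch by the intermediate value theorem together with an implicit-function-theorem computation of $\partial_\beta u_a^0$ (your formula for the $\beta$-partial is in fact cleaner than the paper's, which contains a typo). The genuine divergence is in item 4(b). The paper works with the root rather than the minimum: it starts from the explicit zero $u_a^0(0)=a/\rho$ of the affine case $\beta=0$ and continues it in $\beta$ via the implicit function theorem, the continuation surviving exactly while $\beta\kappa_L''(\theta+u_a^0(\beta))<\kappa_L''(\theta)$, i.e.\ while the root stays on the decreasing branch; the existence and uniqueness of $\beta_m$ is then essentially read off (somewhat implicitly) from where this continuation terminates. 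You instead study the minimum-value function $G(\beta)=\Lambda^{\theta,\beta,a}(u^m(\theta,\beta))$ directly, get $G'>0$ from the envelope theorem, identify $G(\beta)\to a>0$ as $\beta\uparrow1$, and produce $G<0$ for small $\beta$ by evaluating at a fixed auxiliary point $u_0\in(a/\rho,\Theta_L-\theta)$ --- a nice device, since it avoids controlling $\kappa_L'$ near $\Theta_L$. Your route gives a cleaner and more complete proof of the existence and uniqueness of $\beta_m$, which the paper largely asserts; the paper's route has the advantage of delivering item 5 (monotonicity of $u_a^0$ in $\beta$ up to $\beta_m$) and item 4(b) in one stroke. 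Two shared caveats, neither of which counts against you: both proofs implicitly need $\kappa_L''(\eta)\uparrow\infty$ as $\eta\uparrow\Theta_L$ for the minimizer to lie inside $[0,\Theta_L-\theta)$ (you at least flag this in item 3; the paper does not), and both interpret ``unique zero'' in item 5 as the unique zero on the decreasing branch $[0,u^m]$, since $\Lambda$ may cross zero again on the increasing branch.
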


\begin{proof}
\textit{Proof of }$1.$: According to Remark \ref{Remark_D_Cumulants} we have
that%
\begin{align}
\frac{d}{du}\Lambda^{\theta,\beta,a}(u)  &  =-\rho+\rho\beta\frac{\kappa
_{L}^{\prime\prime}(u+\theta)}{\kappa_{L}^{\prime\prime}(\theta)},\nonumber\\
\frac{d^{2}}{du^{2}}\Lambda^{\theta,\beta,a}(u)  &  =\rho\beta\frac{\kappa
_{L}^{\prime\prime\prime}(u+\theta)}{\kappa_{L}^{\prime\prime}(\theta
)}>0,\nonumber
\end{align}
which yields that there exists a unique $0<u^{\ast}(\theta,\beta)<\Theta
_{L}-\theta$ for $\theta\in D_{L}$ and $\beta\in(0,1)$ such that
$\Lambda^{\theta,\beta,a}(u)$ attains a global minimum. In fact, $u^{\ast
}(\theta_{2},\beta_{2})$ solves%
\[
1=\beta\frac{\kappa_{L}^{\prime\prime}(u+\theta)}{\kappa_{L}^{\prime\prime
}(\theta)}.
\]
Moreover, by Remark \ref{Remark_D_Cumulants} again, we have that $\kappa
_{L}^{\prime\prime}(u)$ is a strictly increasing function and, hence, it has a
well defined inverse $\left(  \kappa_{L}^{\prime\prime}\right)  ^{-1}(v)$
which yields that $u^{m}(\theta,\beta)$ and $\Lambda^{\theta,\beta,a}%
(u^{m}(\theta,\beta))$ are given by equations $\left(  \ref{Equ_um}\right)  $
and $\left(  \ref{Equ_Lambda(um)}\right)  ,$ respectively.

\textit{Proof of} $2.:$ It follows from the fact that $\frac{d}{du}%
\Lambda^{\theta,\beta,a}(u)<0,u\in(0,u^{m}(\theta_{2},\beta_{2}))$ and
$\frac{d}{du}\Lambda^{\theta,\beta,a}(u)>0,(u^{m}(\theta,\beta),\Theta
_{L}-\theta)$.

\textit{Proof of} $3.:$ It follows from the monotonicity of $\kappa
_{L}^{\prime\prime}(u)$ and the explicit expression of $u^{m}(\theta,\beta)$
given by equation $\left(  \ref{Equ_um}\right)  .$ Note that, as a function of
$\beta,$ $u^{m}(\theta,\beta)$ is a strictly decreasing, continuous function
in $(0,1).$

\textit{Proof of} $4.$ and $5.:$
For any $\theta\in D_{L},$ note that%
\[
a-\rho u\triangleq\Lambda^{\theta,0,a}(u)\leq\Lambda^{\theta,\beta,a}%
(u),\quad\beta\in(0,1),u\in\lbrack0,\Theta_{L}-\theta),
\]
and $\Lambda^{\theta,0,a}(u)>0$ if $u\in(0,a/\rho).$ Therefore, if
$\theta>\Theta_{L}-a/\rho$ then $\Lambda^{\theta,\beta,a}(u)>0,u\in
\lbrack0,\Theta_{L}-\theta)$ for any $\beta\in(0,1).$ On the other hand, if
$\theta<\Theta_{L}-a/\rho$ we have that $a/\rho\in\lbrack0,\Theta_{L}%
-\theta).$ Moreover, defining the function $F(u,\beta)=\Lambda^{\theta
,\beta,a}(u)$ and taking into account that $F(a/\rho,0)=0$ and,
\begin{align*}
\left.  \frac{\partial}{\partial u}F(u,\beta)\right\vert _{(u,\beta
)=(a/\rho,0)}  &  =\left.  \left(  -\rho+\rho\beta\frac{\kappa_{L}%
^{\prime\prime}(u+\theta)}{\kappa_{L}^{\prime\prime}(\theta)}\right)
\right\vert _{(u,\beta)=(a/\rho,0)}=-\rho<0,
\end{align*}
we can apply the implicit function theorem to the equation $F(u,\beta)=0$ that
ensures that there exists a neighborhood $U$ of $(a/\rho,0)$ in which we can
write $u_{a}^{0}=u_{a}^{0}(\beta),$ the root of $F(u,\beta)=0,$ as a function
of $\beta.$ Moreover, in $U,$ we have that
\begin{align*}
\frac{\partial}{\partial\beta}u_{a}^{0}(\beta)  &  =-\frac{\frac{\partial
}{\partial\beta}F(u_{a}^{0}(\beta),\beta)}{\frac{\partial}{\partial u}%
F(u_{a}^{0}(\beta),\beta)}=-\frac{\frac{\rho}{\kappa_{L}^{\prime\prime}%
(\theta)}(\theta+\kappa_{L}^{\prime}(u_{a}^{0}(\beta))-\kappa_{L}^{\prime
}(\theta))}{-\rho+\rho\beta\frac{\kappa_{L}^{\prime\prime}(u_{a}^{0}%
(\beta)+\theta)}{\kappa_{L}^{\prime\prime}(\theta)}}\\
&  =\frac{u_{a}^{0}(\beta)\int_{0}^{1}\kappa_{L}^{\prime\prime}(\theta+\lambda
u_{a}^{0}(\beta))d\lambda}{\kappa_{L}^{\prime\prime}(\theta)-\beta\kappa
_{L}^{\prime\prime}(\theta+u_{a}^{0}(\beta))},
\end{align*}
which is positive as long as
\[
\beta<\frac{\kappa_{L}^{\prime\prime}(\theta)}{\kappa_{L}^{\prime\prime
}(\theta+u_{a}^{0}(\beta))}.
\]
This yields that $u_{a}^{0}(\beta)$ is a well defined and strictly increasing
function of $\beta$ for $\beta\in\lbrack0,\beta_{m}]$, where $\beta_{m}$ is
the root of equation
\[
\Lambda^{\theta,\beta,a}(u^{m}(\theta,\beta_{m}))=0.
\]
Moreover, $u_{a}^{0}(0)=a/\rho$ and $u_{a}^{0}(\beta_{m})=u_{a}^{m}%
(\theta,\beta_{m}).$
If $\Lambda^{\theta,\beta,a}(u^{m}(\theta,\beta))<0,$ the existence of
$u^{0}(\theta,\beta)$ follows from Bolzano's Theorem and the uniqueness from
the fact that $\Lambda^{\theta,\beta,a}(u)$ is strictly decreasing in
$(0,u^{m}(\theta,\beta)).$
\end{proof}

We can now state our main result:

\begin{theorem}
\label{TheoRicattiParticularQGeneral} If $(\theta_{2},\beta_{2})\in
\mathcal{D}_{b}(1/2)$ and $(\theta_{1},\beta_{1})\in\mathbb{R}\times
\lbrack0,1)$ then $(\Psi_{0}^{\bar{\theta},\bar{\beta}}(t),\Psi_{1}%
^{\bar{\theta},\bar{\beta}}(t),\Psi_{2}^{\bar{\theta},\bar{\beta}}(t))$ are
$C^{1}([0,T];\mathbb{R})$ for any $T>0.$ Moreover,%
\[
(\Psi_{0}^{\bar{\theta},\bar{\beta}}(t),\Psi_{1}^{\bar{\theta},\bar{\beta}%
}(t),\Psi_{2}^{\bar{\theta},\bar{\beta}}(t))\longrightarrow\left(
\frac{\theta_{1}}{\alpha(1-\beta_{1})}+\int_{0}^{\infty}\{\kappa_{L}(\Psi
_{1}^{\bar{\theta},\bar{\beta}}(t)+\theta_{2})-\kappa_{L}(\theta
_{2})\}ds,0,0\right)  ,\quad t\rightarrow\infty,
\]
and%
\[
t^{-1}\log\left\Vert (\Psi_{1}^{\bar{\theta},\bar{\beta}}(t),\Psi_{2}%
^{\bar{\theta},\bar{\beta}}(t))\right\Vert \rightarrow\gamma,\quad
t\rightarrow\infty,
\]
where $\gamma=-\alpha(1-\beta_{1})$ or $\gamma=-\rho(1-\beta_{2}).$
\end{theorem}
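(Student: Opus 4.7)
The plan is to exploit the triangular structure of the Riccati system already noted in Remark~\ref{RemarkPropertiesSystemODEsQGeneral}. The second ODE has the closed-form solution $\Psi_{2}^{\bar{\theta},\bar{\beta}}(t)=e^{-\alpha(1-\beta_{1})t}$, which is smooth and decays at the exponential rate $\alpha(1-\beta_{1})>0$. Substituting into the first equation reduces the analysis to the scalar non-autonomous Cauchy problem
\begin{equation*}
\tfrac{d}{dt}\Psi_{1}^{\bar{\theta},\bar{\beta}}(t)=\Lambda^{\theta_{2},\beta_{2},\,e^{-2\alpha(1-\beta_{1})t}/2}(\Psi_{1}^{\bar{\theta},\bar{\beta}}(t)),\qquad \Psi_{1}^{\bar{\theta},\bar{\beta}}(0)=0,
\end{equation*}
where $\Lambda^{\theta,\beta,a}$ is the function from Lemma~\ref{LemmaLambda(a)}; once $\Psi_{1}^{\bar{\theta},\bar{\beta}}$ is controlled, $\Psi_{0}^{\bar{\theta},\bar{\beta}}$ is recovered by the explicit integration given in Remark~\ref{RemarkPropertiesSystemODEsQGeneral}.

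For global existence and uniqueness, the key step is to establish forward invariance of the compact interval $[0,u_{1/2}^{0}(\theta_{2},\beta_{2})]\subset[0,\Theta_{L}-\theta_{2})$ for the $\Psi_{1}$-equation. Since $\Lambda^{\theta,\beta,a}$ is strictly increasing in $a$ and $e^{-2\alpha(1-\beta_{1})t}/2\leq 1/2$, the vector field at the upper endpoint is dominated by $\Lambda^{\theta_{2},\beta_{2},1/2}(u_{1/2}^{0})=0$, while at the lower endpoint it equals $e^{-2\alpha(1-\beta_{1})t}/2>0$; this is a Nagumo-type inward condition. Because $\kappa_{L}$ is analytic on $(-\infty,\Theta_{L})$ by Remark~\ref{Remark_Cumulants}, the vector field is locally Lipschitz on this compact set, so Picard-Lindel\"{o}f yields a unique $C^{1}$ solution on every $[0,T]$; the integrability condition~\eqref{Equ_Integrability_ODE} follows from continuity of $\kappa_{L}^{\prime\prime}$ on the compact set $[\theta_{2},\theta_{2}+u_{1/2}^{0}]$, and the regularity of $\Psi_{0}^{\bar{\theta},\bar{\beta}}$ is immediate by direct integration.

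For the asymptotic limit I would prove $\Psi_{1}^{\bar{\theta},\bar{\beta}}(t)\to 0$ by a barrier argument. For any small $\epsilon>0$ one has $(\theta_{2},\beta_{2})\in\mathcal{D}_{b}(\epsilon)$ by the trivial monotonicity $\mathcal{D}_{b}(1/2)\subseteq\mathcal{D}_{b}(\epsilon)$, and once $e^{-2\alpha(1-\beta_{1})t}/2\leq\epsilon$ the same invariance reasoning yields $\limsup_{t\to\infty}\Psi_{1}^{\bar{\theta},\bar{\beta}}(t)\leq u_{\epsilon}^{0}(\theta_{2},\beta_{2})$. Applying the implicit function theorem to $F(u,a)\triangleq\Lambda^{\theta_{2},\beta_{2},a}(u)$ at $(0,0)$, with $\partial_{u}F(0,0)=-\rho(1-\beta_{2})\neq 0$, gives $u_{\epsilon}^{0}(\theta_{2},\beta_{2})\to 0$ as $\epsilon\downarrow 0$, and hence $\Psi_{1}^{\bar{\theta},\bar{\beta}}(t)\to 0$. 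The exponential decay of $\Psi_{1}^{\bar{\theta},\bar{\beta}}$ then makes $\kappa_{L}(\Psi_{1}^{\bar{\theta},\bar{\beta}}(s)+\theta_{2})-\kappa_{L}(\theta_{2})=O(\Psi_{1}^{\bar{\theta},\bar{\beta}}(s))$ integrable on $[0,\infty)$, delivering the stated limit of $\Psi_{0}^{\bar{\theta},\bar{\beta}}$.

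For the exponential rate, Taylor expansion yields $\Lambda^{\theta_{2},\beta_{2},0}(u)=-\rho(1-\beta_{2})u+O(u^{2})$, so for large $t$ the $\Psi_{1}$-equation is a perturbation of the linear ODE $\phi'=-\rho(1-\beta_{2})\phi+e^{-2\alpha(1-\beta_{1})t}/2$, whose explicit solution decays at rate $\min(\rho(1-\beta_{2}),2\alpha(1-\beta_{1}))$. Combined with the known rate $\alpha(1-\beta_{1})$ of $\Psi_{2}^{\bar{\theta},\bar{\beta}}$, the norm $\|(\Psi_{1}^{\bar{\theta},\bar{\beta}}(t),\Psi_{2}^{\bar{\theta},\bar{\beta}}(t))\|$ decays at exactly the rate $\min(\alpha(1-\beta_{1}),\rho(1-\beta_{2}))$, producing $\gamma\in\{-\alpha(1-\beta_{1}),-\rho(1-\beta_{2})\}$. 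The matching lower rate follows directly from the convexity of $\kappa_{L}^{\prime}$ (cf.\ Remark~\ref{Remark_Cumulants}), which gives the linear lower bound $\Lambda^{\theta_{2},\beta_{2},a}(u)\geq -\rho(1-\beta_{2})u+a$, and from $\|(\Psi_{1}^{\bar{\theta},\bar{\beta}},\Psi_{2}^{\bar{\theta},\bar{\beta}})\|\geq |\Psi_{2}^{\bar{\theta},\bar{\beta}}|=e^{-\alpha(1-\beta_{1})t}$. The main technical obstacle is to rigorously transfer the linear asymptotics through the quadratic remainder; I would handle this by a Gr\"{o}nwall-type bootstrap that uses the already-established smallness of $\Psi_{1}^{\bar{\theta},\bar{\beta}}$ to absorb the $O(\Psi_{1}^{2})$ term into an arbitrarily small reduction of the linear decay rate.
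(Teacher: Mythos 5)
Your proposal is correct in substance and follows the paper's overall strategy (reduce to the scalar non-autonomous ODE for $\Psi_{1}^{\bar{\theta},\bar{\beta}}$, trap the solution in $[0,u_{1/2}^{0}(\theta_{2},\beta_{2})]$ using that the forcing $e^{-2\alpha(1-\beta_{1})t}/2$ is dominated by $1/2$, then recover $\Psi_{0}^{\bar{\theta},\bar{\beta}}$ by integration), but it diverges from the paper in the two asymptotic steps. For the convergence $\Psi_{1}^{\bar{\theta},\bar{\beta}}(t)\to 0$ the paper passes to the two-dimensional system and exhibits an invariant rectangle $R_{\bar{\theta},\bar{\beta}}(\delta)$ inside the domain of attraction of the origin, whereas you stay one-dimensional and shrink the barrier $u_{\epsilon}^{0}(\theta_{2},\beta_{2})\downarrow 0$ via the implicit function theorem; this is more elementary, but note that forward invariance of $[0,u_{\epsilon}^{0}]$ alone does not give $\limsup_{t}\Psi_{1}^{\bar{\theta},\bar{\beta}}(t)\leq u_{\epsilon}^{0}$ --- you must add the (easy) attractivity observation that for $t$ large the field is negative and bounded away from zero on $[u_{\epsilon}^{0}+\delta,\,u_{1/2}^{0}]$ (it is majorized by $\Lambda^{\theta_{2},\beta_{2},\epsilon}$, which is strictly decreasing up to $u^{m}$ and vanishes at $u_{\epsilon}^{0}$), so the solution is forced into the smaller interval in finite time. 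For the Lyapunov exponent the paper rewrites \eqref{Equ_2D_ODE} as a perturbed linear system and invokes Hartman (Theorem 3.1(i), Chapter VII), which only asserts that $\gamma$ is one of the eigenvalues; your Gr\"{o}nwall bootstrap, combined with the two-sided bounds $-\rho(1-\beta_{2})u+a\leq\Lambda^{\theta_{2},\beta_{2},a}(u)\leq -(\rho(1-\beta_{2})-\epsilon)u+a$ near $u=0$ and the explicit $\Psi_{2}^{\bar{\theta},\bar{\beta}}(t)=e^{-\alpha(1-\beta_{1})t}$, is more self-contained and actually sharper, identifying $\gamma=-\min(\alpha(1-\beta_{1}),\rho(1-\beta_{2}))$ rather than merely $\gamma\in\{-\alpha(1-\beta_{1}),-\rho(1-\beta_{2})\}$. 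The price is that the bootstrap is the one step you only sketch; it does go through, but it is where the real work sits if you write it out in full.
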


\begin{proof}
First, recall from Remark~\ref{RemarkPropertiesSystemODEsQGeneral} that the
existence and uniqueness of the system of ODEs in Theorem
\ref{TheoRicattiAbstractQGeneral} can be reduced to establish existence and
uniqueness for the one dimensional non autonomous equation \eqref{Equ_1D_ODE}.
We have to study the time dependent vector field
\[
\tilde{\Lambda}_{1}^{\bar{\theta},\bar{\beta}}(t,u)\triangleq-\rho
u+\frac{e^{-2\alpha(1-\beta_{1})t}}{2}+\frac{\rho\beta_{2}}{\kappa_{L}%
^{\prime\prime}(\theta_{2})}(\kappa_{L}^{\prime}(u+\theta_{2})-\kappa
_{L}^{\prime}(\theta_{2})),\quad\bar{\beta}\in(0,1)^{2},\quad\bar{\theta}%
\in\bar{D}_{L}.
\]
Consider
\[
\mathcal{D}(\tilde{\Lambda}_{1}^{\bar{\theta},\bar{\beta}})\triangleq
\mathrm{int}(\{u\in\mathbb{R}:\tilde{\Lambda}_{1}^{\bar{\theta},\bar{\beta}%
}(t,u)<\infty\})=\mathrm{int}(\{u\in\mathbb{R}:\kappa_{L}^{\prime}%
(u+\theta_{2})<\infty\})=(-\infty,\Theta_{L}-\theta_{2}),
\]
and define
\[
\mathcal{D}\triangleq\mathrm{int}(%
{\displaystyle\bigcap\limits_{\bar{\beta}\in(0,1)]^{2},\bar{\theta}\in\bar
{D}_{L}}}
\mathcal{D}(\tilde{\Lambda}_{1}^{\bar{\theta},\bar{\beta}}))=(-\infty
,\Theta_{L}-\Theta_{L}/2)=(-\infty,\Theta_{L}/2).
\]
On the other hand, for $u,v\in\mathcal{D}(\tilde{\Lambda}_{1}^{\bar{\theta
},\bar{\beta}}),$ one has that
\[
\left\vert \tilde{\Lambda}_{1}^{\bar{\theta},\bar{\beta}}(t,u)-\tilde{\Lambda
}_{1}^{\bar{\theta},\bar{\beta}}(t,v)\right\vert \leq\rho\left\vert
u-v\right\vert +\frac{\rho\beta_{2}}{\kappa_{L}^{\prime\prime}(\theta_{2}%
)}\int_{0}^{\infty}\left\vert e^{uz}-e^{vz}\right\vert ze^{\theta_{2}z}%
\ell(dz),
\]
and%
\[
\int_{0}^{\infty}\left\vert e^{uz}-e^{vz}\right\vert ze^{\theta_{2}z}%
\ell(dz)\leq|u-v|\int_{0}^{\infty}e^{(u\vee v+\theta_{2})z}z^{2}\ell(dz),
\]
Moreover, note that
\[
\mathrm{int}(\{u\in\mathbb{R}:\int_{0}^{\infty}z^{2}e^{(u+\theta_{2})z}%
\ell(dz)<\infty\})=(-\infty,\Theta_{L}-\theta_{2})=\mathcal{D}(\tilde{\Lambda
}_{1}^{\bar{\theta},\bar{\beta}}).
\]
Hence, the vector field $\tilde{\Lambda}_{1}^{\bar{\theta},\bar{\beta}%
}(t,u),\bar{\theta}\in\bar{D}_{L},\bar{\beta}\in\lbrack0,1]^{2}$ is well
defined (i.e., finite) and locally Lipschitz in $\mathcal{D}(\tilde{\Lambda
}_{1}).$ Then, by the Picard-Lindel\"{o}f Theorem (see Theorem 3.1, page~18,
in Hale \cite{Ha69}) we have local existence and uniqueness for $\Psi
_{1}^{\bar{\theta},\bar{\beta}}(t)$ with $\Psi_{1}^{\bar{\theta},\bar{\beta}%
}(0)=0\in\mathcal{D}(\tilde{\Lambda}_{1}).$

Let us consider the autonomous vector field%
\[
\hat{\Lambda}_{1}^{\theta_{2},\beta_{2}}(u)\triangleq-\rho u+\frac{1}{2}%
+\frac{\rho\beta_{2}}{\kappa_{L}^{\prime\prime}(\theta_{2})}(\kappa
_{L}^{\prime}(u+\theta_{2})-\kappa_{L}^{\prime}(\theta_{2})),\quad\beta_{2}%
\in(0,1),\quad\theta_{2}\in D_{L}.
\]
Then, as $\hat{\Lambda}_{1}^{\theta_{2},\beta_{2}}(u)-\tilde{\Lambda}%
_{1}^{\bar{\theta},\bar{\beta}}(t,u)=\frac{1}{2}(1-e^{-2\alpha t})\geq0,$ for
$u\geq0,$ using a comparison theorem we have that the solution for the ODE
associated to $\tilde{\Lambda}_{1}^{\bar{\theta},\bar{\beta}}(t,u)$ and
starting at $0$ is bounded above by the corresponding solution to the ODE
associated to $\hat{\Lambda}_{1}^{\theta_{2},\beta_{2}}(u),$ which we will
denote by $\hat{\Psi}_{1}^{\theta_{2},\beta_{2}}(t).$ By Lemma
\ref{LemmaLambda(a)}, if $(\theta_{2},\beta_{2})\in\mathcal{D}_{b}(1/2)$ there
exists a unique $2/\rho<u_{1/2}^{0}(\theta_{2},\beta_{2})\leq u^{m}(\theta
_{2},\beta_{2})$ such that $\hat{\Lambda}_{1}^{\theta,\beta}(u)>0,$ for
$u\in(0,u_{1/2}^{0}(\theta_{2},\beta_{2}))$ and $\hat{\Lambda}_{1}%
^{\theta,\beta}(u_{1/2}^{0}(\theta_{2},\beta_{2}))=0.$ This yields that the
solution $\hat{\Psi}_{1}^{\theta_{2},\beta_{2}}(t)$ is defined for
$t\in\lbrack0,+\infty)$ and monotonously converges to $u_{1/2}^{0}(\theta
_{2},\beta_{2}),$ which is a stationary point of $\hat{\Lambda}_{1}%
^{\theta,\beta}(u).$ Hence, the solution $\Psi_{1}^{\bar{\theta},\bar{\beta}%
}(t)$ is bounded by $u_{1/2}^{0}(\theta_{2},\beta_{2})$ and defined for
$t\in\lbrack0,+\infty).$ To prove that actually $\Psi_{1}^{\bar{\theta}%
,\bar{\beta}}(t)$ converges to zero it is convenient to look at the $2$
dimensional system%
\begin{equation}%
\begin{array}
[c]{lcc}%
\frac{d}{dt}\Psi_{1}^{\bar{\theta},\bar{\beta}}(t)=-\rho\Psi_{1}^{\bar{\theta
},\bar{\beta}}(t)+\frac{(\Psi_{2}^{\bar{\theta},\bar{\beta}}(t))^{2}}{2}%
+\frac{\rho\beta_{2}}{\kappa_{L}^{\prime\prime}(\theta_{2})}(\kappa
_{L}^{\prime}(\Psi_{1}^{\bar{\theta},\bar{\beta}}(t)+\theta_{2})-\kappa
_{L}^{\prime}(\theta_{2})), &  & \Psi_{1}^{\bar{\theta},\bar{\beta}}(0)=0,\\
\frac{d}{dt}\Psi_{2}^{\bar{\theta},\bar{\beta}}(t)=-\alpha(1-\beta_{1}%
)\Psi_{2}^{\bar{\theta},\bar{\beta}}(t), &  & \Psi_{2}^{\bar{\theta}%
,\bar{\beta}}(0)=1,
\end{array}
\label{Equ_2D_ODE}%
\end{equation}
with the corresponding vector fields
\begin{align*}
\Lambda_{1}^{\bar{\theta},\bar{\beta}}(u_{1},u_{2})  &  =-\rho u_{1}%
+\frac{u_{2}^{2}}{2}+\frac{\rho\beta_{2}}{\kappa_{L}^{\prime\prime}(\theta
_{2})}(\kappa_{L}^{\prime}(u_{1}+\theta_{2})-\kappa_{L}^{\prime}(\theta
_{2})),\\
\Lambda_{2}^{\bar{\theta},\bar{\beta}}(u_{1},u_{2})  &  =-\alpha(1-\beta
_{1})u_{2}.
\end{align*}
Note that $(u_{1},u_{2})=(0,0)$ is a stationary point, that $\Lambda_{1}%
^{\bar{\theta},\bar{\beta}}(0,u_{2})>0$ for $u_{2}>0$, that $\Lambda_{2}%
^{\bar{\theta},\bar{\beta}}(u_{1},0)=0$ for $u_{1}>0$ and $\Lambda_{2}%
^{\bar{\theta},\bar{\beta}}(u_{1},u_{2})<0$ for $u_{1}>0,u_{2}>0.$ Hence, the
region $S_{\bar{\theta},\bar{\beta}}=\{(u_{1},u_{2}):0\leq u_{1}<\Theta
_{L}-\theta_{2},0\leq u_{2}\leq1\}$ is invariant for this vector field, i.e.,
a solution that enters $S_{\bar{\theta},\bar{\beta}}$ cannot leave
$S_{\bar{\theta},\bar{\beta}}.$ Moreover, we have that the vector field
$\Lambda_{1}^{\bar{\theta},\bar{\beta}}(u_{1},u_{2})$ evaluated at the line
$u_{2}=0$ has the form
\[
\Lambda_{1}^{\bar{\theta},\bar{\beta}}(u_{1},0)=-\rho u_{1}+\frac{\rho
\beta_{2}}{\kappa_{L}^{\prime\prime}(\theta_{2})}(\kappa_{L}^{\prime}%
(u_{1}+\theta_{2})-\kappa_{L}^{\prime}(\theta_{2})),
\]
i.e., $\Lambda_{1}^{\bar{\theta},\bar{\beta}}(u_{1},0)=\Lambda^{\theta
_{2},\beta_{2},0}(u_{1}).$ By Lemma \ref{LemmaLambda(a)}, it then follows that
$\Lambda_{1}^{\bar{\theta},\bar{\beta}}(u_{1},0)=\Lambda^{\theta_{2},\beta
_{2},0}(u_{1})<0,$ for $u_{1}\in(0,u^{m}(\theta_{2},\beta_{2})).$ In addition,
if $\left(  \theta_{2},\beta_{2}\right)  \in\mathcal{D}(1/2),$ we have that
$u_{1/2}^{0}(\theta_{2},\beta_{2})<u^{m}(\theta_{2},\beta_{2}).$ This means
that $\Lambda_{1}^{\bar{\theta},\bar{\beta}}(u_{1},0)<0$ for $u_{1}%
\in(0,u_{1/2}^{0}(\theta_{2},\beta_{2})),$ which can be extended to $\left(
u_{1},u_{2}\right)  \in(0,u_{1/2}^{0}(\theta_{2},\beta_{2}))\times
(0,\delta)\triangleq R_{\bar{\theta},\bar{\beta}}(\delta)$ for some
$0<\delta<1.$ Note that $R_{\bar{\theta},\bar{\beta}}(\delta)$ is in the
domain of attraction of the stationary point $(0,0).$ As $\Psi_{2}%
^{\bar{\theta},\bar{\beta}}(t)=e^{-\alpha(1-\beta_{1})t}$ and $\Psi_{1}%
^{\bar{\theta},\bar{\beta}}(t)<u_{1/2}^{0}(\theta_{2},\beta_{2})$,\ we have
that $(\Psi_{1}^{\bar{\theta},\bar{\beta}}(t),\Psi_{2}^{\bar{\theta}%
,\bar{\beta}}(t))\in\mathrm{int}(R_{\bar{\theta},\bar{\beta}}(\delta))$ for
$t>-\frac{\log(\delta)}{\alpha(1-\beta_{1})}$ and, hence, it converges to
$(0,0)$ when $t$ tends to infinity. Note that we can look at the system
$\left(  \ref{Equ_2D_ODE}\right)  $ as a perturbed linear system, i.e.,%
\[%
\begin{array}
[c]{lcc}%
\frac{d}{dt}\Psi_{1}^{\bar{\theta},\bar{\beta}}(t)=-\rho(1-\beta_{2})\Psi
_{1}^{\bar{\theta},\bar{\beta}}(t)+G_{1}(\Psi_{1}^{\bar{\theta},\bar{\beta}%
}(t),\Psi_{2}^{\bar{\theta},\bar{\beta}}(t)), &  & \Psi_{1}^{\bar{\theta}%
,\bar{\beta}}(0)=0,\\
\frac{d}{dt}\Psi_{2}^{\bar{\theta},\bar{\beta}}(t)=-\alpha(1-\beta_{1}%
)\Psi_{2}^{\bar{\theta},\bar{\beta}}(t)+G_{2}(\Psi_{1}^{\bar{\theta}%
,\bar{\beta}}(t),\Psi_{2}^{\bar{\theta},\bar{\beta}}(t)), &  & \Psi_{2}%
^{\bar{\theta},\bar{\beta}}(0)=1,
\end{array}
\]
where
\begin{align*}
G_{1}(u_{1},u_{2})  &  =\frac{u_{2}^{2}}{2}+\frac{\rho\beta_{2}}{\kappa
_{L}^{\prime\prime}(\theta_{2})}(\int_{0}^{1}\int_{0}^{1}\kappa_{L}%
^{\prime\prime\prime}(\theta_{2}+\lambda_{1}\lambda_{2}u_{1})d\lambda
_{2}\lambda_{1}d\lambda_{1})u_{1}^{2},\\
G_{2}(u_{1},u_{2})  &  =0,
\end{align*}
and
\[
\lim_{(u_{1},u_{2})\rightarrow(0,0)}\frac{(G_{1}(u_{1},u_{2}),G_{2}%
(u_{1},u_{2}))}{\sqrt{u_{1}^{2}+u_{2}^{2}}}=(0,0).
\]
Hence, that $(\Psi_{1}^{\bar{\theta},\bar{\beta}}(t),\Psi_{2}^{\bar{\theta
},\bar{\beta}}(t))$ converges to zero exponentially fast follows from Theorem
3.1.(i), Chapter VII, in Hartman \cite{Har64}. On the other hand, by Remark
\ref{RemarkPropertiesSystemODEsQGeneral} and the monotone convergence theorem,
we have that
\begin{align*}
\lim_{t\rightarrow\infty}\Psi_{0}^{\bar{\theta},\bar{\beta}}(t)  &
=\lim_{t\rightarrow\infty}\theta_{1}\frac{1-e^{-\alpha(1-\beta_{1})t}}%
{\alpha(1-\beta_{1})}+\int_{0}^{t}\{\kappa_{L}(\Psi_{1}^{\bar{\theta}%
,\bar{\beta}}(s)+\theta_{2})-\kappa_{L}(\theta_{2})\}ds\\
&  =\frac{\theta_{1}}{\alpha(1-\beta_{1})}+\int_{0}^{\infty}\{\kappa_{L}%
(\Psi_{1}^{\bar{\theta},\bar{\beta}}(s)+\theta_{2})-\kappa_{L}(\theta
_{2})\}ds<\infty.
\end{align*}
To prove that the previous integral is finite, note first that, as $\Psi
_{1}^{\bar{\theta},\bar{\beta}}(t)<u_{1/2}^{0}(\theta_{2},\beta_{2})$ and the
function $\kappa_{L}(u)$ is increasing we have that $\kappa_{L}(\Psi_{1}%
^{\bar{\theta},\bar{\beta}}(t)+\theta_{2})\leq\kappa_{L}(u_{1/2}^{0}%
(\theta_{2},\beta_{2})+\theta_{2}).$ But, by definition
\[
0=-\rho u_{1/2}^{0}(\theta_{2},\beta_{2})+\frac{1}{2}+\frac{\rho\beta_{2}%
}{\kappa_{L}^{\prime\prime}(\theta_{2})}(\kappa_{L}^{\prime}(u_{1/2}%
^{0}(\theta_{2},\beta_{2})+\theta_{2})-\kappa_{L}^{\prime}(\theta_{2})),
\]
which yields that
\[
\kappa_{L}^{\prime}(u_{1/2}^{0}(\theta_{2},\beta_{2})+\theta_{2})=\frac
{\kappa_{L}^{\prime\prime}(\theta_{2})}{\rho\beta_{2}}(\rho u_{1/2}^{0}%
(\theta_{2},\beta_{2})-\frac{1}{2})+\kappa_{L}^{\prime}(\theta_{2}),
\]
which is bounded. Hence, it suffices to prove that $\kappa_{L}(\Psi_{1}%
^{\bar{\theta},\bar{\beta}}(t)+\theta_{2})-\kappa_{L}(\theta_{2})$ converges
to zero faster than $t^{-(1+\varepsilon)},$ for some $\varepsilon>0,$ when $t$
tends to infinity. We have that
\begin{align*}
\lim_{t\rightarrow\infty}t^{(1+\varepsilon)}(\kappa_{L}(\Psi_{1}^{\bar{\theta
},\bar{\beta}}(t)+\theta_{2})-\kappa_{L}(\theta_{2}))  &  =\lim_{t\rightarrow
\infty}t^{(1+\varepsilon)}\Psi_{1}^{\bar{\theta},\bar{\beta}}(t)\int_{0}%
^{1}\kappa_{L}^{\prime}(\theta_{2}+\lambda\Psi_{1}^{\bar{\theta},\bar{\beta}%
}(t))d\lambda\\
&  =\left(  \lim_{t\rightarrow\infty}t^{(1+\varepsilon)}\Psi_{1}^{\bar{\theta
},\bar{\beta}}(t)\right)  \left(  \lim_{t\rightarrow\infty}\int_{0}^{1}%
\kappa_{L}^{\prime}(\theta_{2}+\lambda\Psi_{1}^{\bar{\theta},\bar{\beta}%
}(t))d\lambda\right) \\
&  =\kappa_{L}^{\prime}(\theta_{2})\lim_{t\rightarrow\infty}t^{(1+\varepsilon
)}\Psi_{1}^{\bar{\theta},\bar{\beta}}(t)=0,
\end{align*}
because $\Psi_{1}^{\bar{\theta},\bar{\beta}}(t)$ converges to zero
exponentially fast and $\lim_{t\rightarrow\infty}\int_{0}^{1}\kappa
_{L}^{\prime}(\theta_{2}+\lambda\Psi_{1}^{\bar{\theta},\bar{\beta}%
}(t))d\lambda=\kappa_{L}^{\prime}(\theta_{2})$ by bounded convergence.
\end{proof}

An immediate consequence of the Theorem above is that the forward price will
be equal to the seasonal function $\Lambda_{g}(T)$ in the long end, that is,
when $(\theta_{2},\beta_{2})\in\mathcal{D}_{b}(1/2)$ and $(\theta_{1}%
,\beta_{1})\in\mathbb{R}\times\lbrack0,1)$, it holds that
\[
\lim_{T\rightarrow\infty}\frac{F_{Q}(t,T)}{\Lambda_{g}(T)}=\exp\left(
\frac{\theta_{1}}{\alpha(1-\beta_{1})}+\int_{0}^{\infty}\{\kappa_{L}(\Psi
_{1}^{\bar{\theta},\bar{\beta}}(s)+\theta_{2})-\kappa_{L}(\theta
_{2})\}\,ds\right)  .
\]
Note that to have this limiting de-seasonalized forward price, we must compute
an integral of a nonlinear function of $\Psi_{1}^{\bar{\theta},\bar{\beta}%
}(t)$, for which we do not have any explicit solution available. Note that
from part 4(b) in Lemma~\ref{LemmaLambda(a)} we have $(\theta_{2},\beta
_{2})\in\mathcal{D}_{b}(1/2)$ if $\theta_{2}<\Theta_{L}-1/2\rho$ and
$\beta_{2}\in\lbrack0,\beta_{m}]$, for a uniquely defined $0<\beta_{m}<1$. We
recall that $\rho$ is the speed of mean reversion of the stochastic volatility
$\sigma^{2}(t)$, and $\Theta_{L}$ is the maximal exponential integrability of
$L$, the subordinator driving the same process. Thus, we must choose
$\theta_{2}$ less that $\Theta_{L}$, by a distance given by the inverse of the
speed of mean reversion. Then we know there exists an interval of $\beta_{2}%
$'s for which we can reduce the speed of mean reversion of $\sigma^{2}(t)$.
Here we see clearly the competition between the jumps of $L$ and the speed of
mean reversion of $\sigma^{2}(t)$.

We note that if we just change the levels of mean reversion, that is assuming
$\bar{\beta}=(0,0),$ then we can compute the risk premium more explicitly.
This case will correspond to an Esscher transform of both the Brownian motion
driving $X$ and the subordinator $L$ driving $\sigma^{2}(t)$.

\begin{proposition}
\label{PropRPEsscher}Suppose that $\bar{\beta}=(0,0)$ and $\bar{\theta}%
\in\mathbb{R}\times D_{L}.$ Then the forward price is given by
\begin{align*}
\mathbb{E}_{Q}[\exp(X(T))|\mathcal{F}_{t}] &  =\exp\left(  e^{-\rho(T-t)}%
\frac{1-e^{-(2\alpha-\rho)(T-t)}}{2(2\alpha-\rho)}\sigma^{2}(t)+e^{-\alpha
(T-t)}X(t)\right.  \\
&  \qquad\left.  +\theta_{1}\frac{1-e^{-\alpha(T-t)}}{\alpha}+\int_{0}%
^{T-t}\kappa_{L}\left(  e^{-\rho s}\frac{1-e^{-(2\alpha-\rho)s}}%
{2(2\alpha-\rho)}+\theta_{2}\right)  -\kappa_{L}(\theta_{2})ds\right)  ,
\end{align*}
and the risk premium by%
\begin{align}
R_{Q}^{F}(t,T) &  =\mathbb{E}_{P}[S(T)|\mathcal{F}_{t}]\left\{  \exp\left(
\int_{0}^{T-t}\kappa_{L}\left(  e^{-\rho s}\frac{1-e^{-(2\alpha-\rho)s}%
}{2(2\alpha-\rho)}+\theta_{2}\right)  -\kappa_{L}(\theta_{2})ds\right.
\right.  \label{EquRPremiumEsscher}\\
&  \qquad-\int_{0}^{T-t}\kappa_{L}\left(  e^{-\rho s}\frac{1-e^{-(2\alpha
-\rho)s}}{2(2\alpha-\rho)}\right)  ds+\left.  \left.  \theta_{1}%
\frac{1-e^{-\alpha(T-t)}}{\alpha}\right)  -1\right\}  .\nonumber
\end{align}

\end{proposition}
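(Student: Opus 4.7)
The plan is to deduce the proposition directly from Theorem~\ref{TheoRicattiAbstractQGeneral} by solving the generalised Riccati system explicitly in the special case $\bar{\beta}=(0,0)$. The decisive observation is that when $\beta_2=0$, the nonlinear jump term $\frac{\rho\beta_2}{\kappa_L''(\theta_2)}(\kappa_L'(\Psi_1+\theta_2)-\kappa_L'(\theta_2))$ in \eqref{EquRiccatiODEGeneral} vanishes, so the equation for $\Psi_1^{\bar{\theta},\bar{\beta}}$ degenerates to a linear first-order ODE; and when $\beta_1=0$, the equation for $\Psi_2^{\bar{\theta},\bar{\beta}}$ becomes $\dot{\Psi}_2=-\alpha\Psi_2$ with $\Psi_2(0)=1$, whose solution is trivially $\Psi_2^{\bar{\theta},\bar{\beta}}(t)=e^{-\alpha t}$. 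Substituting this in the $\Psi_1$ equation yields
\[
\tfrac{d}{dt}\Psi_1^{\bar{\theta},\bar{\beta}}(t)=-\rho\,\Psi_1^{\bar{\theta},\bar{\beta}}(t)+\tfrac{1}{2}e^{-2\alpha t},\qquad \Psi_1^{\bar{\theta},\bar{\beta}}(0)=0,
\]
and an integrating-factor computation (with factor $e^{\rho t}$) gives the closed form $\Psi_1^{\bar{\theta},\bar{\beta}}(t)=e^{-\rho t}(1-e^{-(2\alpha-\rho)t})/(2(2\alpha-\rho))$, interpreted in the usual limiting sense when $2\alpha=\rho$.

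Next I would integrate the third Riccati equation directly: since $\Psi_2^{\bar{\theta},\bar{\beta}}(s)=e^{-\alpha s}$, we obtain
\[
\Psi_0^{\bar{\theta},\bar{\beta}}(t)=\theta_1\,\frac{1-e^{-\alpha t}}{\alpha}+\int_0^t\Bigl\{\kappa_L\bigl(e^{-\rho s}\tfrac{1-e^{-(2\alpha-\rho)s}}{2(2\alpha-\rho)}+\theta_2\bigr)-\kappa_L(\theta_2)\Bigr\}\,ds.
\]
Before invoking Theorem~\ref{TheoRicattiAbstractQGeneral} one has to verify the integrability condition \eqref{Equ_Integrability_ODE}, i.e.\ that $\theta_2+\Psi_1^{\bar{\theta},\bar{\beta}}(t)$ stays inside the effective domain of $\kappa_L$ uniformly in $t\in[0,T]$. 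This is where one needs to use that $\theta_2\in D_L=(-\infty,\Theta_L/2)$ together with the explicit bound on $\Psi_1^{\bar{\theta},\bar{\beta}}$ via the function $\Upsilon_{\alpha,\rho}$ already analysed just after Proposition~\ref{Prop_ForwardsGeometric}; this is the only mildly delicate step, but it follows exactly as in that discussion (one may either invoke Assumption~$\mathcal{P}$ when $\Theta_L<\infty$, or note that the integrand in the statement implicitly requires this finiteness anyway).

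With all three components in hand, Theorem~\ref{TheoRicattiAbstractQGeneral} yields
\[
\mathbb{E}_Q[\exp(X(T))\mid\mathcal{F}_t]=\exp\bigl(\Psi_0^{\bar{\theta},\bar{\beta}}(T-t)+\Psi_1^{\bar{\theta},\bar{\beta}}(T-t)\sigma^2(t)+\Psi_2^{\bar{\theta},\bar{\beta}}(T-t)X(t)\bigr),
\]
and inserting the three explicit expressions above gives exactly the forward price formula claimed in the proposition.

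Finally, to obtain the risk premium I would substitute this expression into
\[
R_Q^F(t,T)=\mathbb{E}_Q[S(T)\mid\mathcal{F}_t]-\mathbb{E}_P[S(T)\mid\mathcal{F}_t]=\mathbb{E}_P[S(T)\mid\mathcal{F}_t]\left(\frac{\mathbb{E}_Q[\exp(X(T))\mid\mathcal{F}_t]}{\mathbb{E}_P[\exp(X(T))\mid\mathcal{F}_t]}-1\right),
\]
using the second (explicit) formula in Proposition~\ref{Prop_ForwardsGeometric} for the $P$-forward price. Because $\beta_1=\beta_2=0$, the speeds of mean reversion under $Q$ coincide with those under $P$, so the $\sigma^2(t)$- and $X(t)$-terms in numerator and denominator cancel, and only the drift contribution $\theta_1(1-e^{-\alpha(T-t)})/\alpha$ and the Esscher-shifted cumulant integral $\int_0^{T-t}\{\kappa_L(\Upsilon_{\alpha,\rho}(s)+\theta_2)-\kappa_L(\theta_2)\}\,ds$ survive, together with the subtracted $\int_0^{T-t}\kappa_L(\Upsilon_{\alpha,\rho}(s))\,ds$ coming from the $P$-expression. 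This is exactly \eqref{EquRPremiumEsscher}, completing the proof. The only real work, as noted, is the integrability verification; everything else is a routine but lengthy algebraic simplification which the statement absorbs into the final formula.
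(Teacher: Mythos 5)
Your proposal is correct and follows essentially the same route as the paper: both reduce the generalised Riccati system of Theorem~\ref{TheoRicattiAbstractQGeneral} in the case $\bar{\beta}=(0,0)$, observe that $\Psi_1^{\bar{\theta},0}$ and $\Psi_2^{\bar{\theta},0}$ coincide with the explicit $P$-solutions $e^{-\rho t}\tfrac{1-e^{-(2\alpha-\rho)t}}{2(2\alpha-\rho)}$ and $e^{-\alpha t}$ (you via an integrating factor, the paper by noting the equations are identical to the $\bar{\theta}=\bar{\beta}=0$ system), integrate the $\Psi_0$ equation, and then cancel the $\sigma^2(t)$- and $X(t)$-terms against the $P$-forward price of Proposition~\ref{Prop_ForwardsGeometric}. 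Your explicit flagging of the integrability condition \eqref{Equ_Integrability_ODE} is a small point of extra care that the paper's proof leaves implicit.
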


\begin{proof}
Note that the system of generalised Riccati equations to solve is
\[%
\begin{array}
[c]{lcc}%
\frac{d}{dt}\Psi_{1}^{\bar{\theta},0}(t)=-\rho\Psi_{1}^{\bar{\theta}%
,0}(t)+\frac{(\Psi_{2}^{\bar{\theta},0}(t))^{2}}{2} &  & \Psi_{1}^{\bar
{\theta},0}(0)=0,\\
\frac{d}{dt}\Psi_{2}^{\bar{\theta},0}(t)=-\alpha\Psi_{2}^{\bar{\theta}%
,0}(t), &  & \Psi_{2}^{\bar{\theta},0}(0)=1,\\
\frac{d}{dt}\Psi_{0}^{\bar{\theta},0}(t)=\theta_{1}\Psi_{2}^{\bar{\theta}%
,0}(t)+\kappa_{L}(\Psi_{1}^{\bar{\theta},0}(t)+\theta_{2})-\kappa_{L}%
(\theta_{2}), &  & \Psi_{0}^{\bar{\theta},0}(0)=0.
\end{array}
\]
With respect to $\Psi_{1}^{\bar{\theta},0}(t)$ and $\Psi_{2}^{\bar{\theta}%
,0}(t),$ this coincides with the one satisfied by $\Psi_{1}^{0,0}(t)=e^{-\rho
t}\frac{1-e^{-(2\alpha-\rho)t}}{2(2\alpha-\rho)}$ and $\Psi_{2}^{0,0}%
(t)=e^{-\alpha t}.$ Hence, $\Psi_{1}^{\bar{\theta},0}(t)=e^{-\rho t}%
\frac{1-e^{-(2\alpha-\rho)t}}{2(2\alpha-\rho)},\Psi_{2}^{\bar{\theta}%
,0}(t)=e^{-\alpha t}$ and we just need to integrate the equation for $\Psi
_{0}^{\bar{\theta},0}(t)$ to obtain that%
\begin{align*}
\Psi_{0}^{\bar{\theta},0}(t)  &  =\theta_{1}\int_{0}^{t}\Psi_{2}^{\bar{\theta
},0}(s)ds+\int_{0}^{t}\kappa_{L}(\Psi_{1}^{\bar{\theta},0}(t)+\theta
_{2})-\kappa_{L}(\theta_{2})\\
&  =\theta_{1}\frac{1-e^{-\alpha t}}{\alpha}+\int_{0}^{t}\kappa_{L}\left(
e^{-\rho s}\frac{1-e^{-(2\alpha-\rho)s}}{2(2\alpha-\rho)}+\theta_{2}\right)
-\kappa_{L}(\theta_{2})ds,
\end{align*}
to conclude.
\end{proof}

Next, we present two examples where we apply the previous results.
\begin{figure}
\subfloat[][Streamplot of $(\Lambda_{1}^{\bar{\theta},\bar{\beta}%
}(u_1,u_2),\Lambda_{2}^{\bar{\theta},\bar{\beta}}(u_1,u_2))$]{\includegraphics
[width=6.0cm]{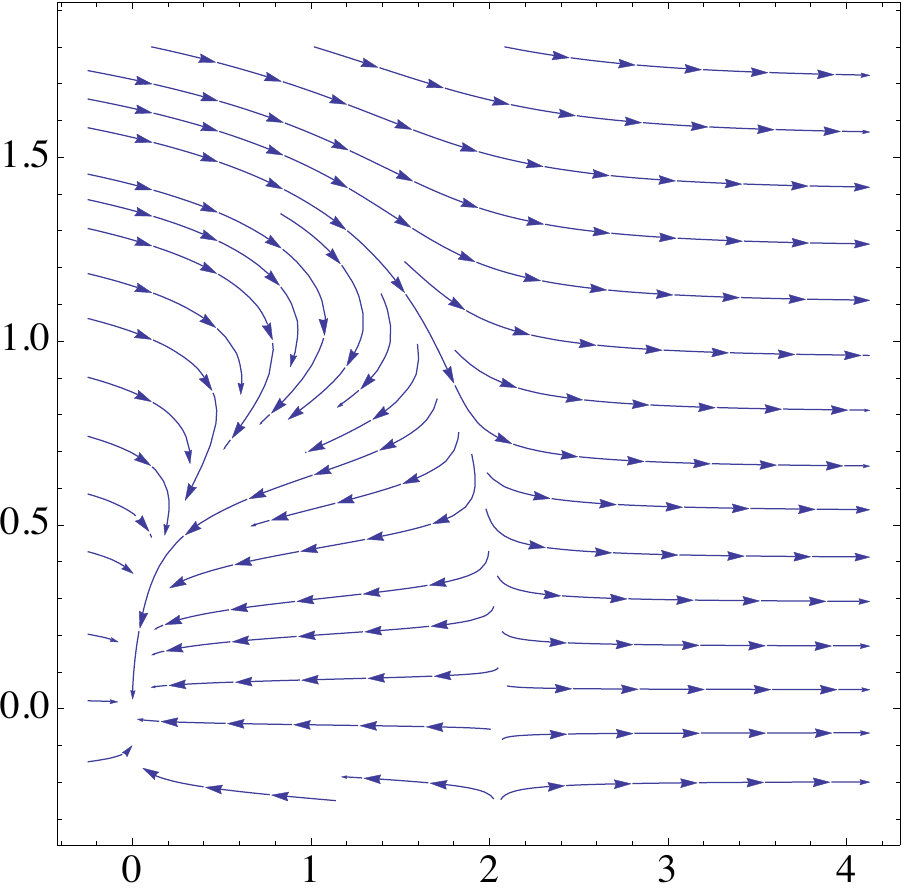}}
\qquad\subfloat[][Plot of $(\Psi_{1}^{\bar{\theta},\bar{\beta}}(t),\Psi
_{2}^{\bar{\theta},\bar{\beta}}%
(t))$ (left) and the analogous solution of the ODE with $\hat{\Lambda}%
_{1}^{\theta_2,\beta_2}(u_1)$ instead of $\Lambda_{1}^{\bar{\theta},\bar
{\beta}}(u_1,u_2)$ (right)]{\includegraphics[width=6.0cm]{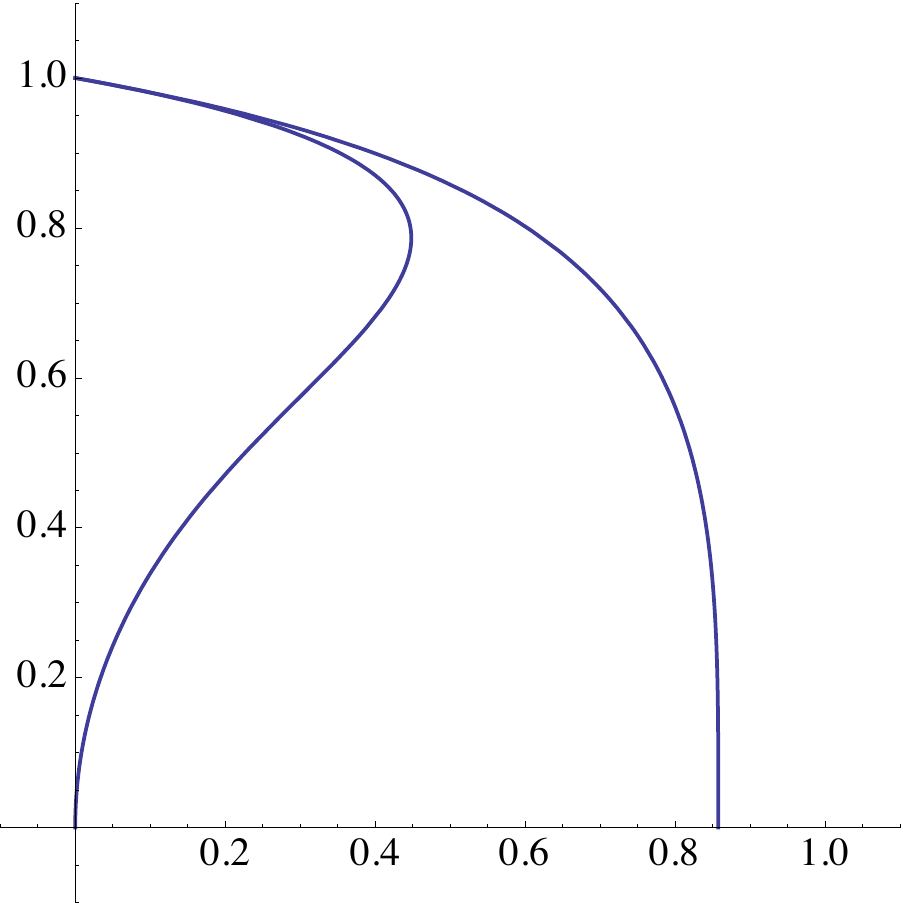}}
\\
\subfloat[][Plot of $\hat{\Lambda}_{1}^{\theta_2,\beta_2}%
(u).$ ]{\includegraphics[width=6.5cm]{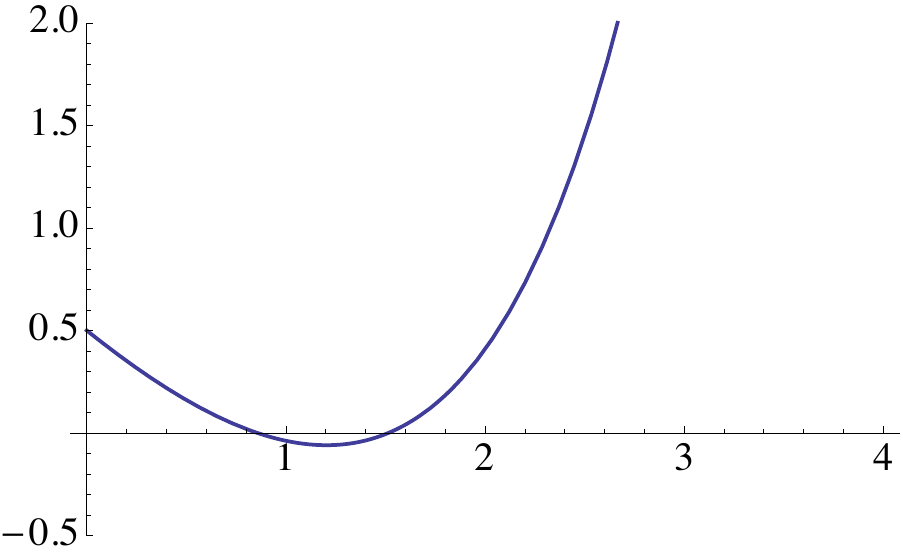}}
\qquad\subfloat[][Plot of $u_{1/2}^{m}$ (blue) and $u_{1/2}^{0}%
$ (violet).]{\includegraphics[width=6.5cm]{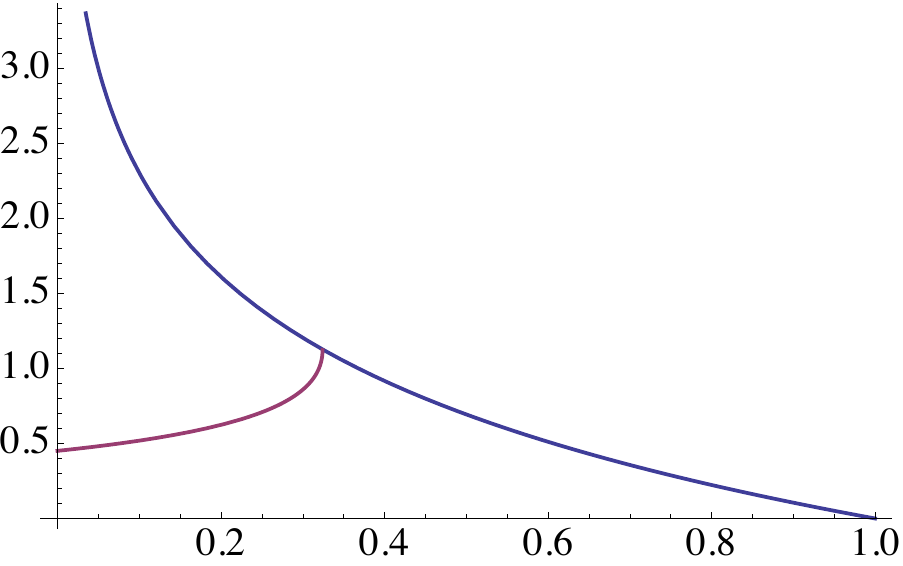}}
\caption{Some plots related to example \ref{ExampleOneJump}. We take $\rho
=1.11,\alpha=0.127,\beta_1=\beta_2=0.3.$}
\label{FigOneJump}
\end{figure}%

\begin{example}
\label{ExampleOneJump}We start by the simplest possible case. Assume that the
L\'{e}vy measure is $\delta_{\{1\}}(dz),$ that is, the L\'{e}vy process $L$
has only jumps of size $1.$ In this case $\Theta_{L}=\infty$ and, hence,
$D_{L}=\mathbb{R}.$ We have that $\kappa_{L}(\theta_{2})=e^{\theta_{2}}-1$ and
$\kappa_{L}^{(n)}(\theta_{2})=e^{\theta_{2}},n\in\mathbb{N}.$ Therefore, the
associated generalised Riccati equation is given
\begin{equation}%
\begin{array}
[c]{lcc}%
\frac{d}{dt}\Psi_{1}^{\bar{\theta},\bar{\beta}}(t)=-\rho\Psi_{1}^{\bar{\theta
},\bar{\beta}}(t)+\frac{(\Psi_{2}^{\bar{\theta},\bar{\beta}}(t))^{2}}{2}%
+\rho\beta_{2}(e^{\Psi_{1}^{\bar{\theta},\bar{\beta}}(t)}-1), &  & \Psi
_{1}^{\bar{\theta},\bar{\beta}}(0)=0,\\
\frac{d}{dt}\Psi_{2}^{\bar{\theta},\bar{\beta}}(t)=-\alpha(1-\beta_{1}%
)\Psi_{2}^{\bar{\theta},\bar{\beta}}(t), &  & \Psi_{2}^{\bar{\theta}%
,\bar{\beta}}(0)=1,\\
\frac{d}{dt}\Psi_{0}^{\bar{\theta},\bar{\beta}}(t)=\theta_{1}\Psi_{2}%
^{\bar{\theta},\bar{\beta}}(t)+e^{\theta_{2}}(e^{\Psi_{1}^{\bar{\theta}%
,\bar{\beta}}(t)}-1), &  & \Psi_{0}^{\bar{\theta},\bar{\beta}}(0)=0,
\end{array}
\label{EquationGR_delta1}%
\end{equation}
In this example%
\[
\hat{\Lambda}_{1}^{\theta_{2},\beta_{2}}(u)=\Lambda^{\theta_{2},\beta_{2}%
,1/2}(u)=\frac{1}{2}-\rho u+\rho\beta_{2}(e^{u}-1),
\]
which does not depend on $\theta_{2}.$ By Lemma \ref{LemmaLambda(a)},
$\Lambda^{\theta,\beta,1/2}(u)$ attains its minimum at
\[
u_{1/2}^{m}(\theta_{2},\beta_{2})=\log\left(  \frac{e^{\theta_{2}}}{\beta_{2}%
}\right)  -\theta_{2}=-\log\left(  \beta_{2}\right)
\]
and equation \eqref{Equ_Beta_m} reads
\begin{equation}
\Lambda^{\theta_{2},\beta_{2},1/2}(u_{1/2}^{m})=\frac{1}{2}+\rho\log\left(
\beta_{2}\right)  +\rho(1-\beta_{2})=0. \label{Equ_Beta_m_1}%
\end{equation}
Using the Lambert \textrm{W} function, i.e., the function defined by
\textrm{W}$(z)e^{\mathrm{W}(z)}=z,z\in\mathbb{C},$ we get that $\beta_{m},$
the root of equation \eqref{Equ_Beta_m_1} is given by%
\[
\beta_{m}=-\mathrm{W}(-e^{-\left(  1+\frac{1}{2\rho}\right)  }).
\]
Hence, according to Lemma \ref{LemmaLambda(a)}, the set $\mathcal{D}_{b}%
(\frac{1}{2})=\{(\theta_{2},\beta_{2}):\beta_{2}\in\lbrack0,\beta_{m}]\}$ and
if $\beta_{2}\in\lbrack0,\beta_{m}]$ there exists a unique root $u_{1/2}%
^{0}(\theta_{2},\beta_{2})$ of equation $\Lambda^{\theta_{2},\beta_{2}%
,1/2}(u)=0$ satisfying $u_{1/2}^{0}(\theta_{2},\beta_{2})\leq u_{1/2}%
^{m}(\theta_{2},\beta_{2})$. This root is given by
\[
u_{1/2}^{0}(\beta_{2})=\frac{1}{2\rho}-\left(  \beta_{2}+\mathrm{W}(-\beta
_{2}e^{(\frac{1}{2\rho}-\beta_{2})})\right)  .
\]
See Figure~\ref{FigOneJump} for a graphical illustration of this case.
\end{example}

%

\begin{figure}
\subfloat[][Streamplot of $(\Lambda_{1}^{\bar{\theta},\bar{\beta}%
}(u_1,u_2),\Lambda_{2}^{\bar{\theta},\bar{\beta}}(u_1,u_2))$]{\includegraphics
[width=6.0cm]{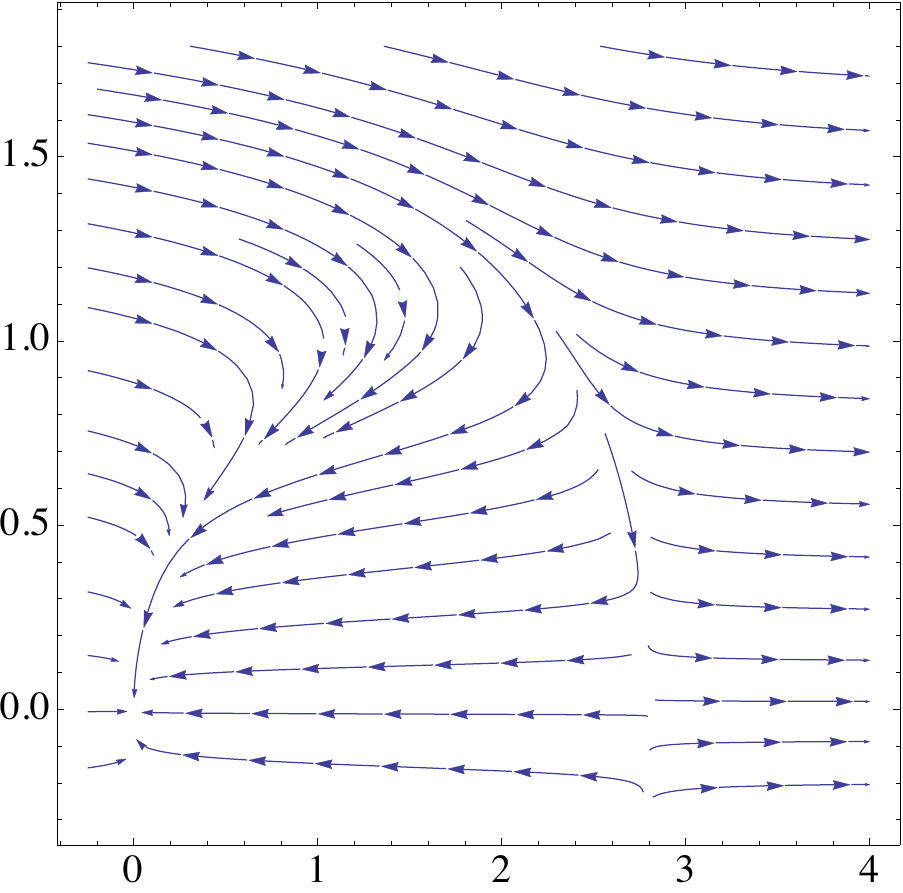}}
\qquad\subfloat[][Plot of $(\Psi_{1}^{\bar{\theta},\bar{\beta}}(t),\Psi
_{2}^{\bar{\theta},\bar{\beta}}%
(t))$ (left) and the analogous solution of the ODE with $\hat{\Lambda}%
_{1}^{\theta_2,\beta_2}(u_1)$ instead of $\Lambda_{1}^{\bar{\theta},\bar
{\beta}}(u_1,u_2)$ (right)]{\includegraphics[width=6.0cm]{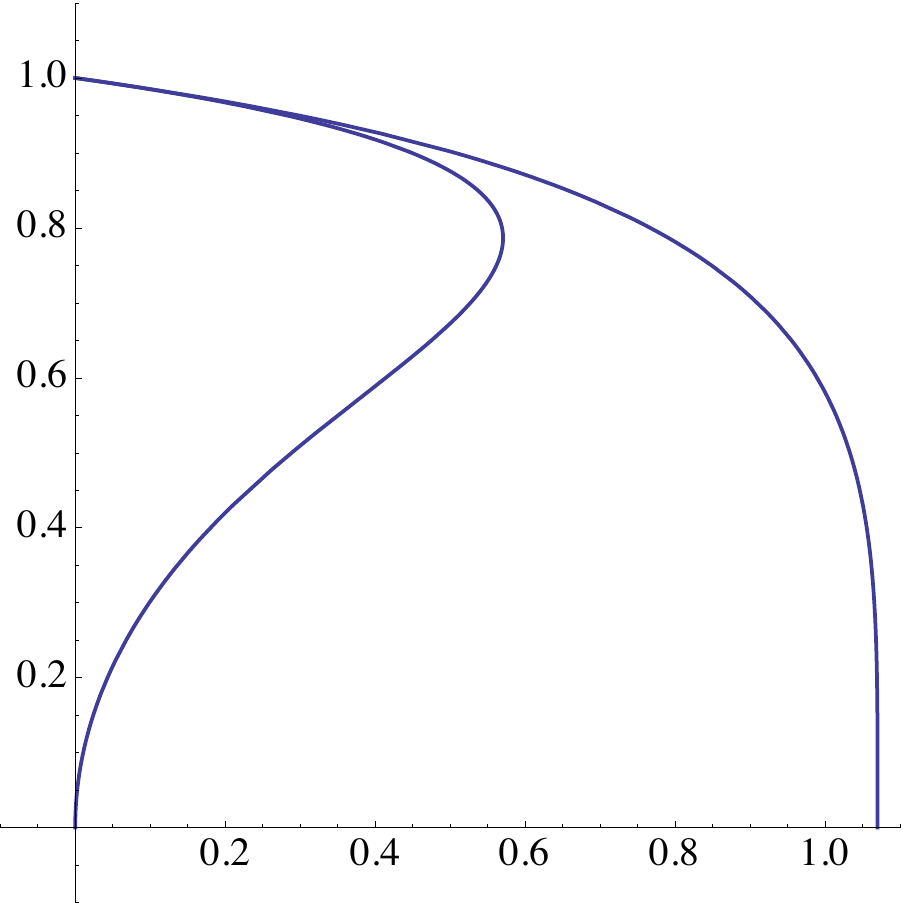}}
\\
\subfloat[][Plot of $\hat{\Lambda}_{1}^{\theta_2,\beta_2}%
(u).$ ]{\includegraphics[width=6.5cm]{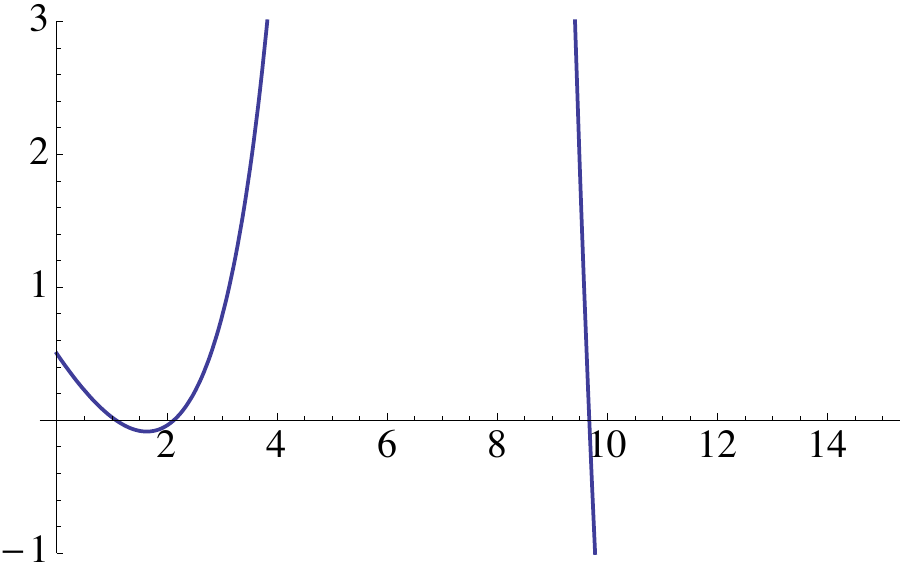}}
\qquad\subfloat[][Plot of $u_{1/2}^{m}$ (blue) and $u_{1/2}^{0}%
$ (violet).]{\includegraphics[width=6.5cm]{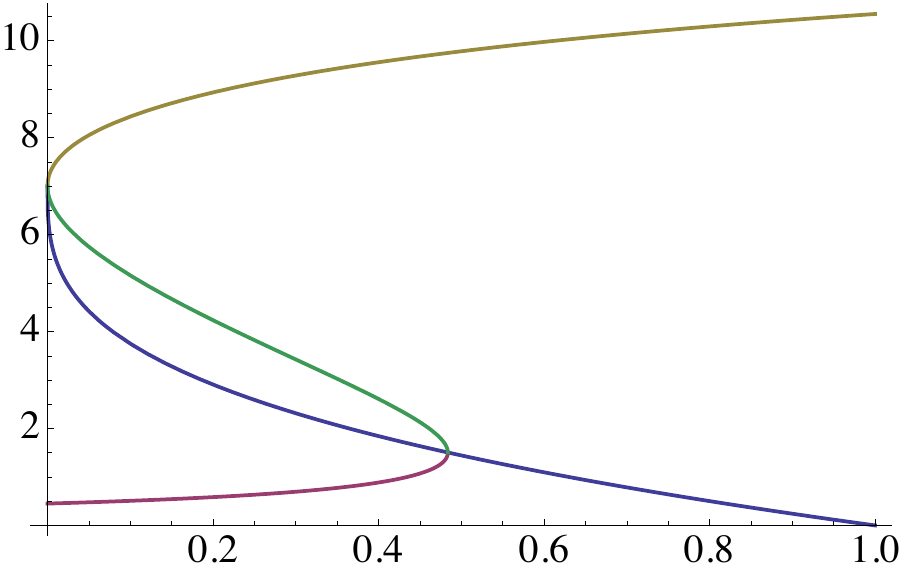}}
\caption{Some plots related to example \ref{ExampleExpJumps}. We take $\rho
=1.11,\alpha=0.127,\lambda=2,\theta_1\in\mathbb{R},\theta_2=-5,\beta
_1=\beta_2=0.45.$}
\label{FigExpJumps}
\end{figure}%

\begin{example}
\label{ExampleExpJumps}Assume that the L\'{e}vy measure is $\ell
(dz)=ce^{-\lambda z}\boldsymbol{1}_{(0,\infty)},$ that is, $L$ is a compound
Poisson process with intensity $c/\lambda$ and exponentially distributed jumps
with mean $1/\lambda.$ In this case $\Theta_{L}=\lambda$ and, hence,
$D_{L}=\mathbb{(-\infty},\lambda/2).$ We have that $\kappa_{L}(\theta
)=c\theta/\lambda(\lambda-\theta)$ and $\kappa_{L}^{(n)}(\theta)=cn!/(\lambda
-\theta)^{n+1},n\in\mathbb{N}.$ Therefore, the associated generalised Riccati
equation is given by
\begin{equation}%
\begin{array}
[c]{lcc}%
\frac{d}{dt}\Psi_{1}^{\bar{\theta},\bar{\beta}}(t)=-\rho\Psi_{1}^{\bar{\theta
},\bar{\beta}}(t)+\frac{(\Psi_{2}^{\bar{\theta},\bar{\beta}}(t))^{2}}{2}%
+\frac{\rho\beta_{2}(\lambda-\theta_{2})^{3}}{2}\left\{  \frac{1}%
{(\lambda-\theta_{2}-\Psi_{1}^{\bar{\theta},\bar{\beta}}(t))^{2}}-\frac
{1}{(\lambda-\theta_{2})^{2}}\right\}  , &  & \Psi_{1}^{\bar{\theta}%
,\bar{\beta}}(0)=0,\\
\frac{d}{dt}\Psi_{2}^{\bar{\theta},\bar{\beta}}(t)=-\alpha(1-\beta_{1}%
)\Psi_{2}^{\bar{\theta},\bar{\beta}}(t), &  & \Psi_{2}^{\bar{\theta}%
,\bar{\beta}}(0)=1,\\
\frac{d}{dt}\Psi_{0}^{\bar{\theta},\bar{\beta}}(t)=\theta_{1}\Psi_{2}%
^{\bar{\theta},\bar{\beta}}(t)+\frac{c(\Psi_{1}^{\bar{\theta},\bar{\beta}%
}(t)+\theta_{2})}{\lambda(\lambda-\theta_{2}-\Psi_{1}^{\bar{\theta},\bar
{\beta}}(t))}-\frac{c\theta_{2}}{\lambda(\lambda-\theta_{2})}, &  & \Psi
_{0}^{\bar{\theta},\bar{\beta}}(0)=0,
\end{array}
\label{EquationGR_Exp}%
\end{equation}
In this example,
\[
\hat{\Lambda}_{1}^{\theta_{2},\beta_{2}}(u)=\Lambda^{\theta_{2},\beta_{2}%
,1/2}(u)=\frac{1}{2}-\rho u+\frac{\rho\beta_{2}(\lambda-\theta_{2})^{3}}%
{2}\left\{  \frac{1}{(\lambda-\theta_{2}-u)^{2}}-\frac{1}{(\lambda-\theta
_{2})^{2}}\right\}  .
\]
By Lemma \ref{LemmaLambda(a)}, $\Lambda^{\theta,\beta,1/2}(u):[0,\lambda
-\theta_{2})\rightarrow\mathbb{R}$ attains its minimum at
\[
u_{1/2}^{m}(\theta_{2},\beta_{2})=\left(  \lambda-\theta_{2}\right)  \left(
1-\beta_{2}^{1/3}\right)
\]
and equation \eqref{Equ_Beta_m} reads
\begin{equation}
\Lambda^{\theta_{2},\beta_{2},1/2}(u_{1/2}^{m}(\theta_{2},\beta_{2}))=\frac
{1}{2}-\rho(\lambda-\theta_{2})+\frac{3}{2}\rho(\lambda-\theta_{2})\beta
_{2}^{1/3}-\frac{1}{2}\rho(\lambda-\theta_{2})\beta_{2}=0.
\label{Equ_Beta_m_Exp}%
\end{equation}
According to Lemma \ref{LemmaLambda(a)}, if $\theta_{2}>\lambda-\frac{1}%
{2\rho}$ then $\nexists\beta_{2}\in(0,1)$ such that $(\theta_{2},\beta_{2}%
)\in\mathcal{D}_{b}(\frac{1}{2}).$ If $\theta_{2}<\lambda-\frac{1}{2\rho}$
then there exists $\beta_{m}\in(0,1)$ such that $(\theta_{2},\beta_{2}%
)\in\mathcal{D}_{b}(\frac{1}{2})$ for all $\beta_{2}\in(0,\beta_{m})$ and
$\beta_{m}$ is the unique solution of equation $\left(  \ref{Equ_Beta_m_Exp}%
\right)  $ lying in $(0,1).$ Making the change of variable $z=\beta^{1/3},$
equation $\left(  \ref{Equ_Beta_m_Exp}\right)  $ is reduced to a cubic
equation and we get
\[
\beta_{m}=\left(  \left(  \frac{2}{\sqrt{a(\lambda,\rho,\theta_{2})^{2}%
-4}-a(\lambda,\rho,\theta_{2})}\right)  ^{1/3}+\left(  \frac{\sqrt
{a(\lambda,\rho,\theta_{2})^{2}-4}-a(\lambda,\rho,\theta_{2})}{2}\right)
^{1/3}\right)  ^{3},
\]
where
\[
a(\lambda,\rho,\theta_{2})\triangleq\frac{2\rho(\lambda-\theta_{2})-1}%
{\rho(\lambda-\theta_{2})}>0.
\]
Finally, if $(\theta_{2},\beta_{2})\in\mathcal{D}_{b}(\frac{1}{2})$ there
exists a unique root $u_{1/2}^{0}(\beta_{2})$ of equation $\Lambda^{\theta
_{2},\beta_{2},1/2}(u)=0$ satisfying $u_{1/2}^{0}(\theta_{2},\beta_{2})\leq
u_{1/2}^{m}(\theta_{2},\beta_{2})$. Making the change of variable
$y=\frac{\lambda-\theta_{2}}{\lambda-\theta_{2}-u},$ we can reduce the
equation $\Lambda^{\theta_{2},\beta_{2},1/2}(u)=0$ to the cubic equation%
\[
P_{3}(y)\triangleq\beta_{2}y^{3}-\left(  a(\lambda,\rho,\theta_{2})+\beta
_{2}\right)  y+2=0,
\]
which can be solved explicitely. Inverting the change of variable,\ we can get
an explicit expression for $u_{1/2}^{0}(\theta_{2},\beta_{2}).$ We refrain to
write this explicit formula because it is too lengthy.

See Figure \ref{FigExpJumps} for some graphical illustrations of this example.
\end{example}

\subsection{Discussion on the risk premium}

The next step is to analyse qualitatively the possible risk premium profiles
that can be obtained using our change of measure. In particular, we are
interested to be able to generate risk profiles with positive values in the
short end of the forward curve and negative values in the long end. In what
follows we shall make use of the Musiela parametrization $\tau=T-t$ and we
will slightly abuse the notation by denoting $R_{Q}^{F}(t,T)$ by $R_{Q}%
^{F}(t,\tau).$ We also fix the parameters of the model under the historical
measure $P,$ i.e., $\alpha$ and $\rho,$ and study the possible sign of
$R_{Q}^{F}(t,\tau)$ in terms of the change of measure parameters, i.e.,
$\bar{\beta}=(\beta_{1},\beta_{2})$ and $\bar{\theta}=(\theta_{1},\theta_{2})$
and the time to maturity $\tau.$

In contrast to the arithmetic model, the present time enters into the risk
premium not only through the stochastic components $X,$ but also through the
stochastic volatility $\sigma^{2}.$ Moreover, in the geometric model, the risk
premium will also depend on the parameters $\theta_{2}$ and $\beta_{2},$ which
change the level and speed of mean reversion for the volatility process. We
are going to study the cases $\bar{\theta}=(0,0),\bar{\beta}=(0,0)$ and the
general case separately. Moreover, in order to graphically illustrate the
discussion we plot the risk premium profiles obtained assuming that the
subordinator $L$ is a compound Poisson process with jump intensity
$c/\lambda>0$ and exponential jump sizes with mean $\lambda.$ That is, $L$
will have the L\'{e}vy measure given in Example \ref{Example_Subordinators}.
We shall measure the time to maturity $\tau$ in days and plot $R_{Q}%
^{F}(t,\tau)$ for different maturity periods. We fix the parameters of the
model under the historical measure $P$ using the same values as in the
arithmetic case, i.e.,
\[
\alpha=0.127,\rho=1.11,c=0.4,\lambda=2.
\]
Finally, in the sequel, we are going to suppose that we are under the
assumptions of Theorem \ref{TheoRicattiParticularQGeneral}, i.e., the values
$\theta_{2},\beta_{2}$ are such that $\left(  \theta_{2},\beta_{2}\right)
\in\mathcal{D}_{b}(1/2)$ and $\Psi_{0}^{\bar{\theta},\bar{\beta}},\Psi
_{1}^{\bar{\theta},\bar{\beta}}$ and $\Psi_{2}^{\bar{\theta},\bar{\beta}}$ are
globally defined and the exponential affine formula $\left(
\ref{Equ_Exp_Affine_Formula}\right)  $ holds.

The following lemma will help us in the discussion to follow.

\begin{lemma}
\label{LemmaSignRP} The sign of the risk premium $R_{Q}^{F}(t,\tau)$ is the
same as the sign of the function
\[
\Sigma(t,\tau)\triangleq\Psi_{0}^{\bar{\theta},\bar{\beta}}(\tau)-\Psi
_{0}^{0,0}(\tau)+(\Psi_{1}^{\bar{\theta},\bar{\beta}}(\tau)-\Psi_{1}%
^{0,0}(\tau))\sigma^{2}(t)+(\Psi_{2}^{\bar{\theta},\bar{\beta}}(\tau)-\Psi
_{2}^{0,0}(\tau))X(t).
\]
Moreover,%
\begin{align}
\lim_{\tau\rightarrow\infty}\Sigma(t,\tau)  &  =\frac{\theta_{1}}%
{\alpha(1-\beta_{1})}+\int_{0}^{\infty}\int_{0}^{1}\kappa_{L}^{\prime}\left(
\lambda\Psi_{1}^{\bar{\theta},\bar{\beta}}(s)+\theta_{2}\right)  d\lambda
\Psi_{1}^{\bar{\theta},\bar{\beta}}(s)ds\label{EquLimInfSigma}\\
&  \qquad-\int_{0}^{\infty}\int_{0}^{1}\kappa_{L}^{\prime}\left(  \lambda
e^{-\rho s}\frac{1-e^{-(2\alpha-\rho)s}}{2(2\alpha-\rho)}\right)  d\lambda
e^{-\rho s}\frac{1-e^{-(2\alpha-\rho)s}}{2(2\alpha-\rho)}ds,\nonumber
\end{align}
and%
\begin{equation}
\lim_{\tau\rightarrow0}\frac{d}{d\tau}\Sigma(t,\tau)=\theta_{1}+\alpha
\beta_{1}X(t). \label{EquLim0DSigma}%
\end{equation}

\end{lemma}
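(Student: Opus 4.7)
The plan is to peel off the three claims one by one, all of them reducing to the structure of the Riccati system \eqref{EquRiccatiODEGeneral} together with the asymptotics already established in Theorem~\ref{TheoRicattiParticularQGeneral}.

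For the first claim, I would start from the expression \eqref{EquRiskPremium} for $R_{Q}^{F}(t,T)$ and observe that the constants $e^{-\alpha(T-t)}$, $e^{-\rho(T-t)}\frac{1-e^{-(2\alpha-\rho)(T-t)}}{2(2\alpha-\rho)}$ and $\int_{0}^{T-t}\kappa_{L}(\cdot)\,ds$ appearing there are exactly $\Psi_{2}^{0,0}(\tau)$, $\Psi_{1}^{0,0}(\tau)$ and $\Psi_{0}^{0,0}(\tau)$: this is because setting $\bar{\theta}=(0,0)$ and $\bar{\beta}=(0,0)$ in \eqref{EquRiccatiODEGeneral} and using Remark~\ref{RemarkPropertiesSystemODEsQGeneral} yields precisely these formulas (here $\kappa_{L}(0)=0$ is used). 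Hence the exponent in \eqref{EquRiskPremium} equals $\Sigma(t,\tau)$, and since $\mathbb{E}_{P}[S(T)|\mathcal{F}_{t}]>0$ a.s., the sign of $R_{Q}^{F}(t,\tau)$ agrees with that of $e^{\Sigma(t,\tau)}-1$, which in turn agrees with the sign of $\Sigma(t,\tau)$.

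For the asymptotic statement \eqref{EquLimInfSigma}, I would invoke Theorem~\ref{TheoRicattiParticularQGeneral} to obtain $\Psi_{1}^{\bar{\theta},\bar{\beta}}(\tau)\to 0$, $\Psi_{2}^{\bar{\theta},\bar{\beta}}(\tau)\to 0$ together with the explicit limit for $\Psi_{0}^{\bar{\theta},\bar{\beta}}(\tau)$; the same result applied with $\bar{\theta}=\bar{\beta}=0$ gives $\Psi_{0}^{0,0}(\tau)\to\int_{0}^{\infty}\kappa_{L}(\Psi_{1}^{0,0}(s))\,ds$ and $\Psi_{1}^{0,0}(\tau),\Psi_{2}^{0,0}(\tau)\to 0$. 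Subtracting and applying the fundamental theorem of calculus in the form $\kappa_{L}(a+b)-\kappa_{L}(a)=\int_{0}^{1}\kappa_{L}^{\prime}(a+\lambda b)\,d\lambda\,b$ to the two integrands (with $a=\theta_{2},b=\Psi_{1}^{\bar{\theta},\bar{\beta}}(s)$ in one case and $a=0,b=\Psi_{1}^{0,0}(s)$ in the other) yields \eqref{EquLimInfSigma}. The only point requiring a brief justification is that both resulting integrals converge absolutely; this follows from the exponential decay of $\Psi_{1}^{\bar{\theta},\bar{\beta}}(s)$ and $\Psi_{1}^{0,0}(s)$ established in Theorem~\ref{TheoRicattiParticularQGeneral}, together with the local boundedness of $\kappa_{L}^{\prime}$ on a neighbourhood of $\theta_{2}$ provided by Remark~\ref{Remark_D_Cumulants} and the fact that $\Psi_{1}^{\bar{\theta},\bar{\beta}}(s)\le u_{1/2}^{0}(\theta_{2},\beta_{2})<\Theta_{L}-\theta_{2}$.

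For the short-end derivative \eqref{EquLim0DSigma} I would simply differentiate $\Sigma(t,\tau)$ term by term and read off the values at $\tau=0$ directly from the Riccati system \eqref{EquRiccatiODEGeneral}, using the initial conditions $\Psi_{1}^{\bar{\theta},\bar{\beta}}(0)=\Psi_{1}^{0,0}(0)=0$ and $\Psi_{2}^{\bar{\theta},\bar{\beta}}(0)=\Psi_{2}^{0,0}(0)=1$. One finds $\tfrac{d}{d\tau}\Psi_{0}^{\bar{\theta},\bar{\beta}}(0)-\tfrac{d}{d\tau}\Psi_{0}^{0,0}(0)=\theta_{1}$, $\tfrac{d}{d\tau}\Psi_{1}^{\bar{\theta},\bar{\beta}}(0)-\tfrac{d}{d\tau}\Psi_{1}^{0,0}(0)=\tfrac{1}{2}-\tfrac{1}{2}=0$ (the jump term vanishes because $\kappa_{L}^{\prime}(\theta_{2})-\kappa_{L}^{\prime}(\theta_{2})=0$), and $\tfrac{d}{d\tau}\Psi_{2}^{\bar{\theta},\bar{\beta}}(0)-\tfrac{d}{d\tau}\Psi_{2}^{0,0}(0)=-\alpha(1-\beta_{1})-(-\alpha)=\alpha\beta_{1}$. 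Summing gives $\theta_{1}+\alpha\beta_{1}X(t)$, as claimed. No serious obstacle arises: the bookkeeping for the sign in step~1 and the dominated-convergence justification needed to pass to the limit under the integrals in step~2 are the only points where care is required.
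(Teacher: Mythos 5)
Your proposal is correct and follows essentially the same route as the paper's proof: identify $\Psi_{i}^{0,0}$ with the explicit quantities in the $P$-forward price so that the exponent in \eqref{EquRiskPremium} is exactly $\Sigma(t,\tau)$, use Theorem~\ref{TheoRicattiParticularQGeneral} plus the identity $\kappa_{L}(a+b)-\kappa_{L}(a)=b\int_{0}^{1}\kappa_{L}^{\prime}(a+\lambda b)\,d\lambda$ for the long-end limit, and evaluate the vector fields $\Lambda_{i}^{\bar{\theta},\bar{\beta}}$ at the initial data $(0,1)$ for the short-end derivative. The only differences are cosmetic: you obtain $\Psi_{i}^{0,0}$ from the Riccati system rather than reading them off Proposition~\ref{Prop_ForwardsGeometric}, and you make explicit the integrability justification that the paper delegates to the proof of Theorem~\ref{TheoRicattiParticularQGeneral}.
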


\begin{proof}
That the sign of $R_{Q}^{F}(t,\tau)$ is the same as the sign of $\Sigma
(t,\tau)$ is obvious from equation \eqref{EquRiskPremium} in
Theorem~\ref{TheoRicattiAbstractQGeneral}. From the expression for
$F_{P}(t,T)$ in Proposition~\ref{Prop_ForwardsGeometric}, we can deduce that
\begin{align*}
\Psi_{0}^{0,0}(\tau)  &  =\int_{0}^{\tau}\kappa_{L}\left(  e^{-\rho s}%
\frac{1-e^{-(2\alpha-\rho)s}}{2(2\alpha-\rho)}\right)  ds\\
&  =\int_{0}^{\tau}\int_{0}^{1}\kappa_{L}^{\prime}\left(  \lambda e^{-\rho
s}\frac{1-e^{-(2\alpha-\rho)s}}{2(2\alpha-\rho)}\right)  d\lambda e^{-\rho
s}\frac{1-e^{-(2\alpha-\rho)s}}{2(2\alpha-\rho)}ds,\\
\Psi_{1}^{0,0}(\tau)  &  =e^{-\rho\tau}\frac{1-e^{-(2\alpha-\rho)\tau}%
}{2(2\alpha-\rho)},\\
\Psi_{2}^{0,0}(\tau)  &  =e^{-\alpha\tau}.
\end{align*}
Furthermore, by Theorem~\ref{TheoRicattiParticularQGeneral}, one has that%
\begin{align*}
\lim_{\tau\rightarrow\infty}\Psi_{0}^{\bar{\theta},\bar{\beta}}(\tau)  &
=\frac{\theta_{1}}{\alpha(1-\beta_{1})}+\int_{0}^{\infty}\kappa_{L}(\Psi
_{1}^{\bar{\theta},\bar{\beta}}(s)+\theta_{2})-\kappa_{L}(\theta_{2})ds\\
&  =\frac{\theta_{1}}{\alpha(1-\beta_{1})}+\int_{0}^{\infty}\int_{0}^{1}%
\kappa_{L}^{\prime}(\lambda\Psi_{1}^{\bar{\theta},\bar{\beta}}(s)+\theta
_{2})d\lambda ds,\\
\lim_{\tau\rightarrow\infty}\Psi_{1}^{\bar{\theta},\bar{\beta}}(\tau)  &
=\lim_{\tau\rightarrow\infty}\Psi_{2}^{\bar{\theta},\bar{\beta}}(\tau)=0,
\end{align*}
which yields equation \eqref{EquLimInfSigma}. On the other hand, as $\Psi
_{2}^{\bar{\theta},\bar{\beta}}(\tau)\rightarrow1$ and $\Psi_{1}^{\bar{\theta
},\bar{\beta}}(\tau)\rightarrow0$ when $\tau$ tends to zero, we have%
\begin{align*}
\lim_{\tau\rightarrow0}\frac{d}{d\tau}(\Psi_{0}^{\bar{\theta},\bar{\beta}%
}(\tau)-\Psi_{0}^{0,0}(\tau))  &  =\lim_{\tau\rightarrow0}\{\Lambda_{0}%
^{\bar{\theta},\bar{\beta}}(\Psi_{1}^{\bar{\theta},\bar{\beta}}(\tau),\Psi
_{2}^{\bar{\theta},\bar{\beta}}(\tau))-\Lambda_{0}^{0,0}(\Psi_{1}^{0,0}%
(\tau),\Psi_{2}^{0,0}(\tau))\}\\
&  =\Lambda_{0}^{\bar{\theta},\bar{\beta}}(0,1)-\Lambda_{0}^{0,0}%
(0,1)=\theta_{1},\\
\lim_{\tau\rightarrow0}\frac{d}{d\tau}(\Psi_{1}^{\bar{\theta},\bar{\beta}%
}(\tau)-\Psi_{1}^{0,0}(\tau))  &  =\lim_{\tau\rightarrow0}\{\Lambda_{1}%
^{\bar{\theta},\bar{\beta}}(\Psi_{1}^{\bar{\theta},\bar{\beta}}(\tau),\Psi
_{2}^{\bar{\theta},\bar{\beta}}(\tau))-\Lambda_{1}^{0,0}(\Psi_{1}^{0,0}%
(\tau),\Psi_{2}^{0,0}(\tau))\}\\
&  =\Lambda_{1}^{\bar{\theta},\bar{\beta}}(0,1)-\Lambda_{1}^{0,0}%
(0,1)=1/2-1/2=0,\\
\lim_{\tau\rightarrow0}\frac{d}{d\tau}(\Psi_{2}^{\bar{\theta},\bar{\beta}%
}(\tau)-\Psi_{2}^{0,0}(\tau))  &  =\lim_{\tau\rightarrow0}\{\Lambda_{2}%
^{\bar{\theta},\bar{\beta}}(\Psi_{1}^{\bar{\theta},\bar{\beta}}(\tau),\Psi
_{2}^{\bar{\theta},\bar{\beta}}(\tau))-\Lambda_{2}^{0,0}(\Psi_{1}^{0,0}%
(\tau),\Psi_{2}^{0,0}(\tau))\}\\
&  =\Lambda_{2}^{\bar{\theta},\bar{\beta}}(0,1)-\Lambda_{2}^{0,0}%
(0,1)=-\alpha(1-\beta_{1})+\alpha=\alpha\beta_{1},
\end{align*}
which yields equation \eqref{EquLim0DSigma}. The proof is complete.
\end{proof}

%

\begin{figure}
\subfloat[][$\theta_1=0.024,\theta_2=-50$ ]{\includegraphics
[width=7.2cm]{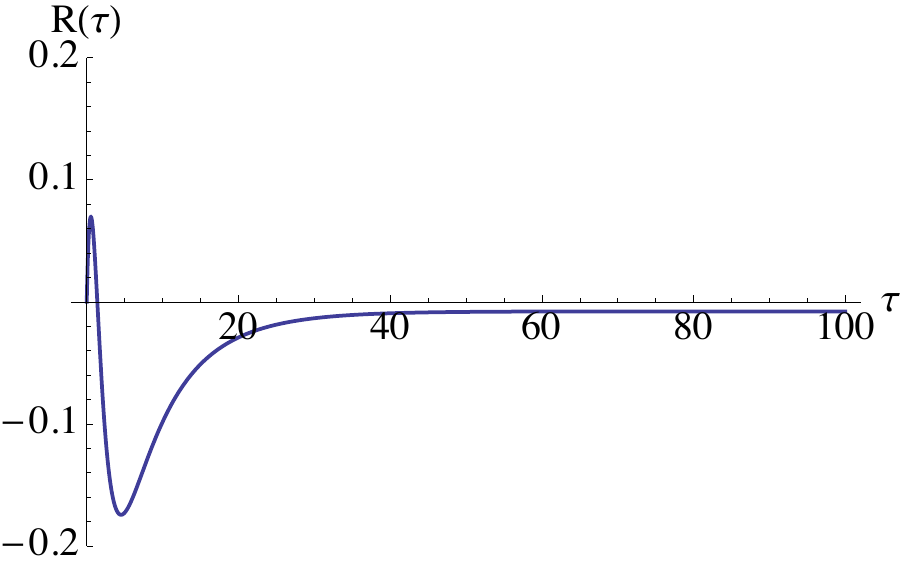}}
\qquad\subfloat[][$\theta_1=-2,\theta_2=-50$]{\includegraphics
[width=7.2cm]{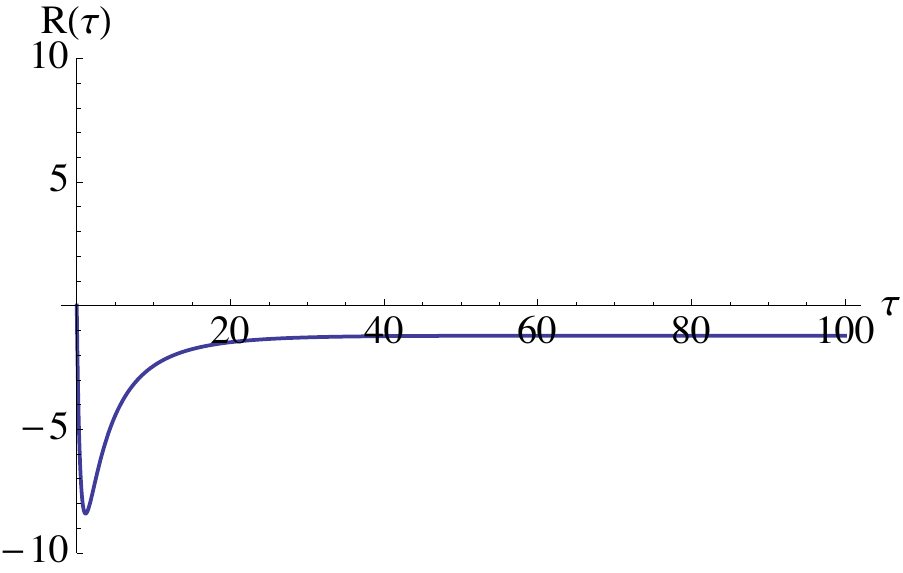}}
\caption
{Risk premium profiles when $L$ is a Compound Poisson process with exponentially distributed jumps.
We take $\rho=1.11,\alpha=0.127,\lambda=2,\c=0.4,X(t)=2.5,\sigma
(t)=0.25$  Esscher case $\beta_1=\beta_2=0.$}
\label{FigGRPEsscher}
\end{figure}%

We now continue with investigating in more detail the different cases of our
measure change.

\begin{itemize}
\item \textbf{Changing the level of mean reversion (Esscher transform)}:
Setting $\bar{\beta}=(0,0),$ the probability measure $Q$ only changes the
levels of mean reversion for the factor $X$ and the volatility process
$\sigma^{2}.$ Although the risk premium is stochastic, its sign is
deterministic. According to Proposition \ref{PropRPEsscher}, we have that the
sign of $R_{Q}^{F}(t,\tau)$ is equal to the sign of
\begin{align*}
\Sigma(t,\tau)  &  =\theta_{1}\frac{1-e^{-\alpha\tau}}{\alpha}+\int_{0}^{\tau
}\kappa_{L}\left(  e^{-\rho s}\frac{1-e^{-(2\alpha-\rho)s}}{2(2\alpha-\rho
)}+\theta_{2}\right)  -\kappa_{L}(\theta_{2})ds\\
&  \qquad-\int_{0}^{\tau}\kappa_{L}\left(  e^{-\rho s}\frac{1-e^{-(2\alpha
-\rho)s}}{2(2\alpha-\rho)}\right)  ds\\
&  =\theta_{1}\frac{1-e^{-\alpha\tau}}{\alpha}\\
&  \qquad+\theta_{2}\int_{0}^{\tau}\int_{0}^{1}\int_{0}^{1}\kappa_{L}%
^{\prime\prime}(\lambda_{2}\theta_{2}+\lambda_{1}e^{-\rho s}\frac
{1-e^{-(2\alpha-\rho)s}}{2(2\alpha-\rho)})d\lambda_{2}d\lambda_{1}e^{-\rho
s}\frac{1-e^{-(2\alpha-\rho)s}}{2(2\alpha-\rho)}ds.
\end{align*}
Moreover, by Lemma~\ref{LemmaSignRP}, equations
\eqref{EquLimInfSigma}-\eqref{EquLim0DSigma}, and the fact that $\Psi
_{1}^{\bar{\theta},0}(\tau)=e^{-\rho\tau}\frac{1-e^{-(2\alpha-\rho)\tau}%
}{2(2\alpha-\rho)}$ we get%
\[
\lim_{\tau\rightarrow\infty}\Sigma(t,\tau)=\frac{\theta_{1}}{\alpha}%
+\theta_{2}\int_{0}^{\infty}\int_{0}^{1}\int_{0}^{1}\kappa_{L}^{\prime\prime
}(\lambda_{2}\theta_{2}+\lambda_{1}\Psi_{1}^{\bar{\theta},0}(s))d\lambda
_{2}d\lambda_{1}\Psi_{1}^{\bar{\theta},0}(s)ds,
\]
and
\[
\lim_{\tau\rightarrow0}\frac{d}{d\tau}\Sigma(t,\tau)=\theta_{1}.
\]
Note that we can write
\begin{align*}
&  \theta_{2}\int_{0}^{\infty}\int_{0}^{1}\int_{0}^{1}\kappa_{L}^{\prime
\prime}(\lambda_{2}\theta_{2}+\lambda_{1}\Psi_{1}^{\bar{\theta},0}%
(s))d\lambda_{2}d\lambda_{1}\Psi_{1}^{\bar{\theta},0}(s)ds\\
&  =\int_{0}^{\infty}\int_{0}^{1}\int_{0}^{1}\theta_{2}\Psi_{1}^{\bar{\theta
},0}(s)\int_{0}^{\infty}z^{2}e^{(\lambda_{2}\theta_{2}+\lambda_{1}\Psi
_{1}^{\bar{\theta},0}(s))z}\ell(dz)d\lambda_{2}d\lambda_{1}ds\\
&  =\int_{0}^{\infty}\int_{0}^{\infty}\left(  e^{\theta_{2}z}-1\right)
(e^{\Psi_{1}^{\bar{\theta},0}(s)z}-1)\ell(dz)ds,
\end{align*}
and that $e^{\Psi_{1}^{\bar{\theta},0}(s)z}-1>0$ for $s,z>0,$ $\Psi_{1}%
^{\bar{\theta},0}(s)$ is strictly positive. Hence, if we choose $0<\theta
_{1}\ $small enough and $\theta_{2}<0$ large enough , we obtain a risk premium
which is positive in the short end of the forward curve, and negative in the
long end. Note that $\theta_{2}$ must be chosen negative.
Figure~\ref{FigGRPEsscher} shows graphically two possible risk premium curves
for given parameters as an illustration. We recall from Benth and
Ortiz-Latorre~\cite{BeO-L13} that for a two-factor mean reverting stochastic
dynamics of the spot price without stochastic volatility, we obtain similar
deterministic risk premium profiles.%

\begin{figure}
\subfloat[][$\beta_1=0.18,\beta_2=0.2$ ]{\includegraphics
[width=7.2cm]{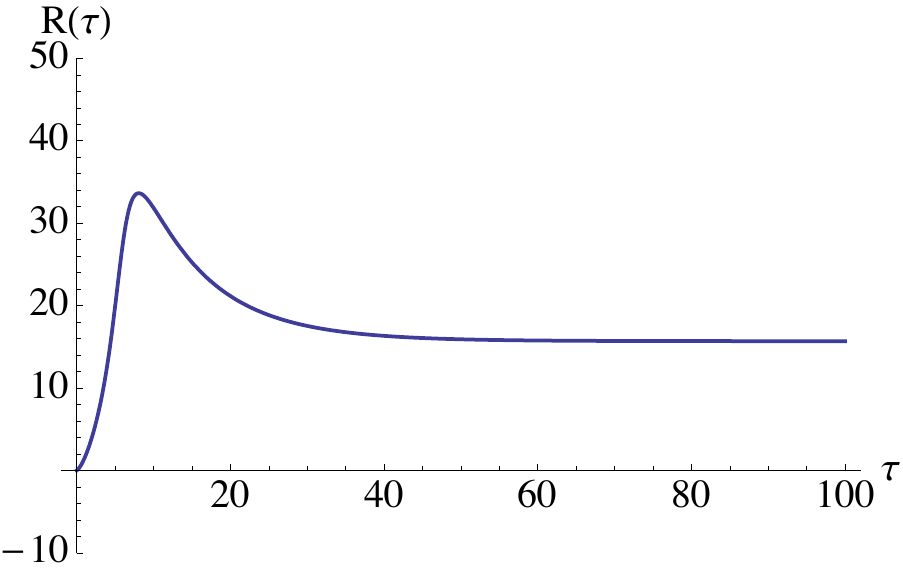}}
\qquad\subfloat[][$\beta_1=0.75,\beta_2=0$]{\includegraphics
[width=7.2cm]{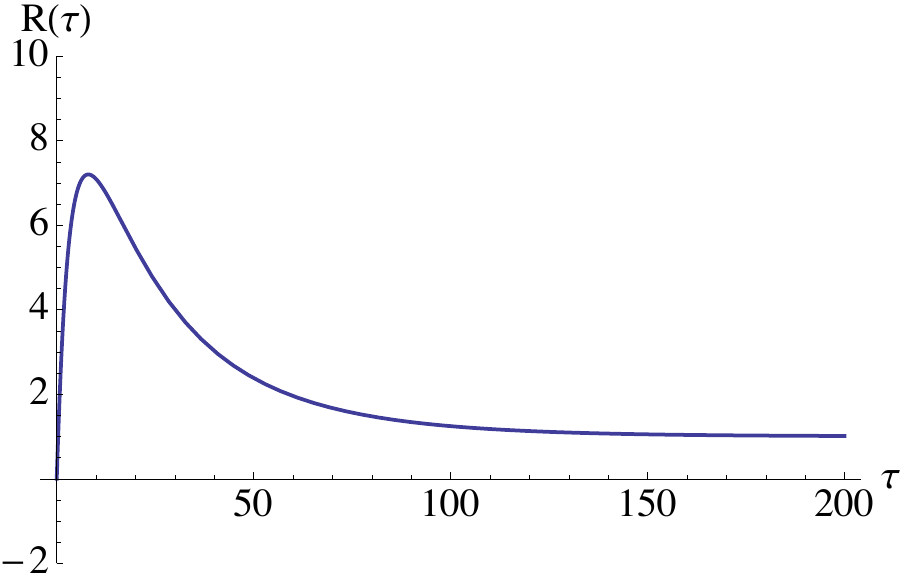}}
\caption
{Risk premium profiles when $L$ is a Compound Poisson process with exponentially distributed jumps.
We take $\rho=1.11,\alpha=0.127,\lambda=2,\c=0.4,X(t)=2.5,\sigma
(t)=0.25.$  Case $\theta_1=\theta_2=0.$}
\label{FigGRPSpeedOnly}
\end{figure}%

\item \textbf{Changing the speed of mean reversion: }Setting $\bar{\theta
}=(0,0),$ the probability measure $Q$ only changes the levels of mean
reversion for the factor $X$ and the volatility process $\sigma^{2}.$ Both the
risk premium and its sign are stochastic. According to Lemma \ref{LemmaSignRP}%
, we have that the sign of $R_{Q}^{F}(t,\tau)$ is equal to the sign of%
\begin{align*}
\Sigma(t,\tau)  &  \triangleq\Psi_{0}^{0,\bar{\beta}}(\tau)-\int_{0}^{\tau
}\int_{0}^{1}\kappa_{L}^{\prime}\left(  \lambda e^{-\rho s}\frac
{1-e^{-(2\alpha-\rho)s}}{2(2\alpha-\rho)}\right)  d\lambda e^{-\rho s}%
\frac{1-e^{-(2\alpha-\rho)s}}{2(2\alpha-\rho)}ds\\
&  \qquad+(\Psi_{1}^{0,\bar{\beta}}(\tau)-e^{-\rho\tau}\frac{1-e^{-(2\alpha
-\rho)\tau}}{2(2\alpha-\rho)})\sigma^{2}(t)+(e^{-\alpha(1-\beta_{1})\tau
}-e^{-\alpha\tau})X(\tau).
\end{align*}
Moreover,%
\begin{align*}
\lim_{\tau\rightarrow\infty}\Sigma(t,\tau)  &  =\int_{0}^{\infty}\int_{0}%
^{1}\kappa_{L}^{\prime}(\lambda\Psi_{1}^{0,\bar{\beta}}(s))d\lambda\Psi
_{1}^{0,\bar{\beta}}(s)ds\\
&  \qquad-\int_{0}^{\infty}\int_{0}^{1}\kappa_{L}^{\prime}\left(  \lambda
e^{-\rho s}\frac{1-e^{-(2\alpha-\rho)s}}{2(2\alpha-\rho)}\right)  d\lambda
e^{-\rho s}\frac{1-e^{-(2\alpha-\rho)s}}{2(2\alpha-\rho)}ds,
\end{align*}
and%
\[
\lim_{\tau\rightarrow0}\frac{d}{d\tau}\Sigma(t,\tau)=\alpha\beta_{1}X(t).
\]
Note that
\[
\Lambda^{0,\bar{\beta}}(u_{1},u_{2})-\Lambda^{0,0}(u_{1},u_{2})=\frac
{\rho\beta_{2}}{\kappa_{L}^{\prime\prime}(0)}\left(  \kappa_{L}^{\prime}%
(u_{1})-\kappa_{L}^{\prime}(0)\right)  \geq0,\quad u_{1}\geq0,
\]
and using a comparison theorem for ODEs, Theorem 6.1, page 31, in Hale
\cite{Ha69}, we get that $\Psi_{1}^{0,\bar{\beta}}(t)\geq e^{-\rho t}%
\frac{1-e^{-(2\alpha-\rho)t}}{2(2\alpha-\rho)},t\geq0.$ Hence, the risk
premium will approach to a non negative value in the long end of the forward
curve. In the short end, it can be positive or negative and stochastically
varying with $X(t).$ In Figure~\ref{FigGRPSpeedOnly} we show two different
risk premium curves, where we in particular notice the different convexity
behaviour in the short end. As all the risk premia curves will be positive in
the long end, it is not very realistic from the practical point of view to
have $\bar{\theta}=(0,0).$%

\begin{figure}
\subfloat[][$\theta_1=0.001,\theta_2=-50.0,\beta_1=0.0,\beta
_2=0.9$ ]{\includegraphics
[width=7.2cm]{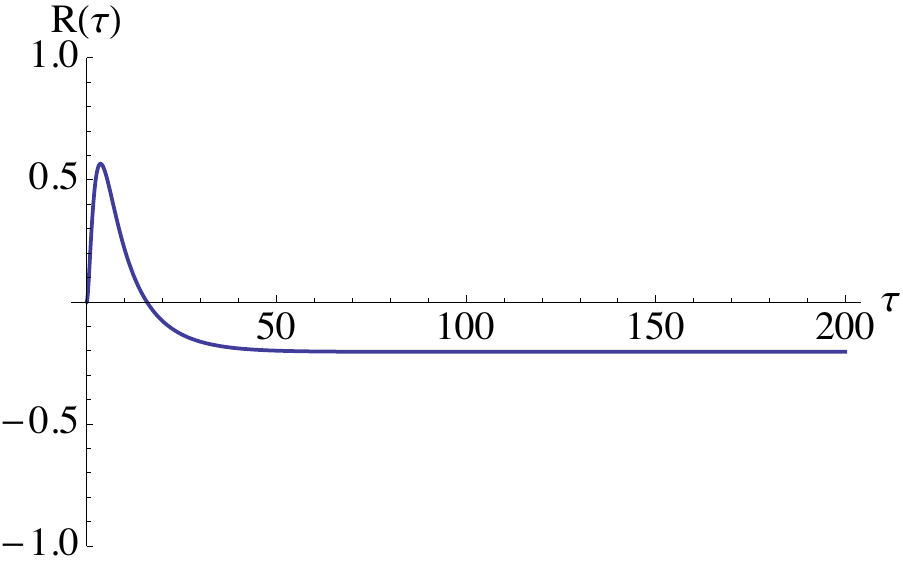}}
\qquad\subfloat[][$\theta_1=-0.1,\theta_2=-50.0,\beta_1=0.8,\beta
_2=0.8$]{\includegraphics
[width=7.2cm]{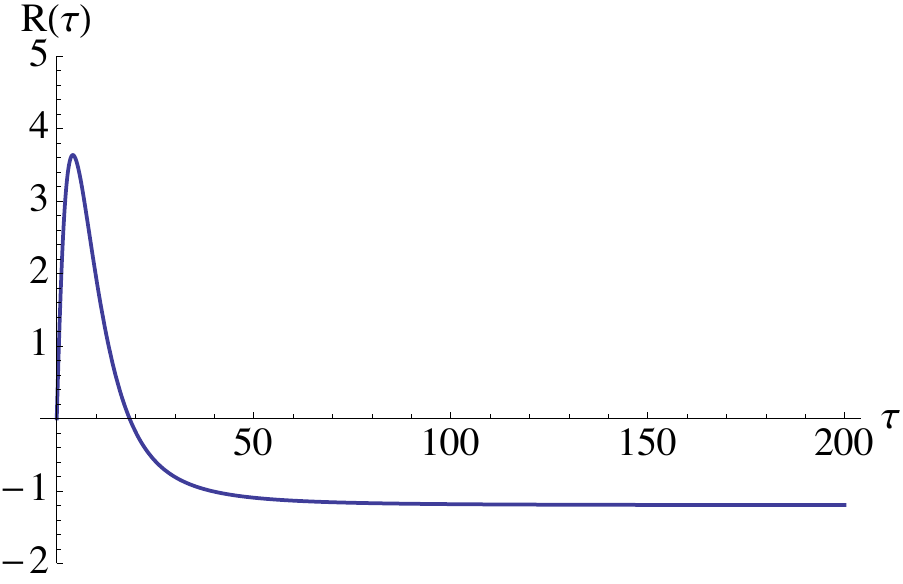}}
\caption
{Risk premium profiles when $L$ is a Compound Poisson process with exponentially distributed jumps.
We take $\rho=1.11,\alpha=0.127,\lambda=2,\c=0.4,X(t)=2.5,\sigma(t)=0.25.$ }
\label{FigGRP}
\end{figure}%

\item \textbf{Changing the level and speed of mean reversion simultaneously}:
In the general case we modify the speed and level of mean reversion for the
factor $X$ and the volatility process $\sigma^{2}$ simultaneously. According
to Lemma \ref{LemmaSignRP}, we have that
\begin{align*}
\lim_{\tau\rightarrow\infty}\Sigma(t,\tau)  &  =\frac{\theta_{1}}%
{\alpha(1-\beta_{1})}+\int_{0}^{\infty}\int_{0}^{1}\kappa_{L}^{\prime}\left(
\lambda\Psi_{1}^{\bar{\theta},\bar{\beta}}(s)+\theta_{2}\right)  d\lambda
\Psi_{1}^{\bar{\theta},\bar{\beta}}(s)ds\\
&  \qquad-\int_{0}^{\infty}\int_{0}^{1}\kappa_{L}^{\prime}\left(  \lambda
e^{-\rho s}\frac{1-e^{-(2\alpha-\rho)s}}{2(2\alpha-\rho)}\right)  d\lambda
e^{-\rho s}\frac{1-e^{-(2\alpha-\rho)s}}{2(2\alpha-\rho)}ds,
\end{align*}
and%
\[
\lim_{\tau\rightarrow0}\frac{d}{d\tau}\Sigma(t,\tau)=\theta_{1}+\alpha
\beta_{1}X(t).
\]
If we choose $\beta_{1}=0,$ then we need to prove that for some $(\theta
_{2},\beta_{2})\in\mathcal{D}_{b}\left(  1/2\right)  $ and $0<\theta_{1}$ we
have%
\begin{align}
&  \int_{0}^{\infty}\int_{0}^{1}\kappa_{L}^{\prime}\left(  \lambda e^{-\rho
s}\frac{1-e^{-(2\alpha-\rho)s}}{2(2\alpha-\rho)}\right)  d\lambda e^{-\rho
s}\frac{1-e^{-(2\alpha-\rho)s}}{2(2\alpha-\rho)}%
ds\label{EquIneqGeneralChanging}\\
&  \qquad>\frac{\theta_{1}}{\alpha}+\int_{0}^{\infty}\int_{0}^{1}\kappa
_{L}^{\prime}\left(  \lambda\Psi_{1}^{\bar{\theta},\bar{\beta}}(s)+\theta
_{2}\right)  d\lambda\Psi_{1}^{\bar{\theta},\bar{\beta}}(s)ds,\nonumber
\end{align}
in order to ensure a risk premium that changes sign from positive to negative.
In fact, inequality \ref{EquIneqGeneralChanging} holds by choosing $\theta
_{1}$ small enough and $\theta_{2}$ a large negative number, because
$\lim_{\theta_{2}\rightarrow-\infty}\kappa_{L}^{\prime}(\theta_{2})=0$. See
Figure \ref{FigGRP} for two cases.
\end{itemize}

\begin{remark}
In contrast to the arithmetic case, one can get a positive risk premium for
short time to maturity that rapidly changes to negative by just changing the
parameters of the Esscher transform, see Figure \ref{FigGRPEsscher}. Similarly
to the arithmetic case, it is not possible to get the sign change by just
modifying the speed of mean reversion of the factors, see Figure
\ref{FigGRPSpeedOnly}.\bigskip
\end{remark}

\end{document}